\definecolor{crimsonglory}{rgb}{0.75, 0.0, 0.2}%{0,0,0}%{0.75, 0.0, 0.2}
\newcommand\Tstrut{\rule{0pt}{2.6ex}} % "top" strut
\newcommand\Bstrut{\rule[-1.1ex]{0pt}{0pt}} % "bottom" strut
\newcommand{\TBstrut}{\Tstrut\Bstrut} % top&bottom struts
\newtheorem{theorem}{Theorem}[section]
\newtheorem{lemma}[theorem]{Lemma}
\newtheorem{corollary}[theorem]{Corollary}
\newtheorem{Definition}[theorem]{Definition}
\newcommand{\backursedit}{Backurs and Indyk}
\newcommand{\landauincremental}{Landau \textit{et al.}}
\newcommand{\Farhi}{Farhi \textit{et al.}}
\newcommand{\opt}{\textsf{opt}}
\newcommand{\nayebiquantum}{Nayebi \textit{et al.}}
\newcommand{\karloffmodel}{Karloff, Suri, and Vassilvitskii}
\newcommand{\lattanzifiltering}{Lattanzi \textit{et al.}}
\newcommand{\algorder}{\tildorder(n^{5/3}\poly(1/\epsilon))}
\newcommand{\thresholdestimation}{\textsf{threshold estimation}}
\newcommand{\lowerdistance}{l}
\newcommand{\upperdistance}{u}
\newcommand{\metric}{\mathcal{M}}
\newcommand{\errorvalue}{O(1/\epsilon)^{O(\log 1/\epsilon)}}
\newcommand{\tildorder}{\widetilde O}
\newcommand{\oracle}{\mathcal{O}}
\newcommand{\oraclep}{\mathcal{Q}}
\newcommand{\parity}{\mathsf{par}}
\newcommand{\ii}{\mathsf{I}}
\newcommand{\poly}{\mathsf{poly}}
\newcommand{\cor}{\mathsf{Cor}}
\newcommand{\metricspace}{\langle\metric, d\rangle}
\newcommand{\polylog}{\mathsf{polylog}}
\newcommand{\degreethreshold}[1]{{n_0}^{\tau_{#1}}}
\newcommand{\degreet}{\tau}
\newcommand{\edit}{\mathsf{edit}}
\newcommand{\lcs}{\mathsf{lcs}}
\newcommand{\dtw}{\mathsf{dtw}}
\newcommand{\fre}{\mathsf{fre}}
\newcommand{\errormetric}{\mathsf{e_{m}}}
\newcommand{\erroredit}{\mathsf{e_{e}}}
\newcommand{\timeedit}{\mathsf{t_{e}}}
\newcommand{\alg}{\mathcal{A}}
\newcommand{\mn}{\mathsf{min}}
\newcommand*\samethanks[1][\value{footnote}]{\footnotemark[#1]}
\newcounter{proccnt}
\newcommand{\konote}[1]{}
\title{Approximating Edit Distance in Truly Subquadratic Time: Quantum and MapReduce\footnote{Portions of this research were completed while the first, third, and fifth authors were visitors at the Simons Institute for the Theory of Computing.}
\footnote{A preliminary version of this paper was presented at SODA 2018.}
}
\author{
	Mahdi Boroujeni \thanks{Sharif University of Technology. Email: \texttt{safarnejad@ce.sharif.edu, ghodsi@sharif.edu}}
	\and Soheil Ehsani \thanks{University of Maryland. Email: \texttt{\{ehsani,hajiagha\}@cs.umd.edu, sseddigh@umd.edu}}
	\samethanks[6]
	\and Mohammad Ghodsi \samethanks[3]
	\thanks{Institute for Research in Fundamental Sciences (IPM).}
	\and MohammadTaghi HajiAghayi \samethanks[4]
	\thanks{Supported in part by NSF CAREER award CCF-1053605,  NSF BIGDATA grant IIS-1546108, NSF AF:Medium grant CCF-1161365,   
		DARPA GRAPHS/AFOSR grant FA9550-12-1-0423, and another DARPA SIMPLEX grant.}
	\and Saeed Seddighin \samethanks[4] \samethanks[6]
}
\begin{document}

\date{}

\maketitle

\begin{abstract}
The \textit{edit distance} between two strings is defined as the smallest number of \textit{insertions}, \textit{deletions}, and \textit{substitutions} that need to be made to transform one of the strings to another one. Approximating edit distance in subquadratic time is ``one of the biggest unsolved problems in the field of combinatorial pattern matching"~\cite{indyk2001algorithmic}. Our main result is a quantum constant approximation algorithm for computing the edit distance in truly subquadratic time. More precisely, we give an $O(n^{1.858})$ quantum algorithm that approximates the edit distance within a factor of $7$. We further extend this result to an $O(n^{1.781})$ quantum algorithm that approximates the edit distance within a larger constant factor.

Our solutions are based on a framework for approximating edit distance in parallel settings. This framework requires as black box an algorithm that computes the distances of several smaller strings all at once. For a quantum algorithm, we reduce the black box to \textit{metric estimation} and provide efficient algorithms for approximating it. We further show that this framework enables us to approximate edit distance in distributed settings. To this end, we provide a MapReduce algorithm to approximate edit distance within a factor of $3$, with sublinearly many machines and sublinear memory. Also, our algorithm runs in a logarithmic number of rounds.

\end{abstract}
\section{Introduction}\label{introduction}
The \textit{edit distance} (a.k.a \textit{Levenshtein distance}) is a well-known metric to measure the similarity of two strings. This metric has been extensively used in several fields such as computational biology, natural language processing, and information theory. The algorithmic aspect of the problem is even more fundamental; the problem of computing the edit distance is a textbook example for dynamic programming.

The edit distance between two strings is defined as the smallest number of \textit{insertions}, \textit{deletions}, and \textit{substitutions} that need to be made on one of the strings to transform it to another one. For two strings $s_1$ and $s_2$ with $n$ characters in total ($|s_1| + |s_2| = n$), a classic dynamic program finds the edit distance between them in time $O(n^2)$. The idea is to define auxiliary variables $d_{i,j}$'s which denote the edit distance between the first $i$ characters of $s_1$ and the first $j$ characters of $s_2$. Next, we iteratively determine the values of the auxiliary variables based on the following formula
\begin{equation*}
\small
d_{i,j} =
\begin{cases}
d_{i-1,j-1}, & \text{if }s_1[i] = s_2[j] \\
1+\min\{d_{i-1,j-1},d_{i,j-1},d_{i-1,j}\} & \text{if }s_1[i] \neq s_2[j].
\end{cases}
\end{equation*}
Despite the simplicity of the above solution, it has remained one of the most efficient algorithms from a theoretical standpoint to this day. Since the 1970s, several researchers aimed to improve the quadratic running time of the problem, however, thus far, the best-known algorithm runs in time $O(n^2/\log^2 n)$~\cite{masek1980faster}. The shortcoming of these studies is partly addressed by the work of \backursedit~\cite{backurs2015edit} wherein the authors show a truly subquadratic time algorithm is impossible to achieve unless a widely believed conjecture (\textsf{SETH}\footnote{The \textit{strong exponential time hypothesis} states that no algorithm can solve the satisfiability problem in time $2^{n(1-\epsilon)}$.}) fails.  

Unfortunately, the quadratic dependency of the running time on the size of the input makes it impossible to use such algorithms for large inputs in practice. For example, a human genome consists of almost three billion base pairs that need to be incorporated in similarity measurements. Therefore, several studies were focused on improving the running time of the algorithm by considering approximation solutions. A trivial $\sqrt{n}$ approximation algorithm follows from an $O(n + d^2)$ exact algorithm of \landauincremental~\cite{landau1998incremental} where $d$ is the edit distance between the two strings. Subsequent research improved this to $n^{3/7}$~\cite{bar2004approximating}, to $n^{1/3+o(1)}$~\cite{batu2006oblivious}, to $2^{\tildorder(\sqrt{\log n})}$~\cite{andoni2012approximating}, and the latest of which provides a polylogarithmic approximation guarantee in subquadratic time~\cite{andoni2010polylogarithmic}. Note that although the running times of these algorithms are almost linear, even if one favors the approximation factor over the running time, slowing down the algorithms to barely subquadratic doesn't yield an asymptotically better approximation guarantee. Despite persistent studies, finding a subquadratic algorithm with a constant approximation factor which is the ``holy grail" here is still open (see Section 6 of Indyk~\cite{indyk2001algorithmic}).

Quantum computation provides a strong framework to substantially improve the running time of many algorithmic problems. This includes a long list of problems from algebraic computational problems, to measuring graph properties, to string matching, to searching, to optimizing programs, etc.~\cite{ramesh2003string,farhi1999invariant,krovi2015quantum,jeffery2013nested,belovs2012learning,beals1997quantum,le2014improved,shor1994algorithms}. However, quantum  techniques can only be applied to limited structures. For instance, many classic problems such as sorting or even counting the number of 1's in a 0-1 array are still as time-consuming even with quantum computation. Indeed existing quantum techniques offer no immediate improvement to the running time of edit distance, neither to many classic DP-type problems such as finding the $\lcs$\enspace (longest common subsequence), $\dtw$\enspace (dynamic time wrapping) of two strings or determining the Fr\'{e}chet distance between two polylines. To the best of our knowledge, no exact or approximation algorithm is known for edit distance in subquadratic time via quantum computation. 

In this work, we provide a framework to approximate the edit distance between two strings within a constant factor. This framework requires as black box a procedure that takes several smaller strings as input and approximates their distances all at once. For quantum computers, we reduce this black box to finding the distances of a metric, namely \textit{metric estimation}. In this problem, we are given a metric space where any distance is available by a query from a distance oracle.
 We show that metric estimation cannot be approximated within a factor better than $3$ with a subquadratic number of quantum queries. On the contrary, we provide positive results for approximation factor $3$ and also larger constant factors. We show our bounds are tight up to constant factors by proving lower bounds on the query complexity of metric estimation.
Our metric estimation quantum algorithms are general tools and may find their applications in other distance-related problems as well.
Combining this black box with our framework yields subquadratic quantum algorithms for approximation edit distance within a constant factor.
Our work is similar in spirit to the work of Le Gall \cite{le2014improved} and D\"{u}rr \textit{et al.} \cite{doi:10.1137/050644719} where combinatorial techniques are used to obtain efficient quantum algorithms. We believe that our work opens an avenue to further investigation of edit distance in quantum setting and perhaps achieving near linear time quantum algorithm for edit distance.

As another application of our framework, we design a MapReduce algorithm for approximating edit distance within an approximation factor of $3$.
MapReduce is one of the most recent developments in the area of parallel computing. It has the benefits of both sequential and parallel computation. Many tech companies such as Google, Facebook, Amazon, and Yahoo designed MapReduce frameworks and have used them to implement fast algorithms to analyze their data.
In this paper, we focus on the well-known MapReduce theoretical framework initiated by Karloff, Suri, and Vassilvitskii \cite{karloff2010model} (and later further refined by Andoni, Nikolov, Onak, and Yaroslavtsev \cite{Andonimapreducestoc}).
Designing MapReduce algorithms for simulating sequential dynamic programs for important problems was recently initiated by Im, Moseley, and Sun \cite{mapreducedyn2017}.
They study DP-type problems with two key properties, monotonicity and decomposability. Their framework does not apply here since edit distance is neither monotone nor decomposable.
Our algorithm runs in a logarithmic number of rounds with a sublinear number of machines and sublinear memory of each machine. Moreover, the running time of each machine is subquadratic.

To the best of our knowledge, both our quantum algorithms and our MapReduce algorithm are first to improve upon the trivial $O(n^2)$ classic algorithm beyond subpolynomial factors for approximating edit distance\footnote{within a constant factor} in these settings. We believe that our framework can be useful to better understand edit distance in other models, such as the streaming and the semi-streaming models.

The closest works to our results are ~\cite{andoni2012approximating} and ~\cite{andoni2008smoothed}. In particular, they use a space embedding approach from \cite{Ostrovsky:2005:LDE:1060590.1060623} with dividing the string into blocks of smaller size, but our main observations and structural lemmas are completely different from their approach. We note that to the best of our knowledge, the ideas of our framework are novel and have not been used in any of the previous work. In ~\cite{apostolico1990efficient}, the authors give a parallel algorithm for determining the edit distance between two strings. Their algorithm uses $\tildorder(n^2)$ processors and a shared memory of  $O(n^2)$. Note that their algorithm cannot be used in MapReduce models, since the number of machines and memory of each machine in a MapReduce algorithm should be sublinear, and the number of rounds should be $O(\polylog(n))$ \cite{karloff2010model}. The major advantage of our MapReduce algorithm over the algorithm of ~\cite{apostolico1990efficient} is that both the number of machines and the memory of each machine is sublinear in our algorithm. Moreover, the number of rounds in our algorithm is $O(\log(n))$.

A similar approach is taken in the work of \nayebiquantum~\cite{nayebi2014quantum} wherein the authors study the computational complexity of APSP on quantum computers. They give an APSP algorithm for graph instances with small integer weights. They also give a fine-grained reduction from APSP to negative triangle via quantum computing.

\section{Our Results and Techniques} \label{section:ourResults}
In this section, we explain the ideas and techniques of our framework and show how we obtain a subquadratic algorithm for approximating the edit distance on quantum computers. The basis of our MapReduce algorithm is similar to what we explain here, though some details are modified to run the algorithm in a logarithmic number of MapReduce rounds. More details about the MapReduce algorithm can be found in Section \ref{mapreduce}.  Our quantum algorithm is based on several known techniques of quantum computing, algorithm design, and approximation algorithms. On the quantum side, we take advantage of \textit{Grover's search}~\cite{grover1996fast} and \textit{amplitude amplification}~\cite{brassard2002quantum} to improve the lookup time on an unordered set. On the algorithmic side, we benefit from classic algorithmic tools such as dynamic programming techniques, divide and conquer, and randomized techniques. In addition to this, we leverage \textit{the bootstrapping technique} to further improve the running time of our algorithm, by allowing the approximation guarantee to grow to larger constant numbers.

Recall that, the edit distance between two strings is defined as the smallest number of insertions, deletions, and substitutions, that one needs to perform on one of the strings to obtain the other one. For two strings $s_1$ and $s_2$, we denote their edit distance by $\edit(s_1,s_2)$. By definition, edit distance meets all of 
the \textit{identity of indiscernibles\footnote{$\edit(s_1,s_2) = 0 \Leftrightarrow s_1 = s_2$.}}, \textit{symmetry\footnote{$\edit(s_1,s_2) = \edit(s_2,s_1)$.}}, and \textit{triangle inequality\footnote{$\edit(s_1,s_2) + \edit(s_2,s_3) \geq \edit(s_1,s_3)$.}} properties, thus for any set of strings $\metric$, $\langle\metric,\edit\rangle$ forms a metric space\footnote{A set of points $\metric$ and a distance function $d$ form a metric space $\langle \metric, d\rangle$, if $d$ meets all of the aforementioned properties.}. Following this intuition, our algorithm is closely related to the study of the metric spaces.

In the following, we outline our algorithm in three steps. First, we define an auxiliary problem, namely \textit{metric estimation} and present efficient approximation algorithms for this problem accompanied by tight bounds on its quantum complexity. Roughly speaking, in this problem, we are given a metric space with $n$ points and oracle access to the distances, and the goal is to output an $n \times n$ matrix which is an estimate to the distances between the points. One may think of the oracle as an ordinary computer program, that we then convert to the corresponding quantum code and unitary operator using a quantum compiler \cite{Farhi:1998bz}. We give two approximation algorithms that solve the metric estimation problem with approximation factors $3+\epsilon$ and $\errormetric(\epsilon) = O(1/\epsilon)$ with $\tildorder(n^{5/3}\poly(1/\epsilon))$ and $\tildorder(n^{3/2+\epsilon}\poly(1/\epsilon))$ oracle queries, respectively. Notice that the running times of the algorithms are $O(n^2\poly(1/\epsilon))$, but the query complexities are subquadratic. This allows us to approximate metrics spaces with sublinear points for which answering an oracle query is time-consuming. We emphasize that our metric estimation results are general and can be used for any metric. In the second step, we show that any algorithm that solves the metric estimation problem within an approximation factor $\alpha$ can be used as a black box to obtain a $1+2\alpha+\epsilon$ approximation solution for edit distance. As we show, the reduction takes a subquadratic time and thus using our $3+\epsilon$ approximation algorithm for metric estimation, we obtain a $7+\epsilon$ approximation algorithm for edit distance. Finally, in Section \ref{bootstraping} we devise a bootstrapping technique to further improve the running time of the algorithm by taking a hit on the approximation guarantee. In what follows, we explain each of the steps in more details. Before we delve into the algorithm, we would like to note some comments.
\begin{itemize}	
	\item The only step of the algorithm where quantum computation plays a role is the first step where we discuss metric estimation. Nevertheless, everywhere we use the term algorithm, we mean a quantum algorithm unless otherwise is stated.
	
	\item In this section, we explain the abstract ideas and steps of the algorithm. Therefore, sometimes we do not provide formal proofs for some of the arguments that we make.
	The reader can find a detailed discussion of all statements and proofs in Sections \ref{metric} and \ref{editdistance}. 
	\item Everywhere we use the word \textit{operation}, we refer to insertion, deletion, or substitution.
\end{itemize}

\subsection{Metric Estimation}\label{contribution:metric}
As mentioned earlier, in the metric estimation problem, we are given a metric space $\langle \metric, d\rangle$ and an oracle $\oracle$ that reports $d(x,y)$ for two points $x$ and $y$ in an invocation. The goal of the problem is to estimate the distance matrix of the points with as few oracle calls as possible. Due to the impossibility results for exact or even solutions with small approximation factors for this problem (see the rest for more details), our aim is to find an approximation solution.
\begin{center}
	\noindent\framebox{\begin{minipage}{6.3in}
			\textsf{Metric Estimation}\\[0.25cm]
			\textsf{Input}: a metric space $\langle \metric,d \rangle$ with $n$ points where $\metric = \{p_1, p_2, \ldots, p_n\}$ and an oracle function $\oracle$ to access the distances. \\[0.25cm]
			\textsf{Guarantee}: all the distances are integer numbers in the interval $[l,u]$. We assume $u$ is $O(\poly(n))$.   \\[0.25cm]
			\textsf{An output (with approximation factor $\alpha > 1$)}: an $n \times n$ matrix $A$, where $d(p_i,p_j) \leq A[i][j] \leq \alpha d(p_i,p_j)$ holds for every $1 \leq i,j \leq n$.
		\end{minipage}}
\end{center}

Before we state the main ideas and results, we briefly explain two key tools that we borrow from previous work and use as black boxes in our algorithms. The first tool is the seminal work of Grover~\cite{grover1996fast} for making fast searches in an unordered database. Suppose we are given a function $f:[n] \rightarrow \{0,1\}$, where $[n] = \{1,2,3,\ldots,n\}$, and we wish to list up to $m$ distinct indices for which the value of the function is equal to $1$. We refer to this problem as \textit{element listing}.

\begin{center}
	\noindent\framebox{\begin{minipage}{6.3in}
			\textsf{Element Listing}\\[0.25cm]
			\textsf{Input}: integers $n$ and $0 \leq m \leq n$, and access to an oracle that upon receiving an integer $i$, reports the value of $f(i)$. $f$ is defined over $[n]$ and maps each index to either $0$ or $1$.  \\[0.25cm]
			\textsf{Output}: a list of up to $m$ indices for which the value of $f$ is equal to $1$. If the total number of such indices is not more than $m$, the output should contain all of them.
		\end{minipage}}
	\end{center}
The pioneering work of Grover~\cite{grover1996fast} implies that the element listing problem can be solved with only $O(\sqrt{nm})$ oracle calls via quantum computation. We subsequently make use of this algorithm in this section. 

\begin{theorem}[proven in~\cite{boyer1996tight}]\label{grover}
The listing problem can be solved with $O(\sqrt{nm})$ oracle queries via quantum computation.
\end{theorem}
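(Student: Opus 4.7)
The plan is to reduce the listing problem to $m$ applications of Grover-type search, each of which finds one new marked index while we exclude the already-found ones. Concretely, I would maintain a set $S \subseteq [n]$ of indices discovered so far, and at each iteration run an instance of the Boyer--Brassard--H\o yer--Tapp (BBHT) variant of Grover's algorithm on the modified function $f'(i) = f(i) \land \mathbb{1}[i \notin S]$, adding any newly reported index to $S$. The loop terminates either when $|S| = m$ or when BBHT certifies (with high probability) that no marked indices remain under $f'$. Each query to $f'$ costs one query to $f$ together with an $|S|$-bounded bookkeeping operation, so the $f$-query count equals the total number of Grover queries performed across iterations.

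For the complexity analysis, let $k$ denote the total number of marked indices of $f$. After $i$ successful iterations there are $k-i$ marked indices left under $f'$, and the BBHT guarantee is that a marked index can be located in expected $O\bigl(\sqrt{n/(k-i)}\bigr)$ queries without prior knowledge of $k-i$. When $k \geq m$, summing over $i = 0, 1, \ldots, m-1$ gives
\begin{equation*}
\sum_{i=0}^{m-1} O\!\left(\sqrt{\tfrac{n}{k-i}}\right) \;\leq\; O(\sqrt{n}) \sum_{j=1}^{m} \frac{1}{\sqrt{j}} \;=\; O(\sqrt{nm}),
\end{equation*}
using $\sum_{j=1}^{m} 1/\sqrt{j} = \Theta(\sqrt{m})$. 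When $k < m$ we locate all $k$ marked indices in $O(\sqrt{nk}) = O(\sqrt{nm})$ queries and then spend another $O(\sqrt{n})$ queries on the BBHT termination test that, with constant probability, confirms emptiness; both terms are absorbed into $O(\sqrt{nm})$.

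The main obstacle is controlling failure probabilities across the (up to) $m$ sequential BBHT invocations, since a single failure either returns a spurious index or causes premature termination. I would handle this by boosting each BBHT call via $O(\log m)$ independent repetitions, driving its per-call failure probability below $1/(cm)$ for a suitable constant $c$, and then applying a union bound over iterations. This introduces only a logarithmic overhead, which the classical BBHT analysis shows can in fact be folded into the $O(\sqrt{nm})$ bound by using the expected-time guarantees directly together with a Markov-style truncation; citing~\cite{boyer1996tight} for the refined analysis is the cleanest route. The correctness (each produced index is genuinely marked and all returned indices are distinct) follows immediately from the construction of $f'$ and the fact that BBHT only outputs indices on which its input function evaluates to $1$.
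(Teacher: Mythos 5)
The paper offers no proof of this statement --- it simply cites \cite{boyer1996tight} as the source --- so there is no in-paper argument to compare against. Your proposal is the standard iterated-BBHT derivation (exclude found indices, resum $\sum_{i<m}\sqrt{n/(k-i)} = O(\sqrt{nm})$, pay $O(\sqrt{n})$ for the emptiness certificate), which is precisely the argument that the cited reference establishes, so your reasoning is correct and matches the intended source. The only soft spot is the handling of failure probability: a naive $O(\log m)$ boost per call would yield $O(\sqrt{nm}\log m)$, and your remark that the log factor ``can be folded in'' via the expected-query bound plus Markov truncation is true only if one is content with overall constant success probability (which suffices for the theorem as stated); it is worth being explicit that the $O(\sqrt{nm})$ figure is an expected query count or a constant-probability bound rather than a high-probability one.
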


The second quantum technique that we use in this paper is a tool for
 proving lower bounds on the quantum complexity of the problems. Let $f:[n] \rightarrow \{-1,1\}$ be a function defined over the numbers $1,2,\ldots,n$ 
 that maps each index to either $-1$ or $1$ and $\parity(f) = \prod_{i \in [n]}f(i)$.
  In the parity problem, we are given oracle access to $f$ and the goal is to determine $\parity(f)$ with as few oracle calls as possible.

\begin{center}
	\noindent\framebox{\begin{minipage}{6.3in}
			\textsf{Parity}\\[0.25cm]
			\textsf{Input}: an integer $n$, and access to an oracle $\oracle$ that upon receiving an integer $i$ reports the value of $f(i)$. $f$ is defined over $[n]$ and maps each index to either $-1$ or $1$.  \\[0.25cm]
			\textsf{Output}: $\parity(f) = \prod_{i \in [n]}f(i)$.
		\end{minipage}}
	\end{center}
Of course, if the numbers of $-1$'s or $1$'s are substantially smaller than $n$ ($o(n)$), one can use Grover's search to list all of such indices and compute the parity with fewer than $\Omega(n)$ oracle calls. However, if this is not the case for either $-1$ or $1$, such an approach fails. The seminal work of \Farhi~\cite{Farhi:1998bz}, showed that at least $\Omega(n)$ queries are necessary for solving the parity problem and thus quantum computation offers no speedup in this case.

\begin{theorem}[proven in~\cite{Farhi:1998bz}]
	The parity problem cannot be solved with fewer than $\Omega(n)$ queries with quantum computation.
\end{theorem}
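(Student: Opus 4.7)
The plan is to prove this via the polynomial method, which cleanly captures the information-theoretic obstruction to computing parity. First, I would model the oracle as a unitary $\oracle_f$ whose action on the query register depends on the values $f(1), \ldots, f(n) \in \{-1, +1\}$; concretely, $\oracle_f$ multiplies the amplitude of basis state $|i\rangle$ by $f(i)$. A $T$-query algorithm consists of arbitrary $f$-independent unitaries $U_0, U_1, \ldots, U_T$ interleaved with the oracle, so the final state's amplitudes are multilinear in $f(1), \ldots, f(n)$. A straightforward induction on $T$ shows that each amplitude is a multilinear polynomial of total degree at most $T$ in these variables, and hence the acceptance probability (the squared norm of a subset of amplitudes) is a real multilinear polynomial $p(f(1), \ldots, f(n))$ of degree at most $2T$.

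Next, I would argue that if the algorithm outputs $\parity(f) = \prod_i f(i)$ correctly with bounded error, then $p$ approximates the $\{0,1\}$-valued indicator $(1 + \prod_i f(i))/2$ pointwise on $\{-1,+1\}^n$ to within, say, $1/3$. The goal is therefore reduced to showing that any real multilinear polynomial that approximates parity on the Boolean hypercube must have degree at least $n$. I would do this via Minsky--Papert symmetrization: averaging $p$ over all permutations of its inputs yields a polynomial $\tilde p$ of the same degree that depends only on the Hamming weight (equivalently on $k = \sum_i f(i)$), so $\tilde p(f) = q(k)$ for some univariate polynomial $q$ of degree at most $2T$.

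Then I would exploit that $\parity$ takes the value $+1$ when $k \in \{-n, -n+4, \ldots\}$ and $-1$ on the interlaced set $\{-n+2, -n+6, \ldots\}$, so $q$ must oscillate above $2/3$ and below $1/3$ on $n+1$ consecutive integers of the grid $\{-n, -n+2, \ldots, n\}$. A standard application of the Markov brothers inequality (or, equivalently, a Chebyshev extremal argument) then forces $\deg(q) \geq n$, yielding $2T \geq n$ and thus $T = \Omega(n)$.

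The main technical obstacle is the last step: establishing the degree lower bound for $q$. I would either invoke the classical approximate-degree lower bound for parity as a black box, or else carry out the Markov-inequality computation directly, noting that $q$ is bounded by $O(1)$ on $[-n,n]$ yet must have derivative of order $\Omega(1)$ between every pair of consecutive grid points. For the \emph{exact} version of the problem, the argument simplifies dramatically: the unique multilinear polynomial representing $\parity$ is $\prod_i f(i)$ itself, which has degree exactly $n$, so $2T \geq n$ is immediate without any approximation machinery.
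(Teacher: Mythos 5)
The paper does not actually prove this statement; it cites it to \Farhi~\cite{Farhi:1998bz}, whose original argument analyzes the evolution of the quantum state directly and shows that $\lceil n/2\rceil$ queries are necessary (and sufficient) to determine parity. Your proposal instead reconstructs the polynomial-method proof of Beals, Buhrman, Cleve, Mosca, and de Wolf, which is a genuinely different and equally standard route: it converts the query-complexity question into a lower bound on the approximate degree of parity. That conversion, the multilinearity of amplitudes after $T$ queries, the degree-$2T$ bound on the acceptance probability, and the Minsky--Papert symmetrization down to a univariate $q$ on the grid $\{-n,-n+2,\ldots,n\}$ are all correct and match the standard treatment. Your closing remark that exact computation forces degree exactly $n$ (since $\prod_i f(i)$ is the unique multilinear representation of parity) is also right and gives the cleanest version of the bound.

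The one step that is off is the final degree lower bound for $q$. You invoke the Markov brothers inequality, saying that $q$ is bounded by $O(1)$ on $[-n,n]$ yet has derivative $\Omega(1)$ somewhere; but the symmetrized polynomial is only known to lie in $[0,1]$ at the $n+1$ grid points, not on the whole continuous interval, so Markov's hypothesis is not available, and even granting it the conclusion $d^{2}\ge\Omega(n)$ only yields $d=\Omega(\sqrt{n})$, which is too weak. Markov/Bernstein-type arguments are the right tool for functions like OR or threshold that change value only once; for parity the correct and in fact simpler argument is root-counting: $q-\tfrac12$ alternates sign across all $n+1$ consecutive grid points, so by the intermediate value theorem it has at least $n$ real zeros, hence $\deg q\ge n$ and $T\ge n/2$. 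If you replace the Markov step with this sign-counting observation (or, as you suggest, just cite the known fact that the approximate degree of parity equals $n$), the proof is complete and sharp.
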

 
Based on the result of \Farhi~\cite{Farhi:1998bz}, we begin with showing an impossibility result. Our first result for metric estimation is a hardness of approximation for factors smaller than $3$ using a subquadratic number of queries. More precisely, in Section \ref{metric}, we show that any quantum algorithm that approximates  metric estimation  within a factor smaller than $3$, needs to make at least $\Omega(n^2)$ oracle queries.

\vspace{0.2cm}
{\noindent \textbf{Theorem} \ref{hardness} [restated]. \textit{Any quantum algorithm for solving the metric estimation problem with an approximation factor smaller than 3 needs to make at least $\Omega(n^2)$ oracle calls.\\}}

The idea is to show a reduction from parity to metric estimation. Suppose we are given an instance $\ii$ of the parity problem. Roughly speaking, we construct an instance $\cor(\ii)$ of the metric estimation and prove that $\cor(\ii)$ has a valid metric as input. Next, we show that any algorithm that approximates metric estimation within a factor smaller than $3$ with $o(n^2)$ queries can be turned into a quantum algorithm for solving parity with $o(n)$ queries which is impossible due to \Farhi~\cite{Farhi:1998bz}.

Despite this hardness of approximation for factors better than $3$, we show the problem is significantly more tractable when we allow the approximation guarantee to be slightly more than $3$. In Section \ref{metric}, we show that for any $\epsilon > 0$, a $3+\epsilon$ approximation of metric estimation is possible via $\tildorder(n^{5/3}\poly(1/\epsilon))$ queries.

\vspace{0.2cm}
{\noindent \textbf{Theorem} \ref{thm:metric1} [restated]. \textit{For any $\epsilon > 0$, there exists a quantum algorithm that solves metric estimation with $\tildorder(n^{5/3}\poly(1/\epsilon))$ queries within an approximation factor of $3+\epsilon$. Moreover, the running time of the algorithm is $\tildorder(n^2\poly(1/\epsilon))$.\\}}

Our first take on the solution is to discretize the problem at the expense of imposing an additional $1 + \epsilon$  factor to our guarantee. Notice that all of the distances of the metric lie in the interval $[l,u]$. Therefore, one can divide the distances into $\log_{1+\epsilon/3} (u/l) = \tildorder(\poly(1/\epsilon))$ disjoint intervals where the distances within each interval differ in at most a multiplicative factor of $1+\epsilon/3$. For every interval $[x,(1+\epsilon/3)x]$ we can set a threshold $t = (1+\epsilon/3)x$ and find all pairs within a distance of at most $t$ with an approximation factor of $3$. Then, based on all  these solutions, one can find a $3+\epsilon$ approximation distance for every pair of the points.

Now the problem boils down to the following: given a threshold $t$, find all pairs $(p_i,p_j)$ such that $d(p_i,p_j) \leq t$. Of course, an exact solution for this problem is hopeless due to our impossibility result. Therefore we allow some false positive in our solution as well. More precisely, we restrict our solution to contain all pairs $(p_i,p_j)$ such that $(p_i,p_j) \leq d$, but additional pairs are also allowed to appear, if $(p_i,p_j) \leq 3d$. It is easy to show that any solution that solves the above problem via $\tildorder(n^{5/3}\poly(1/\epsilon))$ queries, yields a $3+\epsilon$ approximation factor algorithm for metric estimation that uses at most $\tildorder(n^{5/3}\poly(1/\epsilon))$ oracle calls.

In what follows, we describe the ideas to solve the problem for a fixed threshold $t$. The algorithm is explained in details in Section \ref{metric}, therefore, here, we just mention the tools and techniques. For convenience, we construct a graph $G$ with $n$ nodes, and correspond every point $p_i$ of the metric to a vertex $v_i$ of the graph. For a pair of points $(p_i,p_j)$, we add an undirected edge $(v_i,v_j)$ to the graph, if $d(p_i,p_j) \leq t$. Notice that the oracle function $\oracle$, provides us the exact value of $d(p_i,p_j)$ for any $p_i$ and $p_j$, therefore we can examine whether an edge exists between two vertices $v_i, v_j$ with a single oracle call. Recall that, Grover's search allows us to find as many as $m$ elements with value $1$ of a function of size $n$ via $O(\sqrt{nm})$ oracle calls. Therefore, if the number of the edges of the graph is $O(n^{4/3})$, we can use Grover's search (Theorem \ref{grover}) to list all of the edges with $O(\sqrt{n^2\cdot n^{4/3}}) = O( n^{5/3})$ queries and solve the problem. Therefore, the non-trivial part of the problem is the case where the graph is dense. In this case, the average degree of the vertices is at least $\Omega(n^{1/3})$. Now, suppose we select a vertex $v_i$ whose degree is at least $n^{1/3}$, and with $n-1$ query calls, find the distances of its corresponding point $p_i$ from all other points of the metric. Let set $D^t$, be the set of all points that have a distance of at most $t$ from $p_i$ and $D^{2t}$ be the of points with a distance of at most $2t$ from $p_i$. Trivially, $D^t \subseteq D^{2t}$. Due to the triangle inequality, all of the edges incident to the vertices corresponding to set $D^t$ are from the vertices corresponding to $D^{2t}$. Moreover, the distances of all points of $D^t$ from points of $D^{2t}$ are bounded by $3t$. Therefore, one can report all such pairs in the solution and proceed by removing $D^t$ from the graph (however, some vertices of $D^{2t}$ remain in the graph). Thus, all that remains is to solve the problem for an instance with at most $n-n^{1/3}$ nodes recursively. Since we make at most $O(n)$ query calls for every $n^{1/3}$ vertices (an amortized of $n^{2/3}$ per vertex), the total number of queries is $O(n^{5/3})$. More details about this can be found in Section \ref{metric}.

In addition to Theorem \ref{thm:metric1}, we show in Section \ref{metric} that with a deeper analysis, one can use the same ideas to further improve the query complexity to $\tildorder(n^{3/2+\epsilon}\poly(1/\epsilon))$ by allowing the approximation guarantee to grow up to $\errormetric(\epsilon) = O(1/\epsilon)$. 

\vspace{0.2cm}
{\noindent \textbf{Theorem} \ref{thm:metric151} [restated]. \textit{For any $\epsilon > 0$, there exists a quantum algorithm that solves metric estimation with $\tildorder(n^{3/2+\epsilon}\poly(1/\epsilon))$ queries within an approximation factor of $\errormetric(\epsilon) = O(1/\epsilon)$.  Moreover, the running time of the algorithm is $\tildorder(n^2\poly(1/\epsilon))$.\\}}

You can find a summary of the results explained in this section in Table \ref{table}.

\begin{table}[t]\centering\small
	\caption{Quality of the approximation algorithms for metric estimation}
	\begin{tabular}{|l|c|c|c|c|c|}
		\hline
		\TBstrut
		 \begin{tabular}{@{}l@{}}Approx. \\ factor\end{tabular} & $\alpha < 3$ & $\alpha = 3+\epsilon$& $\alpha = \errormetric(\epsilon)$ & $\alpha= $ any constant\\
		\hline
		\Tstrut
		Number of& $\Omega(n^2)$   & $\tildorder(n^{5/3}\poly(1/\epsilon))$&  $\tildorder(n^{3/2+\epsilon}\poly(1/\epsilon))$ & $\Omega(n^{3/2})$ \\
\Bstrut
		queries & (Theorem \ref{hardness}) & (Theorem \ref{thm:metric1}) & (Theorem \ref{thm:metric151}) & (Theorem \ref{thm:lowerquery})\\
		\hline
	\end{tabular}
	\label{table}
\end{table}

\subsection{Approximating Edit Distance within a Factor $7+\epsilon$}\label{ghabli}
In the second step, we provide an algorithm to approximate the edit distance between two strings in subquadratic time, based on a reduction to metric estimation. Our approach here is twofold. Suppose we are given a \textit{guess} $d$, on the actual edit distance between the strings, and we want to find an approximation proof to the guess. More precisely, we wish to find out whether $d$ is smaller than the actual distance of the strings, or report a transformation of the strings with at most $\alpha d$ operations\footnote{insertion, deletion, or substitution} where $\alpha$ is given as an approximation factor. If $d$ is substantially smaller than $n$, then the $O(n + d^2)$ exact algorithm of \landauincremental~\cite{landau1998incremental} solves the problem in subquadratic time. Therefore, the only hard instances of the problem are when $d$ is asymptotically close to $n$. Therefore, we define a subtask of the edit distance problem, in which we are given two strings $s_1$ and $s_2$ and guaranteed that the edit distance between the strings is at most $\delta (|s_1|+|s_2|)$ where $\delta$ is not too small. The goal is to find a transformation of the strings with at most $(\delta\cdot \alpha) (|s_1|+|s_2|)$ operations, where $\alpha$ is the approximation factor of the algorithm. We refer to this subtask of edit distance as \textit{the $\delta$-bounded edit distance} problem.

\begin{center}
	\noindent\framebox{\begin{minipage}{6.3in}
			\textsf{$\delta$-bounded edit distance}\\[0.25cm]
			\textsf{Input}: two strings $s_1$ and $s_2$, and a real number $0 \leq \delta \leq 1$.  \\[0.25cm]
			\textsf{Guarantee}: $\edit(s_1,s_2) \leq \delta(|s_1|+|s_2|)$.  \\[0.25cm]
			\textsf{Output (with an approximation factor $\alpha>1$)}: a sequence of operations with size at most $(\delta\cdot \alpha) (|s_1|+|s_2|)$ that transforms $s_1$ into $s_2$.
		\end{minipage}}
\end{center}

We combine a divide and conquer technique with dynamic programming in order to approximate $\delta$-bounded edit distance. In addition to this, we subsequently make use of the quantum techniques mentioned earlier in our solution.
Recall that the total number of characters in the input is equal to $n$, i.e., $|s_1|+|s_2| = n$. For clarity, we define two parameters $0 < \beta < 1$ and $\gamma > 1$. $\gamma$ is an integer number but $\beta$ is a real number between 0 and 1. We use $\beta$ and $\gamma$ as two parameters of our algorithm, and after the analysis, we show which values for $\beta$ and $\gamma$ give us the best guarantee.

We begin by defining the notion of a \textit{window} and construct a set of windows for each string. Let $l = \lfloor n^{1-\beta}\rfloor$ be the \textit{window size} and define a window of $s_1$, as a string of length $l$ over the characters of $s_1$. Moreover, define $g = \lfloor l/\gamma\rfloor = O(n^{1-\beta}/\gamma)$ as \textit{the gap size} and construct a collection $W_1$ of windows for $s_1$ as follows: for every $0 \leq i \leq \lfloor \frac{|s_1|-l}{g} \rfloor$, put a window $[ig+1,ig+l]$ (i.e., a window from index $ig+1$ to index $ig+l$ of $s_1$) in $W_1$. In other words, $W_1$ contains tentatively $\gamma(|s_1|/l) = O(\gamma n^\beta)$ windows of length $l$ where the gap between the neighboring windows is equal to $g$. Figure \ref{figs:windows} illustrates how the windows of $W_1$ span over the characters of $s_1$. Notice that some of the windows overlap. 
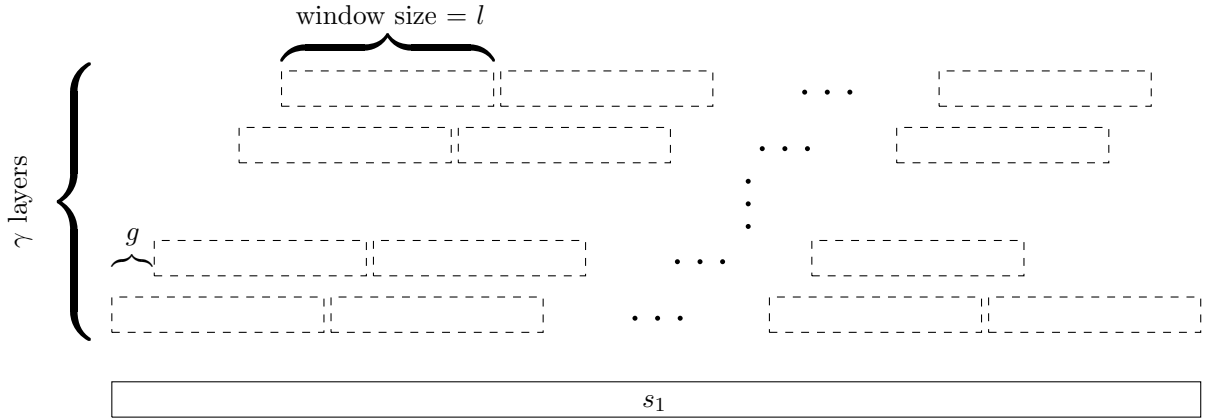
\begin{figure*}[h!]

\begin{center}
\begin{tikzpicture}[scale=0.94, transform shape]

\draw (0,0) -- (15.4,0) -- (15.4,0.5) -- (0,0.5) -- (0,0);
\node[text width=3cm] at (9,0.2) 
{$s_1$};

\node[text width=3cm, rotate = 90] at (-0.5,2.6) 
{    \Huge$\overbrace{\hspace{3.9cm}}$   };
\node[text width=3cm, rotate = 90] at (-1.3,3.9) 
{  $\gamma$ layers  };

\node[text width=3cm] at (3.9,5.2) 
{    \huge$\overbrace{\hspace{3.0cm}}$   };

\node[text width=3cm] at (4.1,5.7) 
{    window size = $l$   };

\node[text width=2.8cm] at (1.4,2.2) 
{    \footnotesize $\overbrace{\hspace{0.3cm}}$   };

\node[text width=3cm] at (1.7,2.55) 
{    $g$   };

\node[text width=3cm, rotate = 90] at (9,4.1) 
{  \huge  $\ldots$   };

\draw[dashed] (0,1.2) -- (3,1.2) -- (3,1.7) -- (0,1.7) -- (0,1.2);
\draw[dashed] (3.1,1.2) -- (6.1,1.2) -- (6.1,1.7) -- (3.1,1.7) -- (3.1,1.2);
%\draw[dashed] (6.2,1.2) -- (9.2,1.2) -- (9.2,1.7) -- (6.2,1.7) -- (6.2,1.2);
\node[text width=3cm] at (8.8,1.4) 
{   \huge $\ldots$   };

\draw[dashed] (9.3,1.2) -- (12.3,1.2) -- (12.3,1.7) -- (9.3,1.7) -- (9.3,1.2);
\draw[dashed] (12.4,1.2) -- (15.4,1.2) -- (15.4,1.7) -- (12.4,1.7) -- (12.4,1.2);

\draw[dashed] (0.6,2) -- (3.6,2) -- (3.6,2.5) -- (0.6,2.5) -- (0.6,2);
\draw[dashed] (3.7,2) -- (6.7,2) -- (6.7,2.5) -- (3.7,2.5) -- (3.7,2);
%\draw[dashed] (6.8,2) -- (9.8,2) -- (9.8,2.5) -- (6.8,2.5) -- (6.8,2);
\node[text width=3cm] at (9.4,2.2) 
{   \huge $\ldots$   };
\draw[dashed] (9.9,2) -- (12.9,2) -- (12.9,2.5) -- (9.9,2.5) -- (9.9,2);

%\draw[dashed] (1.2,2.8) -- (4.2,2.8) -- (4.2,3.3) -- (1.2,3.3) -- (1.2,2.8);
%\draw[dashed] (4.3,2.8) -- (7.3,2.8) -- (7.3,3.3) -- (4.3,3.3) -- (4.3,2.8);
%\draw[dashed] (7.4,2.8) -- (10.4,2.8) -- (10.4,3.3) -- (7.4,3.3) -- (7.4,2.8);
%\node[text width=3cm] at (10.0,3.0) 
%{   \huge $\ldots$   };
%\draw[dashed] (10.5,2.8) -- (13.5,2.8) -- (13.5,3.3) -- (10.5,3.3) -- (10.5,2.8);

\draw[dashed] (1.8,3.6) -- (4.8,3.6) -- (4.8,4.1) -- (1.8,4.1) -- (1.8,3.6);
\draw[dashed] (4.9,3.6) -- (7.9,3.6) -- (7.9,4.1) -- (4.9,4.1) -- (4.9,3.6);
%\draw[dashed] (8.0,3.6) -- (11.0,3.6) -- (11.0,4.1) -- (8.0,4.1) -- (8.0,3.6);
\node[text width=3cm] at (10.6,3.8) 
{   \huge $\ldots$   };

\draw[dashed] (11.1,3.6) -- (14.1,3.6) -- (14.1,4.1) -- (11.1,4.1) -- (11.1,3.6);

\draw[dashed] (2.4,4.4) -- (5.4,4.4) -- (5.4,4.9) -- (2.4,4.9) -- (2.4,4.4);
\draw[dashed] (5.5,4.4) -- (8.5,4.4) -- (8.5,4.9) -- (5.5,4.9) -- (5.5,4.4);
%\draw[dashed] (8.6,4.4) -- (11.6,4.4) -- (11.6,4.9) -- (8.6,4.9) -- (8.6,4.4);
\node[text width=3cm] at (11.2,4.6) 
{   \huge $\ldots$   };
\draw[dashed] (11.7,4.4) -- (14.7,4.4) -- (14.7,4.9) -- (11.7,4.9) -- (11.7,4.4);

%\node [draw,rounded corners,minimum height=5cm,minimum width=4cm] (node1) {};
%\node [anchor=north west,minimum width=10cm] at (-3.6,1) {\fbox{\hspace{1cm}$w_1$\hspace{1cm}}};
%\node [anchor=north west,minimum width=10cm] at (-0.9,1) {\fbox{\hspace{1cm}$$\hspace{1cm}}};
%\node [anchor=north west,minimum width=10cm] at (1.8,1) {\fbox{\hspace{1cm}$\ldots$\hspace{1cm}}};
%\node [anchor=north west,minimum width=10cm] at (4.5,1) {\fbox{\hspace{1cm}$$\hspace{1cm}}};
%\node [anchor=north west,minimum width=10cm] at (7.2,1) {\fbox{\hspace{1cm}$$\hspace{1cm}}};
%\node [anchor=north west,minimum width=10cm] at (0,0) {\fbox{\hspace{6.4cm}$s_1$\hspace{6.4cm}}};
% {\fbox{\hspace{6.4cm}$s_1$\hspace{6.4cm}}};

%\node [anchor=south west] at (node1.north west) {\fbox{Case: anchor is SW}};
%\node [anchor=south east] at (node1.north west) {\fbox{Case: anchor is SE}};
%\node [anchor=north east] at (node1.north west) {\fbox{Case: anchor is NE}};
%\foreach \myanchor in {north west}
%{ 
%	\node[fill,circle,inner sep=1pt,outer sep=1ex,anchor=\myanchor] at (node1.north west){};
%}
\end{tikzpicture}
\end{center}
\caption{$s_1$ is shown with a solid rectangle and windows of $W_1$ are depicted via dashed rectangles.}\label{figs:windows}
\end{figure*}

Similar to this, we construct a collection $W_2$ of windows for $s_2$, using the same parameters $l$ and $g$. We define a \textit{transformation} of $s_1$ into $s_2$, as a sequence of insertions, deletions, and substitutions that turns $s_1$ into $s_2$. After a transformation of $s_1$ into $s_2$, we call a character of $s_2$ \textit{old} if it is either substituted by a character of $s_1$, or remained intact during the transformation. In other words, if a character is not inserted during a transformation, it is called old. Based on this, we define the notion of \textit{a window-compatible transformation} as follows:
\begin{Definition}
Let  $S = \langle w_1,w_2,\ldots,w_{k}\rangle$ and $S' = \langle w'_1,w'_2,\ldots,w'_{k}\rangle$ be two sequences of size $k$ of non-overlapping windows from $W_1$ and $W_2$, respectively. We call a transformation of $s_1$ into $s_2$ window-compatible with respect to $S$ and $S'$, if (i) all old characters of $s_2$ are in the windows of $S'$ and (ii) every old character of $s_2$ which is in a window $w'_i$, was placed in window $w_i$ of $s_1$ prior to the transformation. We call a transformation window-compatible, if it is window-compatible with respect to at least a pair of sequences of non-overlapping windows \textbf{from} $W_1$ and $W_2$, respectively.
\end{Definition}
Intuitively, a window-compatible transformation with respect to two sequences of windows $S$ and $S'$ does not allow the characters to move in between the windows; if a character is initially placed in a window $w_i$, it should either be deleted or placed in window $w'_i$ of $s_2$ and vice versa. We emphasize that in order for a transformation to be window-compatible, the corresponding windows should be selected from $W_1$ and $W_2$, respectively. A few examples of window-compatible and window-incompatible transformations are illustrated in Figure \ref{figs:examples}.

\begin{figure*}[h!]
    \centering
    \begin{subfigure}[t]{0.32\textwidth}
        \includegraphics[width=\textwidth]{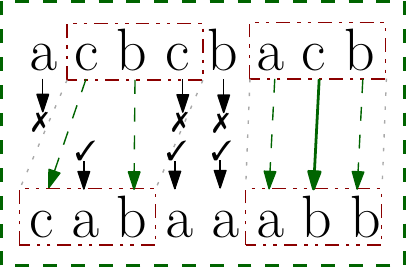}
        \caption{An example of a window-compatible transformation.\vspace{0.3cm}}
        \label{fig:e1}
    \end{subfigure}
    \qquad %add desired spacing between images, e. g. ~, \quad, \qquad, \hfill etc. 
      %(or a blank line to force the subfigure onto a new line)
    \begin{subfigure}[t]{0.32\textwidth}
        \includegraphics[width=\textwidth]{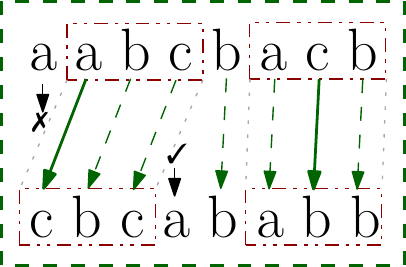}
        \caption{The transformation is not window-compatible since character 5 of the second string is old, but doesn't lie in any windows.\vspace{0.3cm}}
        \label{fig:e2}
    \end{subfigure}
    \qquad %add desired spacing between images, e. g. ~, \quad, \qquad, \hfill etc. 
    %(or a blank line to force the subfigure onto a new line)
    \begin{subfigure}[t]{0.32\textwidth}
        \includegraphics[width=\textwidth]{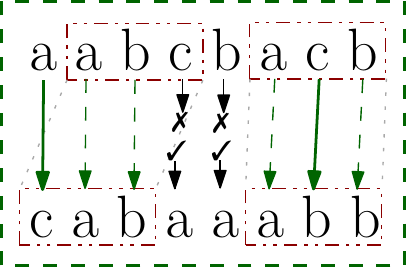}
        \caption{The transformation is not window-compatible since character 1 of the second string is old, but prior to the transformation, it was not placed in any windows.}
        \label{fig:e3}
    \end{subfigure}
    \qquad %add desired spacing between images, e. g. ~, \quad, \qquad, \hfill etc. 
    %(or a blank line to force the subfigure onto a new line)
    \begin{subfigure}[t]{0.32\textwidth}
        \includegraphics[width=\textwidth]{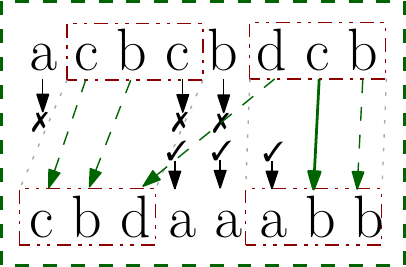}
        \caption{The transformation is not window-compatible since character 3 of the second string is old, but prior to the transformation, it was not placed in the corresponding window.}
        \label{fig:e4}
    \end{subfigure}
    \caption{Figures \ref{fig:e1}, \ref{fig:e2}, \ref{fig:e3}, and \ref{fig:e4} show a few examples of window-compatible and window-incompatible transformations. Solid arrows show substitutions, dashed arrows show the characters that remain in the string, and other characters are either inserted or deleted.}\label{figs:examples}
\end{figure*}

\begin{comment}
\begin{figure}[h!]
	\centering
	\begin{subfigure}[b]{0.30\textwidth}            
%	\begin{center}
		\includegraphics[width=\textwidth]{figs/fig3a.pdf}
		\caption{An example of a window-compatible transformation.%\vspace{0.3cm}
		}\label{e1}
%	\end{center}
	\end{subfigure}%
%		\hspace{0.1\textwidth}
		
	\begin{subfigure}[b]{0.30\textwidth}            
%	\begin{center}
		\includegraphics[width=\textwidth]{figs/fig3b.pdf}
		\caption{An example of a window-compatible transformation.%\vspace{0.3cm}
		}\label{e2}
%	\end{center}
	\end{subfigure}%
%		\hspace{0.1\textwidth}
	%add desired spacing between images, e. g. ~, \quad, \qquad etc.
	%(or a blank line to force the subfigure onto a new line)
	\begin{subfigure}[b]{0.30\textwidth}
	\begin{center}
		\includegraphics[width=\textwidth]{figs/fig3c.pdf}
		\caption{The transformation is not window-compatible since character 5 of the second string is old, but doesn't lie in any windows.}\label{e3}
	\end{center}
	\end{subfigure}
	
	\begin{subfigure}[b]{0.30\textwidth}
	\begin{center}
		\includegraphics[width=\textwidth]{figs/fig3c.pdf}
		\caption{The transformation is not window-compatible since character 5 of the second string is old, but doesn't lie in any windows.}\label{e4}
	\end{center}
	\end{subfigure}
	\caption{Figure \eqref{e1} is an example of a window-compatible transformation. Figure \eqref{e2} illustrates a transformation which is not window-compatible with respect to the windows.}\label{figs:examples}
\end{figure}
\end{comment}

As we show in the following, window-compatible transformations are well-structured. In fact, we show in Section \ref{editdistance} that if the edit distances of the windows are accessible in time $O(1)$, a dynamic program can find an optimal\footnote{a transformation with the smallest number of operations.} window-compatible transformation of $s_1$ into $s_2$ in time $O(n+|W_1||W_2|)$.

\vspace{0.2cm}
{\noindent \textbf{Lemma} \ref{dp} [restated]. \textit{Given a matrix of edit distances between the substrings corresponding to every pair of windows of $W_1$ and $W_2$, one can compute an optimal window-compatible transformation of $s_1$ into $s_2$ in time $O(n+|W_1||W_2|)$.\\}}

Lemma \ref{dp} shows that window-compatible transformations are easy to find. It also follows from Lemma \ref{dp} that any $\alpha$ approximation matrix for the edit distances of the windows suffices to find an approximately optimal window-compatible transformation (with the same approximation factor) in time $O(n+|W_1||W_2|)$. This makes the connection of edit distance and metric estimation more clear. 

We complement this observation by a structural proof. In Section \ref{editdistance}, we show that the length of the shortest window-compatible transformation of $s_1$ into $s_2$ is not far from $\delta(|s_1|+|s_2|)$. This enables us to use the previously mentioned algorithms to find an approximately optimal window-compatible transformation, and show this is in fact a constant approximation away from $\delta(|s_1|+|s_2|)$.

\vspace{0.2cm}
{\noindent \textbf{Lemma} \ref{compatibility} [restated]. Given that $\edit(s_1,s_2) \leq \delta n$, there exists a window-compatible transformation of $s_1$ into $s_2$ with at most $(3\delta +  1/\gamma)n + 2l$ operations. \\}

Now we can put things in perspective. Lemma \ref{dp}, in light of the results of metric estimation, provides us a nice tool for finding an approximately optimal window-compatible transformation, and Lemma \ref{compatibility} argues that such a transformation is to some extent optimal. Based on this, we outline our algorithm for $\delta$-bounded edit distance as follows:
\begin{enumerate}
	\item Construct the windows of $W_1$ and $W_2$ for both $s_1$ and $s_2$.
	\item Construct a metric $\langle \metric, \edit\rangle$, where $\metric = W_1 \cup W_2$ and the distance of two points in $\metric$ is equal to the edit distance between their corresponding windows. We use the classic algorithm of edit distance to answer every oracle invocation for reporting the edit distance between two windows. Using the quantum approximation algorithm of metric estimation, find a $3+\epsilon$ approximation solution to the edit distances for every pair of windows (Theorem \ref{thm:metric0}). 
	\item Based on the estimated distances, find a $3+\epsilon$ approximately optimal window-compatible transformation (Lemma \ref{dp}).
	\item Report the transformation as an approximation proof for the $\delta$-bounded edit distance problem.
\end{enumerate}

We show in Section \ref{editdistance}, that by setting $\beta = 6/7$ and $\gamma = 1/\epsilon\delta$, the above algorithm runs in time $\tildorder(n^{2-1/7}\poly(1/\epsilon))$ and has an approximation factor of $7+\epsilon$. 

\vspace{0.2cm}
{\noindent \textbf{Lemma} \ref{mainbutnotmain} [restated] There exists a quantum algorithm that solves the $\delta$-bounded edit distance problem within an approximation factor of $7+\epsilon$ in time $\tildorder(n^{2-1/7}\poly(1/\epsilon))$. \\}

By Lemma \ref{mainbutnotmain}, we can approximate the $\delta$-bounded edit distance problem in truly subquadratic time in case the guarantee holds. Of course, if this algorithm provides a larger or invalid transformation, one can immediately imply that the guarantee $\edit(s_1,s_2) \leq \delta(|s_1|+|s_2|)$ is violated. The rest of the solution for edit distance follows from a simple multiplicative method. In order to solve edit distance, we first check whether the two strings are equal and in that case, we report that their distance is equal to $0$. Otherwise $\edit(s_1,s_2) \geq 1$. Now, we start with $\rho = 1/n$ and every time run our solution for $\delta$-bounded edit distance with parameter $\delta=\rho$, to find an approximation proof for $\edit(s_1,s_2) = \rho n$. If our algorithm finds a proper transformation with at most $(7\rho + \epsilon) n$ operations, then we report that solution. Otherwise, we know that $\edit(s_1,s_2) > \rho n$, and thus multiply $\rho$ by a factor $1+\epsilon$. Of course, this comes at the expense of an additional multiplicative factor of $1+\epsilon$ to the approximation factor; however, the running time remains $\tildorder(n^{2-1/7}\poly(1/\epsilon))$. We later refer to this technique as \textit{guess and multiply}.

\vspace{0.2cm}
{\noindent \textbf{Theorem} \ref{main} [restated] There exists a quantum algorithm that solves edit distance within an approximation factor of $7+\epsilon$ in time $\tildorder(n^{2-1/7}\poly(1/\epsilon))$. \\}

\subsection{Improving the Running Time via Bootstrapping}\label{bbc}
So far, we discussed how to use divide and conquer and metric estimation to approximate edit distance in subquadratic time. In this section, we explain the ideas to improve the running time of the algorithm by taking a hit on its approximation factor.

Recall that, in order to approximate the edit distance, we first construct a set of windows. Next, we use metric estimation to estimate the edit distances of the windows, and finally, we use a dynamic programming algorithm to find an almost optimal window-compatible transformation. As discussed before, such a solution approximates the edit distance within a constant factor. The components of this algorithm are illustrated in Figure \ref{7+}.

\begin{figure*}[h!]
\begin{center}
\includegraphics[scale=0.7]{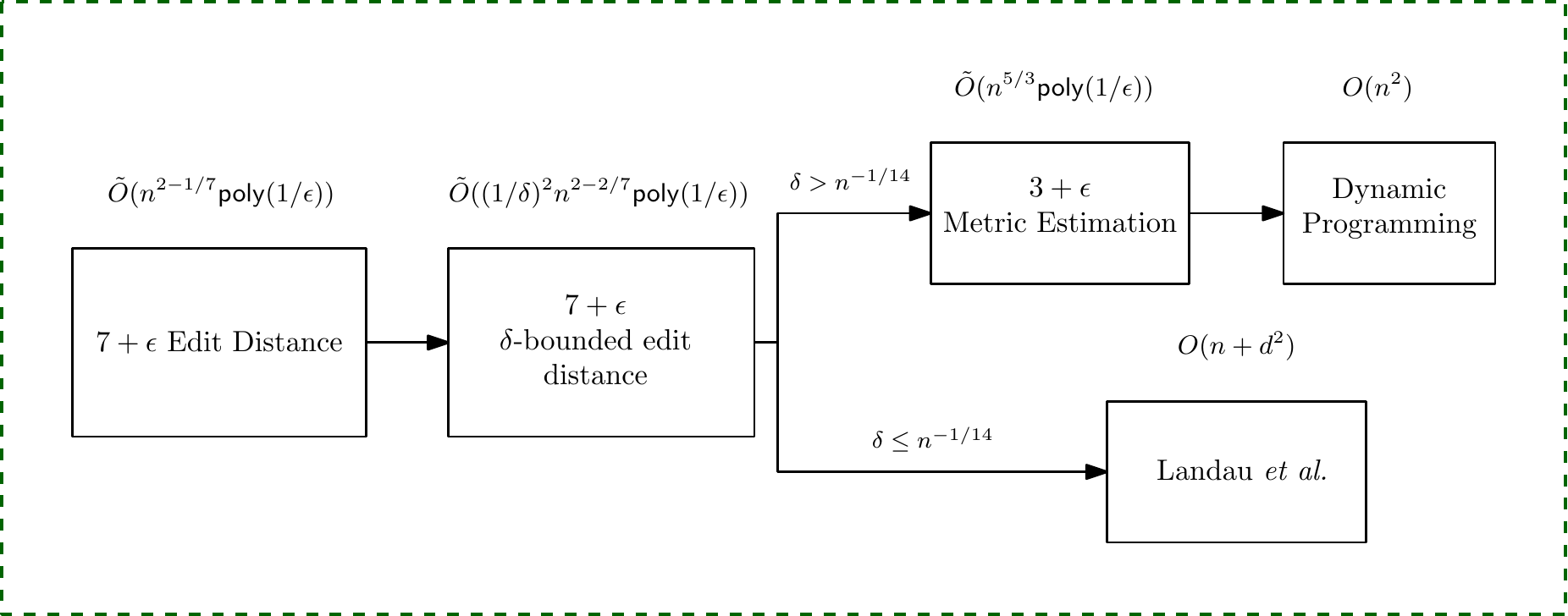}
\end{center}
\caption{The diagram depicts the components of the $7+\epsilon$ algorithm for edit distance. $x \rightarrow y$ shows that component $x$ uses component $y$ as a black box.}
\label{7+}
\end{figure*}

Now, we show that we can improve the algorithm at two points. Firstly, instead of using the $3+\epsilon$ approximation algorithm for metric estimation, we can lose a factor of $\errormetric(\epsilon)$ in the approximation and estimate the distances in time $\tildorder(n^{3/2+\epsilon}\poly(1/\epsilon))$ (Theorem \ref{thm:metric151}). In addition to this, as an oracle function for metric estimation, we do not really need to compute the exact edit distances of the windows; a constant estimation to the distances suffices. Therefore, one can use our algorithm for approximating edit distance to implement the oracle in subquadratic time. Of course, this again comes at the expense of deteriorating the approximation guarantee but the running time improves. In this section, we show how we combine these ideas to achieve an $\tildorder(n^{2-(5-\sqrt{17})/4+\epsilon}\poly(1/\epsilon))\simeq\tildorder(n^{1.781})$ time algorithm. As to why the exponent converges to $2-(5-\sqrt{17})/4$, we refer the reader to a discussion in Section \ref{bootstraping}.

To formalize the above ideas, suppose we are given two strings $s_1$ and $s_2$, and would like to approximate the edit distance between the strings in time $\tildorder(n^{2-(5-\sqrt{17})/4+\epsilon}\poly(1/\epsilon))$. We call our algorithm for this problem $\alg(\epsilon)$, and refer to its time complexity and approximation factor with $\timeedit(\epsilon) $ and $\erroredit(\epsilon)$, respectively. We inductively show that $$\timeedit(\epsilon) = \tildorder(n^{2-(5-\sqrt{17})/4+\epsilon}\poly(1/\epsilon))$$ and $\erroredit(\epsilon) = \errorvalue$. Notice that if $2-(5-\sqrt{17})/4+\epsilon \geq 2$, $\alg(\epsilon)$ can be trivially implemented with the classic $O(n^2)$ algorithm and the approximation factor $\erroredit(\epsilon) = 1$. Now, assume that $2-(5-\sqrt{17})/4+\epsilon < 2$.

An $\tildorder((1/\delta)^2n^{2-(5-\sqrt{17})/2+2\epsilon}\poly(1/\epsilon))$ time algorithm for $\delta$-bounded edit distance suffices to design $\alg(\epsilon)$. If $\delta \leq n^{-(5-\sqrt{17})/8+\epsilon/2}$ we run the $O(n+\delta^2n^2)$ of \landauincremental~\cite{landau1998incremental}, otherwise the running time of our algorithm is $\tildorder(n^{2-(5-\sqrt{17})/4+\epsilon}\poly(1/\epsilon))$. Moreover, a similar guess and multiply method explained in Section \ref{ghabli} extends this solution to edit distance. Therefore, all we need is to approximate the $\delta$-bounded edit distance problem in time $\tildorder((1/\delta)^2n^{2-(5-\sqrt{17})/2+2\epsilon}\poly(1/\epsilon))$. 
 To this end, we again define two parameters $\beta$ and $\gamma$ and set the window size equal to $\lfloor n^{1-\beta}\rfloor$ and the gap size equal to $g = \lfloor l/\gamma\rfloor$. Similar to what explained before, we construct two sets of windows $W_1$ and $W_2$ for $s_1$ and $s_2$ based on the windows size and gap size. Now, we use the same algorithm for finding the edit distance between $s_1$ $s_2$, with two modifications.

\begin{enumerate}
	\item Construct the windows of $W_1$ and $W_2$ for both $s_1$ and $s_2$.
	\item Construct a metric $\langle \metric, \edit\rangle$, where $\metric = W_1 \cup W_2$ and the distance of two points in $\metric$ is equal to the edit distance between their corresponding windows. We use $\alg(2\epsilon)$ (a slightly slower version of our algorithm) for estimating the edit distances of the windows in time $\timeedit(2\epsilon)=\tildorder(n^{2-(5-\sqrt{17})/4+2\epsilon}\poly(1/\epsilon))$ as on oracle function. Using the approximation algorithm of metric estimation, find an $\errormetric(\epsilon)\erroredit(2\epsilon)$ approximation solution to the edit distances for every pair of windows (Theorem \ref{thm:metric151}). 
	\item Based on the estimated distances, find an $\errormetric(\epsilon)\erroredit(2\epsilon)$ approximately optimal window-compatible transformation (Lemma \ref{dp}).
	\item Report the transformation as an approximation proof for the $\delta$-bounded edit distance problem.
\end{enumerate}

Notice that there are two modifications to the previous algorithm. First, instead of using the $3+\epsilon$ factor algorithm for metric estimation, here, we use an $\errormetric(\epsilon)$ approximation factor algorithm that runs in time $\tildorder(n^{3/2+\epsilon}\poly(1/\epsilon))$. Moreover, instead of implementing the oracle function via the classic $O(n^2)$ algorithm, we use $\alg(2\epsilon)$ for approximating the edit distances.
In Section \ref{bootstraping}, we show that by setting the right values for parameters $\beta$ and $\gamma$, the running time and approximation factor of algorithm $\alg(\epsilon)$ would be $\tildorder(n^{2-(5-\sqrt{17})/4+\epsilon}\poly(1/\epsilon))$ and $\erroredit(\epsilon) = \errorvalue$, respectively.

\vspace{0.2cm}
{\noindent \textbf{Theorem} \ref{mainbutnotmain} [restated] There exists an $\tildorder(n^{2-(5-\sqrt{17})/4+\epsilon})$ time quantum algorithm that approximates edit distance within a factor $\erroredit(\epsilon) = \errorvalue$. \\}

Figure \ref{figs:bootstrapping} shows the components of $\alg(\epsilon)$.
\begin{figure*}[h!]

\begin{center}
\includegraphics[scale=0.8]{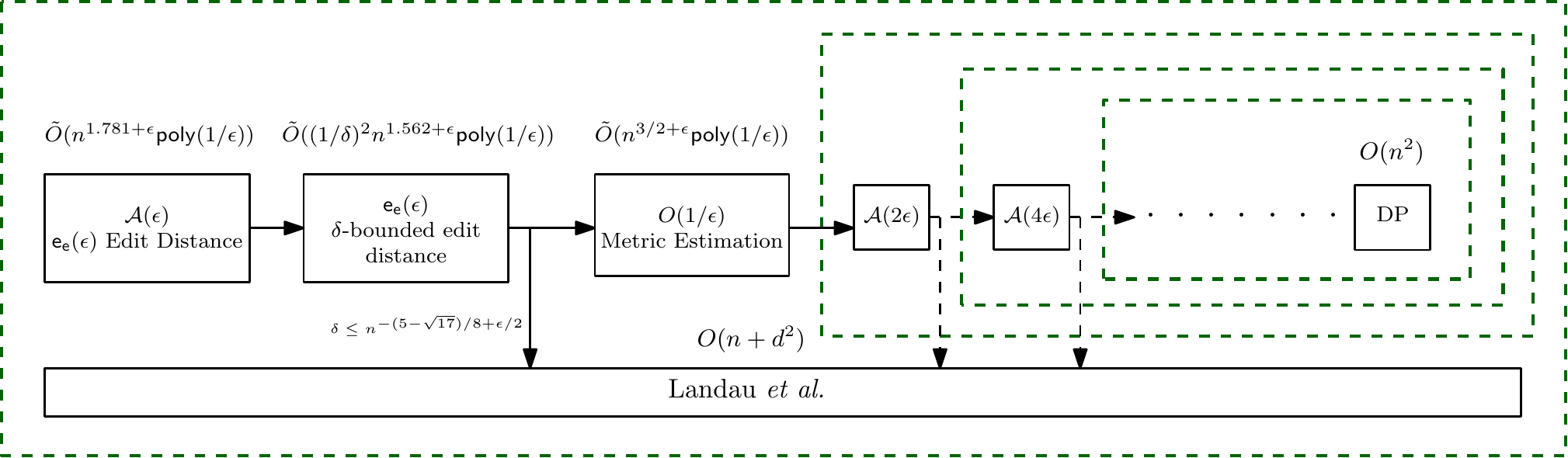}
\end{center}
\caption{The diagram illustrates the bootstrapping technique to achieve an $\tildorder(n^{1.781})$ time quantum algorithm for approximating edit distance. $x \rightarrow y$ shows that component $x$ uses component $y$ as a black box.}\label{figs:bootstrapping}
\end{figure*}

\section{Metric Estimation}\label{metric}
In this section, we discuss the metric estimation problem. Although the results of this section are only auxilary observations to be later used for edit distance, these results are of independent interest and may apply to future work. As defined previously, in this problem, we wish to estimate the distance matrix of a metric space $\metricspace{}$ with $n$ points. Notice that, an estimation of a distance $d(p_i, p_j)$ with approximation factor $\alpha$ lies in the range $[d(p_i, p_j), \alpha d(p_i, p_j)]$, therefore, the estimated value cannot be less than the actual distance. However, it can be more than the actual distance by a multiplicative factor of $\alpha$. We tend to minimize the query complexity and the approximation factor, however, our algorithm is allowed to run in time $\tildorder(n^2)$. Throughout this section, we show a tradeoff between the approximation factor and the quantum query complexity of metric estimation. First, we present an impossibility result that shows the approximation factor cannot be less than $3$ unless we make a quadratic number of queries. Next, in Section \ref{sub:alg1}, we present our desired $3+\epsilon$ approximation algorithm for metric estimation with a subquadratic query complexity. Afterward, we adjust our algorithm to make as few as $\tildorder(n^{3/2+\epsilon}\poly(1/\epsilon))$ oracle call for a larger constant approximation $\errormetric(\epsilon) = O(1/\epsilon)$.

\subsection{Hardness of Approximation for $\alpha < 3$}\label{sub:hardness}
As aforementioned, the purpose of this section is to show an impossibility result for approximating metric estimation within a factor smaller than $3$ with subquadratic query complexity. To this end, we give a reduction from the well-known parity problem to the metric estimation problem. Parity is one of the problems for which quantum computers cannot perform better than classical computers. Recall the definition of the parity problem from Section \ref{contribution:metric}.

\begin{center}
	\noindent\framebox{\begin{minipage}{6.3in}
			\textsf{Parity}\\[0.25cm]
			\textsf{Input}: an integer $n$, and access to an oracle $\oracle$ that upon receiving an integer $i$ reports the value of $f(i)$. $f$ is defined over $[n]$ and maps each index to either $-1$ or $1$.  \\[0.25cm]
			\textsf{Output}: $\parity(f) = \prod_{i \in [n]}f(i)$.
		\end{minipage}}
\end{center}

	Note that, $\mathsf{par}(f)$ is either $+1$ or $-1$ for every function $f$. \Farhi{}~\cite{Farhi:1998bz} proved that at least $\Omega(n)$ oracle queries are necessary to find $\mathsf{par}(f)$. A classic method to show lower bounds on the time/query complexity of problems is via a reduction from parity. This method has been used to show lower bounds on the quantum query complexity of many problems~\cite{doi:10.1137/050644719, Montanaro2015}.
We are now ready to present our reduction.

The idea is to construct a metric space from a given function $f$, and show that any estimation of the metric with an approximation factor smaller than $3$ can be used to compute the parity of $f$. A metric space should satisfy three properties: identity, symmetry and triangle inequality. Keep in mind that our construction should be in such a way that the metric meets all of the mentioned properties. For a function $f:[n^2]\rightarrow\{-1,1\}$, we construct a metric $\metric = \{a_1,a_2,\ldots,a_n,b_1,b_2,\ldots,b_n\}$ with $2n$ points. We divide the points into two groups, namely $a_i$'s and $b_i$'s, where the distances of the points within each group are all equal to $1$. Moreover, for every pair of points $(a_i,b_i)$, the distance of $a_i$ from $b_i$ is either $1/2$ or $3/2$, depending on function $f$. We show that, given an $\alpha < 3$ approximation estimation for the distances of $\metric$, one can determine $\parity(f)$ uniquely.

\begin{theorem}\label{hardness}
Any quantum algorithm that approximates the metric estimation problem with an approximation factor smaller than $3$ needs to make at least $\Omega(n^2)$ oracle calls.
\end{theorem}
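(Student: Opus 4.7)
The plan is to reduce the parity problem on $N = n^{2}$ inputs to metric estimation on $2n$ points. Given an oracle $\oracle_f$ for $f:[n^{2}]\to\{-1,1\}$, I index $f$ by pairs $(i,j)\in[n]\times[n]$ and construct a metric space $\langle\metric,d\rangle$ with $\metric=\{a_{1},\ldots,a_{n},b_{1},\ldots,b_{n}\}$, setting $d(a_{i},a_{j})=d(b_{i},b_{j})=1$ for $i\neq j$, and
\[
d(a_{i},b_{j}) \;=\; \begin{cases} 1/2 & \text{if } f(i,j)=+1,\\ 3/2 & \text{if } f(i,j)=-1.\end{cases}
\]

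\textbf{Step 1: Verify this is a metric.} Identity and symmetry are immediate from the construction. For triangle inequality I would do a short case analysis: a path through two within-group edges gives total $2\geq\tfrac{3}{2}$; a path through one within-group and one cross edge gives $1+\tfrac{1}{2}\geq\tfrac{3}{2}$ and $1+\tfrac{3}{2}\geq\tfrac{3}{2}$; a path through two cross edges connecting $a_{i}$ to $a_{k}$ via $b_{j}$ has length at least $\tfrac{1}{2}+\tfrac{1}{2}=1=d(a_{i},a_{k})$, and similarly for $b$-to-$b$; and a path of length three between two cross-type endpoints is at least $\tfrac{3}{2}$. So every triangle inequality holds.

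\textbf{Step 2: Approximation recovers $f$ exactly.} Suppose $\alpha<3$ and let $A$ be an $\alpha$-approximation matrix. If the true distance is $1/2$, then $A[i][j]\in[\tfrac{1}{2},\tfrac{\alpha}{2}]\subseteq[\tfrac{1}{2},\tfrac{3}{2})$; if it is $3/2$, then $A[i][j]\in[\tfrac{3}{2},\tfrac{3\alpha}{2}]$. These intervals are disjoint, so thresholding at $3/2$ recovers $f(i,j)$ for every cross pair, hence all $n^{2}$ values of $f$, from which $\parity(f)=\prod_{i,j} f(i,j)$ can be computed classically in time $O(n^{2})$.

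\textbf{Step 3: Simulate metric oracle queries by $f$ queries.} The within-group distances are known constants and require no query to $f$; a query for $d(a_{i},b_{j})$ can be answered by reading $f(i,j)$ and returning $1/2$ or $3/2$ accordingly. In the quantum setting this means each query to the metric distance oracle is implemented by one query to $\oracle_f$ composed with cheap unitaries, so a quantum metric-estimation algorithm making $Q$ oracle calls induces a quantum algorithm for parity on $n^{2}$ inputs making at most $Q$ calls to $\oracle_f$. By the lower bound of \Farhi~\cite{Farhi:1998bz}, this forces $Q=\Omega(n^{2})$, proving the theorem (the metric has $2n$ points, so this is indeed a quadratic lower bound in the number of metric points).

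\textbf{Main obstacle.} The only real subtleties are (i) getting the gap between the two possible cross-distances large enough that any factor strictly below $3$ distinguishes them, which motivates the choice of $\tfrac{1}{2}$ vs.\ $\tfrac{3}{2}$, and (ii) making sure the construction is a genuine metric so that the metric-estimation algorithm is obliged to behave correctly on it; the case analysis in Step~1 is the bottleneck. Once those are in place, the quantum reduction is clean because the metric oracle can be implemented with a single query per invocation, so the parity lower bound transfers directly.
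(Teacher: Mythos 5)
Your proof is correct and follows essentially the same reduction as the paper: both encode $f:[n^2]\to\{-1,1\}$ into cross-distances $1/2$ versus $3/2$ between two groups of $n$ points (with within-group distances $1$), verify the triangle inequality by the same small case analysis, observe that any $\alpha<3$ approximation disambiguates the two cross-distance values, and transfer the $\Omega(n^2)$ parity lower bound of Farhi \textit{et al.}\ via a one-query oracle simulation. The only cosmetic slip is that the phrase ``a path of length three between two cross-type endpoints'' is not part of verifying a metric (the triangle inequality concerns triangles, not longer paths); the relevant triangles with edge multisets $\langle 1,1,1\rangle$, $\langle 1,1/2,1/2\rangle$, $\langle 1,1/2,3/2\rangle$, $\langle 1,3/2,3/2\rangle$ are already handled by your other cases.
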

\begin{proof}
As promised, we prove this theorem by reducing the parity problem  to the metric estimation problem. Suppose we are given an instance $\ii$ of the parity problem consisting of $f:[m] \rightarrow \{0,1\}$ and an oracle $\oracle$ to access $f$. We assume w.l.o.g that $m = n^2$ and construct an instance $\cor(\ii)$ of metric estimation as follows:
let $\langle \metric, d\rangle$ be a set of $2n$ points where the distance of the points $p_i$ and $p_j$ is denoted by $d(p_i,p_j)$. We divide the points of the metric into two groups $\{a_1,a_2,\ldots,a_n\}$ and $\{b_1,b_2,\ldots,b_n\}$. As mentioned before, the distances within the points of each group are equal to $1$. Moreover, for every pair of points $a_i$ and $b_j$, we set $d(a_i,b_j)$ as follows:
\begin{equation*}
d(a_i,b_j) =
\begin{cases}
3/2 & f((i-1)n+j) = 1, \\
1/2 & \text{otherwise.}
\end{cases}
\end{equation*}
The identity and symmetry conditions are met by definition. We show that the triangle inequality also holds. If all three points of a triangle are in the same group (either $a_i$'s or $b_i$s), then their distances are all $1$. If they are in different groups, the distances are one of these cases, $\langle1, 1/2, 1/2\rangle$, $\langle1, 1/2, 3/2\rangle$ or $\langle1, 3/2, 3/2\rangle$, all of which meet the triangle inequality. Thus, $\metricspace{}$ is a valid metric space. One can trivially construct an oracle $\oraclep$ for $\cor(\ii)$, that reports the distance of a pair of points with a single query to $\oracle$.

Now, suppose for the sake of contradiction that there exists a quantum algorithm that estimates the distances within a factor smaller than 3 with $o(n^2)$ query calls of $\oraclep$.
We show we can use this algorithm to find $\mathsf{par}(f)$ as follows. We first run the algorithm to approximate all of the distances via $o(n^2)$ query calls to $\oraclep$. This costs us a total of $o(n^2)$ queries to $\oracle$ since every query of $\oraclep$ makes a call to $\oracle$. Next, for every pair of points $(a_i,b_i)$ we determine $f((i-1)n+j)$ as follows:

\begin{equation*}
f((i-1)n+j) =
\begin{cases}
1 & d^*(a_i,b_j) \geq 3/2. \\
-1 & \text{otherwise}
\end{cases}
\end{equation*}
where $d^*(a_i,b_j)$ is the estimated distance of point $a_i$ from point $b_j$. The correctness of our reduction follows from the fact that the approximation factor of the algorithm for metric estimation is smaller than $3$ and thus if $d^*(a_i,b_j) \geq 3/2$ the actual distance $d(a_i,b_j)$ is more than $1/2$. Finally, we take the multiplication of all determined values for $f$ and compute $\parity(f)$ with $o(n^2) = o(m)$ queries. This contradicts the observation of \Farhi{}~\cite{Farhi:1998bz}.
\end{proof}

\subsection{A $3+\epsilon$ Approximation Algorithm with $\algorder$ Queries}\label{sub:alg1}
In this section, we present a quantum algorithm to estimate the distances of a metric space within an approximation factor of $3+\epsilon$. Our algorithm makes $\algorder$ oracle calls.

The first idea of our algorithm is to discretize the distances. Recall that, the distances of the metric are non-negative integers in the interval $[l,u]$. We separate the numbers into disjoint intervals. If $l = 0$, we put a separate interval $[0,0]$ for $0$ and continue on with the numbers in $[1,u]$. Every time, we find the smallest number $l \leq x \leq u$ which is not covered in the previous intervals and add a new interval $[x, (1+\epsilon)x]$ to the list. Since $u = \poly(n)$, the number of intervals is $\poly(\log n)\poly(1/\epsilon) = \tildorder(\poly(1/\epsilon))$. Now, by losing a factor $1+\epsilon$ in the approximation, we can round up all of the numbers within an interval to its highest value and solve the problem for each interval separately. Therefore, the problem boils down to the following: given a threshold $t$, find all pairs of the points with a distance of at most $t$. We call this problem \thresholdestimation{}. Note that, since we wish to find a $3$ approximation solution for \thresholdestimation{}, a false positive is also allowed in the solution. More precisely, the solution should contain all pairs of points within a distance of at most $t$, but pairs within distances up to $3t$ are also allowed to be included.

In order to approximate \thresholdestimation{}, we subsequently make use of Grover's search algorithm~\cite{boyer1996tight}. Think of the metric as a graph $G$ where every point corresponds to a vertex of the graph and two vertices are adjacent if the distance of their corresponding points is at most $t$.
Let $0 < \degreet < 1$ be a fixed parameter. We call a vertex $v$ of the graph \textit{low degree} if the number of edges incident to $v$ are bounded by $n^\degreet$ and \textit{high degree} otherwise. Our algorithm deals with low degree vertices and high degree vertices differently. We set the value of $\degreet$ after the analysis and show it gives us the best bound.

In our algorithm, we iterate over the vertices of the graph and find their neighbors one by one. To this end, fix a vertex $v_i$ and suppose we wish to find all of its neighbors. Due to Grover's search (Theorem \ref{grover}), we can list up to $n^\degreet$ neighbors of $v_i$ with $\sqrt{n^\degreet n} = n^{(1+\degreet)/2}$ queries. Moreover, with an additional Grover's search, we can determine whether the degree of $v_i$ is more $n^\degreet$ with $O(\sqrt{n})$ queries. 
If $v_i$ is low degree, we already have all its neighbors, and thus we can report those edges and remove $v_i$ from the graph. Otherwise, the degree of $v_i$ is more than $n^\degreet$. In this case, we make $O(n)$ oracle calls and find the distances of all other points from the corresponding point of $v_i$, namely $p_i$. Based on these distances, we construct two sets of vertices $N(v_i, t)$ and $N(v_i, 2t)$ where the former contains all vertices corresponding to points within a distance of at most $t$ of $p_i$ and the latter contains all of the vertices corresponding to points within a distance of at most $3t$ from $p_i$. We then proceed by reporting all the edges between $N(v_i, t)$ and $N(v_i, 2t)$ and removing $N(v_i, t)$ from the graph.
A pseudocode for this algorithm is shown in Algorithm \ref{alg:metric0}.

\begin{algorithm2e}
 \KwData{
 The number of points in the metric space $\metric = \{p_1,p_2,\ldots,p_n\}$, oracle access to the distances between points, and a threshold $t$.}
 \KwResult{A 0-1 matrix $A$ of size $n\times n$, where for each $d(p_i,p_j)\leq t$ we have $A_{i,j}=1$, and for each $A_{i,j}=1$ we have $d(p_i,p_j)\leq 3t$.}
Initialize a graph $G$ with $n$ vertices\;
 \While{$V(G)$ is not empty}{
Select a vertex $v_i$ from $V(G)$\;
List up to $n^\degreet$ neighbors of $v_i$ and find out whether $v_i$ is high degree or low degree\;
\If{$v_i$ is low degree}{
Update the matrix $A$ according to the edges of $v_i$\; 
Remove $v_i$ from $V(G)$\;
}
\Else{
	Find the distances of $p_i$ from all other points\;
	Construct $N(v_i, t)$ and $N(v_i, 2t)$ based on the distances\;
	For every $x \in N(v_i, t)$ and $y \in N(v_i, 2t)$, set $A_{x,y} = 1$\;
	$V(G) \leftarrow V(G) \setminus N(v_i,t)$\; 
}
} 
  Output $A$\;
 \caption{\textsf{EstimateWithThreshold}($n, \oracle, t$)}
 \label{alg:metric0}
\end{algorithm2e}

\begin{theorem}\label{thm:metric0}
For $\degreet{} = 1/3$, Algorithm \ref{alg:metric0} approximates \thresholdestimation{} within a factor of $3$ with $O(n^{5/3})$ oracle calls. Moreover, the running time of Algorithm \ref{alg:metric0} is $O(n^2)$.
\end{theorem}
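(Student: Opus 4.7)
The plan is to prove the two claims of the theorem separately: correctness of the $3$-approximation, and the $O(n^{5/3})$ query bound with $O(n^2)$ running time. I will treat each iteration of the main loop as processing a single ``center'' vertex $v_i$, and analyze separately what happens in the low-degree and high-degree branches.

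For correctness, soundness (no bad entry $A_{x,y} = 1$) is immediate: in the low-degree branch we only set entries for neighbors of $v_i$ in $G$, so those pairs have distance $\le t \le 3t$; in the high-degree branch, for $x \in N(v_i,t)$ and $y \in N(v_i,2t)$ the triangle inequality gives $d(p_x,p_y) \le d(p_x,p_i) + d(p_i,p_y) \le t + 2t = 3t$. The delicate part is completeness: for every pair $(p_x,p_y)$ with $d(p_x,p_y) \le t$, we must argue that $A_{x,y}$ is set to $1$ before either of $x,y$ is removed from $G$. I would handle this by an invariant argument, considering the first iteration in which one of $\{x,y\}$ (say $x$) is removed. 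If $x$ is removed as a low-degree vertex, then Grover's listing procedure returned the full (bounded) neighborhood of $x$, so $y$ was included and $A_{x,y}$ was set. If $x$ is removed as part of $N(v_i,t)$ for some high-degree center $v_i$, then since $y$ was still in $V(G)$ just before this iteration and $d(p_i,p_x)\le t,\ d(p_x,p_y)\le t$, the triangle inequality forces $d(p_i,p_y)\le 2t$, i.e., $y \in N(v_i,2t)$, so the algorithm sets $A_{x,y}=1$ in this same iteration.

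For the query complexity I would separately bound the two branches with $\tau = 1/3$. In every iteration the call to the element-listing primitive with $m = n^{\tau}$ elements on a domain of size $n$ costs $O(\sqrt{n \cdot n^{\tau}}) = O(n^{(1+\tau)/2})$ quantum queries by Theorem \ref{grover}. Summed over at most $n$ low-degree iterations this contributes $O(n^{(3+\tau)/2})$. In each high-degree iteration we perform $O(n)$ classical queries to determine $N(v_i,t)$ and $N(v_i,2t)$, but the algorithm then removes the entire set $N(v_i,t)$ from $V(G)$, and since $v_i$ is high-degree this set has size at least $n^{\tau}$. Hence there are at most $n^{1-\tau}$ high-degree iterations, contributing $O(n^{2-\tau})$ queries. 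Balancing $(3+\tau)/2 = 2-\tau$ gives $\tau = 1/3$ and both terms equal $n^{5/3}$.

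For the $O(n^2)$ running time, the only potentially expensive non-query work is writing entries into the output matrix $A$ in the high-degree branch, where a single iteration may set as many as $|N(v_i,t)|\cdot|N(v_i,2t)| = O(n^2)$ entries. The key observation I would use is that the sets $N(v_i,t)$ across different high-degree iterations are pairwise disjoint (each is removed from $V(G)$ as soon as it is processed), so the total write cost is at most $\bigl(\sum_i |N(v_i,t)|\bigr)\cdot n \le n\cdot n = n^2$. The low-degree branch writes $O(n^{\tau})$ entries per vertex for a total of $O(n^{1+\tau}) = O(n^{4/3})$, and bookkeeping for $G$ is also $O(n^2)$. The hardest point, I expect, is the completeness invariant above — in particular making sure no pair is lost across different iterations when one endpoint is removed early as a low-degree vertex and the other is absorbed into some high-degree ball; the symmetric role of $x$ and $y$ and the fact that whichever is removed first triggers the recording must be stated carefully.
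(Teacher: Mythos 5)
Your proposal is correct and follows essentially the same approach as the paper: the completeness argument hinges on considering the first iteration in which one endpoint of a close pair is removed, with the triangle inequality handling both the low- and high-degree cases, and the $\tau=1/3$ balance between the two iteration types gives $O(n^{5/3})$. The only cosmetic differences are that you bound the two iteration types by a direct count ($n$ low-degree iterations at $O(n^{(1+\tau)/2})$ each, $n^{1-\tau}$ high-degree iterations at $O(n)$ each) whereas the paper writes the same accounting as an amortized recurrence, and your disjointness-of-balls argument for the $O(n^2)$ write cost fills in a step the paper dismisses as trivial.
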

\begin{proof}
The correctness of our algorithm follows from the triangle inequality. We first show that for every pair of points $p_i$ and $p_j$ such that $d(p_i,p_j) \leq t$, $A_{i,j} = 1$ at the end of the algorithm. To this end, consider the first time that we remove either $v_i$ or $v_j$ from the vertices. This could happen in two ways: either one of $v_i$ or $v_j$ is removed from the graph as a low degree vertex or any of them is removed in an iteration of the algorithm for some high degree vertex. In the former case, since we find all neighbors of the low degree vertices, we detect the edge between them thus $A_{i,j} = 1$. Now, suppose that one of these vertices say $v_i$ is removed from the graph in an iteration for a vertex $v_x$ of the graph. Therefore, $d(v_i, v_x) \leq t$. Moreover, due to the triangle inequality, $d(v_j,v_x) \leq d(v_j,v_i) + d(v_i,v_x) \leq 2t$ and thus $v_j \in N(v_x,2t)$. Thus we set $A_{i,j} = 1$. Moreover, it follows from the triangle inequality that if we set $A_{i,j} = 1$ for some $i$ and $j$, then the distance of the points $p_i$ and $p_j$ is bounded by $3t$.

Trivially, the running time of the algorithm is $O(n^2)$. In what follows we show the query complexity of the algorithm is bounded by $O(n^{5/3})$. Let $Q(n)$ denote the query complexity of the algorithm for the case where $|V(G)| = n$. To compute $Q(n)$, we consider two cases separately: (i) when we select a vertex $v_i$ which is low-degree and (ii) when we select a vertex $v_i$ which is high degree. In any case, we make a search to list up to $n^\degreet$ neighbors of $v_i$ and we make at least $O(n^{(1+\degreet)/2})$ oracle calls. In addition to this, we make $O(\sqrt{n})$ more oracle calls to find out whether $v_i$ is low degree. In case $v_i$ is low degree, we remove $v_i$ from the graph and continue on with an instance with $n-1$ vertices. Otherwise, we make $O(n)$ more oracle calls and then remove $N(v_i,t)$ from the graph which leaves us an instance with at most $n-n^\degreet$ vertices. Therefore, we formulate $Q(n)$ as follows:

\begin{equation*}
Q(n) = 
\begin{cases}
O(n^{(1+\degreet)/2}) + O(\sqrt{n})+ Q(n-1)  
&\text{if }v_i\text{ is low degree,} \\
O(n^{(1+\degreet)/2}) + O(\sqrt{n})+ O(n) + Q(n-n^\degreet)
& \text{otherwise.}
\end{cases}
\end{equation*}
Now we set $\degreet = 1/3$ and thus we obtain 
\begin{equation*}
Q(n) = 
\begin{cases}
O(n^{2/3}) + O(\sqrt{n}) + Q(n-1) = O(n^{2/3}) + Q(n-1)   & \text{if }v_i\text{ is low degree,} \\
O(n^{2/3}) + O(\sqrt{n}) + O(n) + Q(n-n^{1/3}) =  O(n) + Q(n-n^{1/3}) & \text{otherwise.}
\end{cases}
\end{equation*}
A trivial analysis shows that for every vertex that we remove from $V(G)$, we make $O(n^{2/3})$ amortized query calls and thus the total number of queries is bounded by $n\cdot O(n^{2/3}) = O(n^{5/3})$.
\end{proof}

Now, we are ready to present our $3+\epsilon$ approximation algorithm with query complexity $\algorder$.
For each $i$, using Algorithm \ref{alg:metric0}, we can find all distances in range $[0, l(1+\epsilon/3)^{i+1}]$ with some false positive distances in range $[l(1+\epsilon/3)^{i+1}, 3l(1+\epsilon/3)^{i+1}]$. By knowing the same information for $i-1$, we have all points in range $[0, l(1+\epsilon/3)^{i}]$ with some false positive distances in range $[l(1+\epsilon/3)^{i}, 3l(1+\epsilon/3)^{i}]$. Thus we can find all points in range $[l(1+\epsilon/3)^{i}], l(1+\epsilon/3)^{i+1}]$, some false positives in range $[l(1+\epsilon/3)^{i+1}, 3l(1+\epsilon/3)^{i+1}]$, and some false negatives that estimated correctly before. All of these distances are in range $[l(1+\epsilon/3)^{i}, 3l(1+\epsilon/3)^{i+1}]$. Therefore we can estimate these distances as $3l(1+\epsilon/3)^{i+1}$ and the approximation factor is $\frac{3l(1+\epsilon/3)^{i+1}}{l(1+\epsilon/3)^{i}} = 3(1+\epsilon/3) = 3+\epsilon$. The time and query complexity of this algorithm is the time and query complexity of Algorithm \ref{alg:metric0} times $\log_{1+\epsilon/3}(\upperdistance/\lowerdistance) = \tildorder(1/\epsilon)$. We handle zero distances separately. You can find the pseudocode of this algorithm in the following.

\begin{algorithm2e}
 \KwData{The number of points in the metric space $\metric=\{p_1,p_2,\ldots,p_n\}$, oracle access to the distances between points, a small number $\epsilon > 0$, a lower bound, and an upper bound for the distances.}
 \KwResult{An $n\times n$ matrix $A$, where $A_{i,j}$ is a $3+\epsilon$ approximation of $d(p_i, p_j)$}
Initialize three matrices $A$, $A^\circ$ and $A^\bullet$\;
$A^\circ \gets \textsf{EstimateWithThreshold}(n, \oracle, 0)$\;
Initialize the threshold: $t \leftarrow \max(1,\lowerdistance)$\;
\While{$t \leq \upperdistance$}{
$t\leftarrow t\cdot (1+\epsilon/3)$\;
$A^\bullet \gets $\textsf{EstimateWithThreshold}($n, \oracle, t$)\;
$A\gets A + (A^\bullet - A^\circ)\cdot 3t$\;
$A^\circ\gets A^\circ\vee A^\bullet$
  }
  output A
 \caption{\textsf{EstimateMetric}($n, \oracle, \epsilon, \lowerdistance, \upperdistance$)}
 \label{alg:metric1}
\end{algorithm2e}

\begin{theorem}\label{thm:metric1}
Algorithm \ref{alg:metric1} solves metric estimation problem with approximation factor $3+\epsilon$, quantum query complexity $\tildorder(n^{5/3})$ and time complexity of $\tildorder(n^2)$ for an arbitrary small constant $\epsilon>0$.
\end{theorem}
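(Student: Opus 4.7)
The plan is to reduce the theorem to the already-established guarantees of \textsf{EstimateWithThreshold} (Theorem \ref{thm:metric0}) plus a clean discretization analysis. First I would verify correctness pair by pair. Fix indices $i,j$. If $d(p_i,p_j)=0$, then the initial call $A^\circ \gets \textsf{EstimateWithThreshold}(n,\oracle,0)$ already marks the entry, and because every subsequent $A^\bullet$ also marks it (any pair at distance $0$ is at distance $\le t$ for every $t\ge 0$), the increment $A^\bullet - A^\circ$ is zero at entry $(i,j)$ in every iteration, so $A_{i,j}$ correctly stays $0$.

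Next I would handle positive distances. Set $d:=d(p_i,p_j)$ and let $t^\star$ be the first threshold in the while loop for which \textsf{EstimateWithThreshold} outputs $1$ at entry $(i,j)$. By the guarantee of Theorem \ref{thm:metric0}, such a detection implies $d\le 3t^\star$, and the previous threshold $t^\star/(1+\epsilon/3)$ failed to detect it, which by the guarantee (every true positive is reported) forces $d>t^\star/(1+\epsilon/3)$, i.e.\ $t^\star<(1+\epsilon/3)d$. Because the algorithm writes $A_{i,j}\gets 3t^\star$ exactly at that iteration (the $A^\bullet-A^\circ$ mask is $1$ there and $0$ thereafter), we get the two-sided bound
\[
 d \;\le\; 3t^\star/3 \cdot 3 \;=\; 3t^\star \;<\; 3(1+\epsilon/3)d \;=\; (3+\epsilon)d,
\]
as required. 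The one tiny edge case is the very first iteration: then $t^\star=\max(1,l)(1+\epsilon/3)$ and one still has $d\ge \max(1,l)$, so $A_{i,j}=3t^\star\le (3+\epsilon)d$ holds.

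For the resource bounds I would simply multiply. The while loop runs $\log_{1+\epsilon/3}(u/\max(1,l)) = O((1/\epsilon)\log u) = \tildorder(1/\epsilon)$ iterations since $u=\poly(n)$, plus the single $t=0$ call up front. Theorem \ref{thm:metric0} bounds each invocation of \textsf{EstimateWithThreshold} by $O(n^{5/3})$ queries and $O(n^2)$ time, and the bookkeeping updates ($A\gets A+(A^\bullet-A^\circ)\cdot 3t$ and $A^\circ\gets A^\circ\vee A^\bullet$) cost $O(n^2)$ per iteration. Summing gives $\tildorder(n^{5/3}\poly(1/\epsilon))$ quantum queries and $\tildorder(n^2\poly(1/\epsilon))$ time.

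I do not expect a real obstacle here since the heavy lifting is already in Theorem \ref{thm:metric0}; the only subtlety worth double-checking is the two-sided analysis of $t^\star$ together with the first-iteration boundary case, and the observation that $A^\circ$ being the running \textsc{or} of previous $A^\bullet$ matrices guarantees each entry is written exactly once, so the final $A_{i,j}$ really equals $3t^\star$ rather than accumulating spurious contributions from later, larger thresholds.
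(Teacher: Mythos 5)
Your proof is correct and essentially matches the paper's own argument: the paper's paragraph preceding the theorem carries out the same per-threshold sandwiching (detection at $t^\star$ gives $d\le 3t^\star$; failure at $t^\star/(1+\epsilon/3)$ gives $t^\star<(1+\epsilon/3)d$), and the formal proof of Theorem~\ref{thm:metric1} then simply invokes the correctness of Algorithm~\ref{alg:metric0} together with the $\tildorder(\poly(1/\epsilon))$ iteration count, exactly as you do. On the one subtlety you flag --- that each entry is ``written exactly once'' --- be aware that Theorem~\ref{thm:metric0} does not actually promise monotonicity of the output across increasing thresholds: a false positive at threshold $t'$ (a pair at distance in $(t',3t']$) need not be reported at a slightly larger threshold $t\in(t',3t')$, so $(A^\bullet-A^\circ)_{i,j}$ could in principle be $-1$ at a later iteration. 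The paper's own prose acknowledges these ``false negatives that estimated correctly before,'' and its pseudocode has the same loose spot as your proof; the intended semantics (and an easy fix) is to assign each pair the value at its \emph{first} detection, i.e., clamp the mask $A^\bullet-A^\circ$ to $\{0,1\}$, which is what your two-sided bound already implicitly assumes.
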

\begin{proof}
The correctness of Algorithm \ref{alg:metric1} follows from that of Algorithm \ref{alg:metric0}. Moreover, Algorithm \ref{alg:metric1} runs Algorithm \ref{alg:metric0}, $\tildorder(\poly(1/\epsilon))$ times; therefore, the query complexity of Algorithm \ref{alg:metric1} is $\tildorder(n^{5/3+\epsilon}\poly(1/\epsilon))$. Furthermore, the running time of Algorithm \ref{alg:metric1} is $\tildorder(n^2\poly(1/\epsilon))$.
\end{proof}

In this section, we achieved an algorithm with subquadratic query complexity and approximation factor $3+\epsilon$ for any $\epsilon>0$ which is nearly optimal due to Theorem \ref{hardness}. In Section \ref{metric15}, we reduce the quantum query complexity to $O(n^{3/2+\epsilon})$, but the approximation factor grows to larger constants.

\subsection{A Constant Approximation Algorithm with $\tildorder(n^{3/2+\epsilon}\poly(1/\epsilon))$ Queries}\label{sub:alg2}
\label{metric15}
In Sections \ref{sub:hardness} and \ref{sub:alg1}, we showed that the best approximation factor that we can get with subquadratic oracle calls are bounded from below by $3$ and that a $3+\epsilon$ approximation is possible. In this section, we complement this result by showing that the query complexity can be further reduced to $\tildorder(n^{3/2+\epsilon}\poly(1/\epsilon))$, and moreover, we show that the required query complexity is at least $\Omega(n^{3/2})$ for any constant approximation factor.
To this end, we present a quantum algorithm with expected query complexity $\tildorder(n^{3/2+\epsilon}\poly(1/\epsilon))$ where the approximation factor and the expected running time are $\errormetric(\epsilon) = O(1/\epsilon)$ and $\tildorder(n^2\poly(1/\epsilon))$, respectively.

As stated before, the problem reduces to \thresholdestimation. Similar to what we did for Theorem \ref{thm:metric1}, we divide the vertices into two categories low degree and high degree.
Low degree vertices are easy to deal with; we simply list all of their neighbors using Grover's search and report all of them. If a vertex is high degree though, the algorithm needs to be more intelligent.

The overall idea is summarized in the following: we find a small group of vertices, namely \textit{representatives}, that hits at least one vertex from the neighborhood of any large degree vertex.
Using a standard argument of hitting sets, we can show that a subset of $\tildorder(n/\eta)$ vertices chosen uniformly at random, as \textit{representatives}, hits every neighborhood of size at least $\eta$ with high probability. Notice that these neighborhoods are at most $n$ fixed but unknown subsets. Other vertices outside \textit{representatives} are either low degree vertices, or \textit{followers} which have at least one neighbor in \textit{representatives}, or both.
Next, we run the following procedure: for every vertex $v_i$ which is not in \textit{representatives}, we first check if it is a follower. For a follower vertex which has at least one neighbor in \textit{representatives}, we select one such vertex and call that the leader of $v_i$. Otherwise, if there is no such neighbor, we conclude that $v_i$ is indeed low degree; thus we can find all its neighbors via Grover's search and update the solution. Next, we solve the problem recursively for all of the representatives.
For any $v_i$ and $v_j$ which are connected, we want the leader of $v_i$ and the leader of $v_j$ to become connected in the recursive result. As a consequence of the triangle inequality, we can achieve this by tripling the threshold.
Finally, we construct our solution based on the approximated solution of the representatives and the leader-follower relations, simply by connecting any two vertices, where their leaders are connected. The approximation factor increases with each recursion, but since the number of recursions is a constant, we achieve a constant approximation factor. Furthermore, in each recursion call, we can increase the degree threshold as far as it doesn't increase the query complexity too much. By increasing the degree threshold to its 3rd power, we have this property.
The number of vertices in nested recursions depleted, as soon as the degree threshold become larger that the number of vertices, in which case we treat all vertices as low degree, thus the next time we have zero vertices and the process finishes.

The pseudocode of the algorithm is shown below.

\begin{algorithm2e}
\KwData{The number of points in the metric space $\metric=\{p_1,p_2,\ldots,p_n\}$, oracle access to the distances between points, a threshold $t$, a small number $\epsilon$, and a degree threshold $\degreethreshold{}$}
\KwResult{An $n\times n$ matrix $A$, where $A_{i,j}$ is an $\errormetric(\epsilon)$ approximation of $d(p_i, p_j)$.}
 \If {$n=0$} {
  Output an empty matrix\;
 } \Else {
   Sample a hitting set $\mathcal{R}$ with $O((n/\degreethreshold{})\log n)$ points\;
   Initialize an $n\times n$ matrix $A$\;
    \For {all points in $\metric$ as $v_i$} {
	Find a neighbor of $v_i$ or $v_i$ itself in $\mathcal{R}$ and save it as $l(v_i)$ (the leader of $v_i$)\;
	\If {no such neighbor of $v_i$ exists and $v_i$ is not in $\mathcal{R}$} {
		List all neighbors of $v_i$\;
	}
}
$A' \gets \mathsf{FastEstimateWithThreshold}(\mathcal{R}, \oracle, 3t, 3\epsilon, (\degreethreshold{})^3$)\;
\For {all pairs of points in $\metric$ as $(v_i, v_j)$ where $l(v_i)\neq\varnothing$ and $l(v_j)\neq\varnothing$} {
	\If {$A'(l(v_i), l(v_j)) = 1$} {
		$A(v_i, v_j)\gets 1$\;
	}
}
$A\gets A \vee A'$\;
Output $A$\;
	}
 \caption{\textsf{FastEstimateWithThreshold}($\metric, \oracle, t, \epsilon, \degreethreshold{}$)}
 \label{alg:metric150}
\end{algorithm2e}

\begin{theorem}
\label{thm:metric150}
Algorithm \ref{alg:metric150} called with the threshold $t$, the parameter $\epsilon$ and the degree threshold $n^{2\epsilon}$ finds all distances less than $t$ with some false positive distances in range $[t, \errormetric(\epsilon)\cdot t]$ where $\errormetric(\epsilon) = O(1/\epsilon)$, in expected query complexity $\tildorder(n^{3/2+\epsilon})$ and expected time complexity $\tildorder(n^2)$.
\end{theorem}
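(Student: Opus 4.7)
My plan is to prove all three claims—correctness, expected query complexity $\tildorder(n^{3/2+\epsilon})$, and expected time complexity $\tildorder(n^2)$—by induction on the depth of the recursion in Algorithm \ref{alg:metric150}. First I would set up the recursion chain: letting level $0$ denote the outermost call, the parameters evolve deterministically as $\epsilon_k = 3^k \epsilon$, $\eta_k = n^{2\epsilon \cdot 3^k}$ (since the degree threshold is cubed at each recursion), and $t_k = 3^k t$, while the point count shrinks as $n_{k+1} = \tildorder(n_k/\eta_k)$ because the next level operates on the hitting set of the current one. The recursion bottoms out once $\eta_k$ exceeds $n_k$: then the sampled hitting set is essentially empty, every remaining vertex is classified as low-degree, and listing its (at most $n_k$) neighbors by Grover is exact. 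Solving $2\epsilon \cdot 3^k \geq 1$ places the termination level at $K = O(\log(1/\epsilon))$.

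For correctness I would invoke the standard hitting-set bound: a uniform random sample of $\Theta((n_k/\eta_k)\log n_k)$ vertices meets every fixed neighborhood of size $\geq \eta_k$ with probability $1 - 1/\poly(n)$. Conditioned on this good event, whenever the leader search for a vertex $v_i$ fails, $v_i$ necessarily has degree below $\eta_k$, so the subsequent Grover listing of its at most $\eta_k$ neighbors (Theorem \ref{grover}) is exact. Any true edge $(v_i,v_j)$ with $d(v_i,v_j) \leq t_k$ is therefore reported, either directly when one endpoint is low-degree, or via the recursion when both have leaders: the triangle inequality gives $d(l(v_i),l(v_j)) \leq 3t_k = t_{k+1}$, so the induction hypothesis applied to the recursive call flags the leader pair, which Algorithm \ref{alg:metric150} copies back to $(v_i,v_j)$. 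Conversely, writing $\alpha_k$ for the worst-case approximation factor at level $k$, any reported pair satisfies $d(v_i,v_j) \leq 2t_k + \alpha_{k+1}\cdot 3t_k$ by triangle inequality, yielding the recurrence $\alpha_k = 2 + 3\alpha_{k+1}$ with base case $\alpha_K = 1$, whose solution $\alpha_0 = O(3^K) = O(1/\epsilon)$ matches the claimed $\errormetric(\epsilon)$.

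For complexity I would charge the per-level cost and sum. At level $k$, Grover's search is invoked twice per vertex: once to seek a leader in $\mathcal{R}$ at cost $\tildorder(\sqrt{n_k/\eta_k})$, and—when this fails—once to list up to $\eta_k$ neighbors at cost $O(\sqrt{n_k\eta_k})$, giving a level total of $\tildorder(n_k\sqrt{n_k/\eta_k} + n_k\sqrt{n_k\eta_k})$. Substituting $n_k = \tildorder(n^{1-\epsilon(3^k-1)})$ (obtained by telescoping the $n_{k+1} = \tildorder(n_k/\eta_k)$ recurrence) and $\eta_k = n^{2\epsilon\cdot 3^k}$, the dominant term simplifies to $\tildorder(n^{3/2 + \epsilon(3 - 3^k)/2})$, which equals $\tildorder(n^{3/2+\epsilon})$ at $k=0$ and strictly decreases thereafter; summing over the $O(\log(1/\epsilon))$ levels yields the desired $\tildorder(n^{3/2+\epsilon})$ expected query bound. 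The classical bookkeeping at each level (initialising the matrix $A$ and translating leader matches back to the original points) costs $O(n_k^2)$, which sums to $\tildorder(n^2)$.

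The main obstacle will be making the probabilistic argument hold uniformly across all levels. I will need to inflate the hidden constant in the hitting-set sample size so that a single union bound over $O(\log(1/\epsilon))$ recursion levels and the at most $n$ distinct high-degree neighborhoods per level still leaves overall success probability $1 - 1/\poly(n)$; and I will need to control the contribution to the \emph{expected} query time from the rare failure event, which I would handle by falling back to the trivial $O(n^2)$ algorithm on failure, whose cost is absorbed into the expectation since the failure probability is inverse-polynomial. A secondary but essential piece of bookkeeping is verifying the exponent identity $n_k^{3/2}\sqrt{\eta_k} = n^{3/2+\epsilon(3-3^k)/2}$, which reduces to a direct computation once the closed form for $n_k$ is in hand.
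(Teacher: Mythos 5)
Your proposal is correct and follows essentially the same route as the paper's proof: sample a random hitting set, classify vertices as representatives, followers, or low-degree, recurse on the representatives with tripled threshold and cubed degree threshold, and account for the geometric decay in $n_k$ together with the exponential growth in $\eta_k$ to bound the recursion depth by $O(\log(1/\epsilon))$ and the level-$k$ query cost by $\tildorder(n^{3/2+\epsilon(3-3^k)/2})$, dominated at $k=0$. Two small divergences worth noting: you handle the low-probability hitting-set failure by falling back to the brute-force $O(n^2)$ scan, while the paper restarts the sampling (a Las Vegas argument), and your approximation recurrence $\alpha_k = 2 + 3\alpha_{k+1}$ is the tight triangle-inequality bound, whereas the paper states it as $2(1+3\errormetric(3\epsilon))$; both choices give $O(1/\epsilon)$ and neither affects the asymptotics.
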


\begin{proof}
As aforementioned, we deal with three groups of vertices: \textit{representatives}, \textit{followers} and low degree vertices. Low degree vertices may intersect with the other two, but each vertex is at least in one group. Here, for any low degree vertex outside other two groups, we find its neighborhood explicitly. Therefore, to show the correctness of the algorithm, we focus on two groups of followers and representatives. First, we show that we correctly find the group of representatives. A subset $\mathcal{R}$ of size $2(n/\degreethreshold{})\ln n$ chosen uniformly at random, misses one fixed neighborhood of size at least $\degreethreshold{}$ with a probability of at most $(1-\frac{\degreethreshold{}}{n})^{2(n/\degreethreshold{})\ln n}\approx 1/\mathrm{e}^{2\ln n}=1/n^2$. For all neighborhoods, which are at most $n$ fixed subsets of size at least $\degreethreshold{}$, the probability of missing at least one neighboorhood is at most $n\cdot (1 /{n^2}) = 1/n$ by the union bound. If $\mathcal{R}$ misses at least one large neighborhood, we can reset the algorithm. A standard argument of Las Vegas algorithms ensures that the expected query complexity and expected running time is no more that $\frac{n}{n-1}$ times the query complexity and running time of one execution, respectively (Exercise 1.3 of \cite{motwani1995randomized}). Now we can continue assuming we have leader-follower relations.
Recall that for every follower $v_i$ we select one of its neighbors in $\mathcal{R}$ and call that vertex the leader of $v_i$. To simplify the last part of the algorithm, for any $v_i$ in $\mathcal{R}$, we call $v_i$ as the leader of itself. Thus, all followers and representatives have leaders.

Furthermore, we solve the problem for the group of representatives recursively, with different parameters. We triple the threshold in each recursion. Call the leader of two connected vertices $v_i$ and $v_j$ as $r_i$ and $r_j$, respectively. By the triangle inequality we have $d(r_i,r_j) \leq d(r_i,v_i)+d(v_i,v_j)+d(v_j,r_j) \leq 3t$. Thus the leader of any two connected vertices is connected by the new threshold; hence we find all distances less than $t$, perhaps with some false positives.

Before we compute the approximation factor $\errormetric(\epsilon)$, we determine the number of nested recursion calls. We call the number of vertices in the $i$'th recursion call $n_i$. Note that $n_i$ is the size of representatives group of the $(i-1)$'th recursion. Thus, we have $n_i = O((n_{i-1}/\degreethreshold{i})\log n_{i-1})$. Using induction, we can show that the degree threshold in $i$'th recursion call is $\degreethreshold{i} = n^{(2\cdot 3^{i})\epsilon}$ and therefore, $n_i = O(n^{1-(3^{i}-1)\epsilon}\cdot O(\log^i(n)))$. The number of vertices becomes zero in $i$'th recursion, where $1-(3^{i}-1)\epsilon<0$ or $i > \log_3(1+1/\epsilon)$. Hence, we have at most $k(\epsilon) = \log_3(1/\epsilon)+1$ nested recursion calls, which is independent of $n$. Notice that $k(3\epsilon) = k(\epsilon) - 1$ and $k(3^{k(\epsilon)}\epsilon) = 0$.

What is remained is to compute the approximation factor $\errormetric(\epsilon)$. The maximum distance of a pair of vertices that we report an edge between them is at most $2(1+3\errormetric(3\epsilon))$ times the threshold. We know that $\errormetric(3^{k(\epsilon)}\epsilon) = 1$, therefore $\errormetric(\epsilon)\leq 9/\epsilon = O(1/\epsilon)$.

The query complexity of Grover's search in the $i$'th recursion is at most $n_i O(\sqrt{|\mathcal{R}|})$ to find the leader of each point, plus $n_i O(\sqrt{n_i\cdot \degreethreshold{i}})$ to find all neighbors of some low degree points. This is equal to $O(n^{3/2-(\frac{5\cdot 3^{i}-3}{2})\epsilon}\polylog(n)) + O(n^{3/2-(\frac{3^{i}-3}{2})\epsilon}\polylog(n))$. Notice that the latter term dominates the former, and the query complexity for $i=0$ dominates all of the recursions; therefore the query complexity of Algorithm \ref{alg:metric150} is at most $O(n^{3/2+\epsilon})$.

The time complexity is at most $O(n^2)$ in each phase. Thus, the time complexity is $O(n^2\polylog(n))$.
\end{proof}

In what follows, we complete our algorithm using Algorithm \ref{alg:metric150} with several thresholds. This is the same as Algorithm \ref{alg:metric1} with minor differences such as line 8 where $3$ has been replaced with $\errormetric(\epsilon)$.

\begin{algorithm2e}
 \KwData{The number of points in the metric space $\metric$, oracle access to the distances between points, a small number $\epsilon>0$, a lower bound, and an upper bound for the distances}
 \KwResult{An $n\times n$ matrix $A$, where $A_{i,j}$ is a $\errormetric(\epsilon)$ approximation of $d(p_i, p_j)$ in $\metricspace$}
Initialize the distance estimation matrix $A$, $A^\circ$ and $A^\bullet$\;
$A^\circ \gets$\textsf{FastEstimateWithThreshold}($n, \oracle, 0, \epsilon, n^{2\epsilon}$)\;
Initialize the threshold: $t \leftarrow \max(1,\lowerdistance)$\;
\While{$t \leq \upperdistance$}{
$t\leftarrow t\cdot (1+\epsilon)$\;
$A^\bullet \gets $\textsf{FastEstimateWithThreshold}($n, \oracle, t, \epsilon, n^{2\epsilon}$)\;
	$A\gets A + (A^\bullet - A^\circ)\cdot \errormetric(\epsilon)$\;
$A^\circ\gets A^\circ\vee A^\bullet$\;
  }
  Output $A$\;
 \caption{\textsf{FastEstimateMetric}($\metric, \oracle, \epsilon, \lowerdistance, \upperdistance$)}
 \label{alg:metric151}
\end{algorithm2e}

\begin{theorem}
\label{thm:metric151}
Algorithm \ref{alg:metric151} solves the metric estimation with approximation factor $\errormetric(\epsilon) = O(1/\epsilon)$, with query complexity $\tildorder(n^{3/2+\epsilon}\poly(1/\epsilon))$ in time $\tildorder(n^2 \poly(1/\epsilon))$.
\end{theorem}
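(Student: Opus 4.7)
The plan is to mirror the proof of Theorem \ref{thm:metric1} essentially verbatim, with the subroutine \textsf{EstimateWithThreshold} replaced by \textsf{FastEstimateWithThreshold} and the factor $3$ in the approximation replaced by $\errormetric(\epsilon)$. The discretization is the same: because all distances lie in $[l,u]$ with $u = \poly(n)$, geometrically spaced thresholds $t_k = \max(1,l)\cdot(1+\epsilon)^k$ partition the range into $\tildorder(\poly(1/\epsilon))$ buckets. For each threshold we invoke \textsf{FastEstimateWithThreshold} with parameters $(t_k,\epsilon,n^{2\epsilon})$, whose correctness guarantee (Theorem~\ref{thm:metric150}) is exactly what we need: the returned $0/1$ matrix $A^\bullet$ has a $1$ at every pair of points at true distance $\le t_k$, and any extra $1$ corresponds to a pair at true distance $\le \errormetric(\epsilon)\cdot t_k$.

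For correctness of the estimate, I would track a pair $(p_i,p_j)$ and let $k^*$ be the first index at which $A^\bullet_{i,j}=1$. By the guarantee of \textsf{FastEstimateWithThreshold} the true distance satisfies $d(p_i,p_j) \le \errormetric(\epsilon)\, t_{k^*}$, and because the pair was not marked at step $k^*-1$ (which would have caught every pair at distance $\le t_{k^*-1}$), we also get $d(p_i,p_j) > t_{k^*-1} = t_{k^*}/(1+\epsilon)$. The assignment $A \gets A + (A^\bullet - A^\circ)\cdot \errormetric(\epsilon)$ sets $A_{i,j}$ to $\errormetric(\epsilon)\cdot t_{k^*}$ on the first iteration where the pair appears and never touches it again (since $A^\circ$ is the running OR of the $A^\bullet$'s). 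Hence
\[
d(p_i,p_j) \;\le\; A_{i,j} \;=\; \errormetric(\epsilon)\, t_{k^*} \;\le\; \errormetric(\epsilon)(1+\epsilon)\, d(p_i,p_j),
\]
which is an $\errormetric(\epsilon)\cdot(1+\epsilon) = O(1/\epsilon)$ approximation; this is still of the form $\errormetric(\epsilon)$ after absorbing the constant factor into the hidden $O(\cdot)$. The pair $(i,j)$ with $d(p_i,p_j)=0$ is handled by the initial call with threshold $0$.

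For the complexity, Theorem~\ref{thm:metric150} gives expected query cost $\tildorder(n^{3/2+\epsilon})$ and expected running time $\tildorder(n^2)$ per invocation. The outer loop runs $\log_{1+\epsilon}(u/l) = \tildorder(\poly(1/\epsilon))$ times, so the total expected query complexity is $\tildorder(n^{3/2+\epsilon}\poly(1/\epsilon))$ and the total expected running time is $\tildorder(n^2 \poly(1/\epsilon))$, as claimed.

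The only place requiring a little care is that \textsf{FastEstimateWithThreshold} is Las Vegas rather than deterministic: the hitting set sampled inside may miss a large neighborhood and the routine restarts. This was already absorbed into the bounds of Theorem~\ref{thm:metric150} via the standard Las Vegas argument, so the outer loop inherits the same bounds in expectation; summing the geometric series of restart probabilities across the $\tildorder(\poly(1/\epsilon))$ outer iterations does not change the stated $\tildorder$ order. I do not expect any genuine obstacle here — the substantive work has already been done in Theorem~\ref{thm:metric150}, and the present theorem is essentially a bookkeeping wrap-up of the discretization layer on top.
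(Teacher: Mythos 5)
Your proof is correct and follows essentially the same route as the paper's: the paper also reduces the theorem to Theorem~\ref{thm:metric150} via the geometric-threshold discretization, multiplying per-call query/time bounds by the $\tildorder(\poly(1/\epsilon))$ number of thresholds. The paper's proof is a brief two-sentence deferral to Theorem~\ref{thm:metric150} and Theorem~\ref{thm:metric1}, whereas you spell out the bookkeeping (first-marking index $k^*$, the OR-accumulator $A^\circ$, absorption of the extra $1+\epsilon$ factor, the Las Vegas expectation) explicitly; the substance is the same.
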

\begin{proof}
The correctness of Algorithm \ref{alg:metric151} follows from that of Algorithm \ref{alg:metric150}, the same as we did in Theorem \ref{thm:metric1}. Moreover, Algorithm \ref{alg:metric151} runs Algorithm \ref{alg:metric150}, $\tildorder(\poly(1/\epsilon))$ times; therefore, the query complexity of Algorithm \ref{alg:metric1} is $\tildorder(n^{3/2+\epsilon}\poly(1/\epsilon))$. Furthermore, the running time of Algorithm \ref{alg:metric1} is $\tildorder(n^2 \poly(1/\epsilon))$.
\end{proof}

\subsection{An $\Omega(n^{3/2})$ Time Lower Bound}
Last but not least, we show that the query complexity of metric estimation cannot be reduced any further, so long as the approximation factor is constant, i.e., we need at least $\Omega(n^{3/2})$ queries to approximate metric estimation within a constant factor. We use Ambainis's lower bound technique \cite{Ambainis:2000:QLB:335305.335394}. 

\begin{theorem}[proven in~\cite{Ambainis:2000:QLB:335305.335394}, Theorem 6]\label{amba}
Let $f(x_1, \ldots, x_n)$ be a function of $n$ variables 
with values from some finite set
and $X, Y$ be two sets of inputs such that $f(x)\neq f(y)$
if $x\in X$ and $y\in Y$.
Let $R\subset X \times Y$ be such that
\begin{enumerate}
\item
For every $x\in X$, there exist at least $m$ different $y\in Y$ such that
$(x, y)\in R$.
\item
For every $y\in Y$, there exist at least $m'$ different $x\in X$ such that
$(x, y)\in R$.
\end{enumerate}
Let $l_{x, i}$ be the number of $y\in Y$ such that $(x, y)\in R$ and $x_i\neq y_i$
and $l_{y, i}$ be the number of $x\in X$ such that $(x, y)\in R$ and $x_i\neq y_i$.
Let $l_{max}$ be the maximum of $l_{x, i}l_{y, i}$ over all $(x, y)\in R$
and $i\in\{1, \ldots, N\}$ such that $x_i\neq y_i$.
Then, any quantum algorithm computing $f$ uses  
$\Omega(\sqrt{\frac{m m'}{l_{max}}})$ queries.
\end{theorem}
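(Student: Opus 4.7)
The plan is to prove this statement via the quantum adversary method: we track how well the algorithm's state distinguishes inputs in $X$ from inputs in $Y$ along the pairs of $R$, and we show that this distinguishing ability cannot improve too fast per query. Concretely, fix any quantum algorithm $\alg$ computing $f$ with $T$ queries, and let $|\psi_x^k\rangle$ denote its state after $k$ queries on input $x$. Because the input-independent unitaries applied between queries preserve inner products, all the divergence between $|\psi_x^k\rangle$ and $|\psi_y^k\rangle$ is attributable to the oracle calls. Define the progress function
\begin{equation*}
W^k \;=\; \sum_{(x,y)\in R} \bigl|\langle \psi_x^k \,|\, \psi_y^k\rangle\bigr|.
\end{equation*}

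First I would establish the boundary values. Since the initial state is input-independent, $W^0 = |R|$, and conditions (1) and (2) give $|R| \geq m\,|X|$ and $|R| \geq m'\,|Y|$. At termination, because $f(x) \neq f(y)$ for every $(x,y) \in R$ and the measurement must distinguish them with bounded error, a standard fact about trace distance versus inner-product overlap forces $|\langle \psi_x^T | \psi_y^T\rangle| \leq c < 1$ for some constant depending only on the error, so $W^T \leq c\,|R|$. Hence $W^0 - W^T = \Omega(|R|)$.

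The core technical step is to upper bound the per-query drop $W^{k-1} - W^k$. The oracle $O_x$ differs from $O_y$ only on basis vectors whose query register is an index $i$ with $x_i \neq y_i$; expanding the change of $\langle \psi_x^k | \psi_y^k\rangle$ coordinate-by-coordinate and applying Cauchy--Schwarz across the bipartite structure of $R$ restricted to each such $i$, one obtains
\begin{equation*}
W^{k-1} - W^k \;\leq\; 2\sqrt{l_{\max}\, m\, m'}
\end{equation*}
after the normalization supplied by conditions (1) and (2). Combining this with $W^0 - W^T = \Omega(|R|) = \Omega(\sqrt{m\,m'\,|X|\,|Y|})$ and solving for $T$ yields $T = \Omega(\sqrt{m\,m'/l_{\max}})$.

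The hard part is the per-query analysis: getting $l_{\max}$ (rather than a looser quantity such as $\max_i l_{x,i} \cdot \max_i l_{y,i}$) requires performing Cauchy--Schwarz on the bipartite graph between $X$ and $Y$ induced by each fixed query coordinate $i$ before collapsing $\sqrt{l_{x,i}\,l_{y,i}} \leq \sqrt{l_{\max}}$; carrying out this step in the wrong order, or failing to pair amplitudes across the two sides correctly, loses factors that destroy the bound. This combinatorial bookkeeping is the technical heart of the argument and is exactly the content of the original proof in~\cite{Ambainis:2000:QLB:335305.335394}, which this proposal mirrors.
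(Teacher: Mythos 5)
The paper does not prove this theorem; it invokes it as a black box from Ambainis's work. Your sketch captures the right framework --- the quantum adversary argument with progress measure $W^k = \sum_{(x,y)\in R}\bigl|\langle\psi_x^k\mid\psi_y^k\rangle\bigr|$, the boundary facts $W^0 = |R|$ and $W^T \le (1-\Omega(1))\,|R|$, and the observation $|R|\ge\sqrt{mm'\,|X|\,|Y|}$. But the claimed per-query bound
$W^{k-1}-W^{k}\;\le\;2\sqrt{l_{max}\,m\,m'}$
is not a valid inequality. Take $R$ to be a perfect matching between $X$ and $Y$ with $|X|=|Y|=K$, each pair $(x_j,y_j)$ differing in a single, distinct coordinate $i_j$, so $m=m'=1$ and $l_{max}=1$. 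An algorithm whose query distribution on input $x_j$ places all its amplitude on index $i_j$ (and likewise for $y_j$) drives the overlap of every pair down by roughly $2$ in one step, so the total drop is close to $2K$, whereas your bound asserts it is at most $2$. The inequality fails by a factor of $K$.

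The Cauchy--Schwarz bookkeeping actually yields $W^{k-1}-W^{k}\le 2\sqrt{l_{max}\,|X|\,|Y|}$: split $\beta_{x,i}\beta_{y,i}$ by the weights $(l_{y,i}/l_{x,i})^{1/4}$ and its reciprocal, use $l_{x,i}l_{y,i}\le l_{max}$ to bound the inner sums by $\sqrt{l_{max}}\sum_x\beta_{x,i}^2$ and $\sqrt{l_{max}}\sum_y\beta_{y,i}^2$, then apply Cauchy--Schwarz once more over $i$ to collapse to $|X|$ and $|Y|$. Pairing this with $W^0-W^T=\Omega(|R|)=\Omega(\sqrt{mm'|X||Y|})$ cancels the $\sqrt{|X||Y|}$ factors and gives $T=\Omega(\sqrt{mm'/l_{max}})$ as required. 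Your final arithmetic reaches the right exponent only because the false bound is \emph{stronger} than the true one (since $mm'\le|X||Y|$); the step that would establish it is the technical heart of the proof and, as stated, does not hold.
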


Now we use an intermediate problem to prove the desired lower bound. A permutation matrix is a boolean $n\times n$ matrix, which has exactly one entry $1$ in each row and each column. It corresponds to a permutation $\pi$ where entries of 1 are in the form of $(i, \pi(i))$. The sign of a permutation matrix is defined as the sign of its corresponding permutation. The next lemma about the problem of determining the sign of a permutation matrix is the main part of out lower bound.

\begin{lemma}
\label{perm}
Any quantum algorithm which takes an $n\times n$ permutation matrix as the input and outputs the sign of the permutation matrix has a query complexity of at least $\Omega(n^{3/2})$.
\end{lemma}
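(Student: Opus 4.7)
The plan is to invoke Ambainis's adversary method (Theorem \ref{amba}) with $X$ the set of $n\times n$ even permutation matrices and $Y$ the set of odd permutation matrices. Since the sign function takes distinct values on $X$ and $Y$, the first hypothesis of the theorem is automatic. The relation $R\subset X\times Y$ will consist of pairs $(x,y)$ such that $y$ is obtained from $x$ by swapping two rows (equivalently, by composing the underlying permutation with a transposition). Because any transposition flips parity, every such pair indeed lies in $X\times Y$.

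Next I would compute the four parameters of the theorem. Every permutation matrix has exactly $\binom{n}{2}$ neighbors under this relation, one for each choice of two rows to swap, so $m=m'=\binom{n}{2}=\Theta(n^2)$. For the entrywise counts, I would fix a position $(r,c)$ at which $x$ and $y$ disagree and split into two cases. If $x_{r,c}=1$, so that $\pi_x(r)=c$, then the transposition responsible for the disagreement must move the $1$ out of row $r$; this can be accomplished by swapping row $r$ with any of the other $n-1$ rows, giving $l_{x,(r,c)}=n-1$. If instead $x_{r,c}=0$, then in order to have $y_{r,c}=1$ the transposition must bring into row $r$ the unique row of $x$ whose $1$ sits in column $c$, so the transposition is forced and $l_{x,(r,c)}=1$. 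Symmetric reasoning applies to $l_{y,(r,c)}$. In either case the product equals $n-1$, so $l_{max}=n-1$.

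Plugging these parameters into Ambainis's bound yields
\[
\Omega\Bigl(\sqrt{\tfrac{mm'}{l_{max}}}\Bigr)=\Omega\Bigl(\sqrt{\tfrac{n^4}{n}}\Bigr)=\Omega(n^{3/2}),
\]
which is the desired lower bound.

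The only delicate point I anticipate is the entry-by-entry counting above: one must be careful to distinguish whether a given position carries a $1$ or a $0$ in $x$, since the two cases yield very different counts (a factor of $n-1$ versus $1$), and one must verify that it is precisely the four positions forming the ``swapped $2\times 2$ block'' where $x$ and $y$ disagree (and, in particular, that they always do disagree in four positions, since $\pi_x(i)\neq \pi_x(j)$ whenever $i\neq j$). Once these combinatorial facts are pinned down, the parameters drop directly into Theorem \ref{amba} and the $\Omega(n^{3/2})$ bound follows. The lemma will presumably then be combined with a reduction converting a hypothetical faster constant-approximation algorithm for metric estimation into a faster algorithm for computing the sign of a permutation matrix, matching the upper bound of Theorem \ref{thm:metric151} up to sub-polynomial factors; but that reduction lies beyond the statement at hand.
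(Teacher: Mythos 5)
Your proposal is correct and follows essentially the same route as the paper: both invoke Ambainis's adversary theorem with $X$ and $Y$ the permutation matrices of opposite sign, $R$ the row-swap (transposition) relation, $m=m'=\binom{n}{2}$, and $l_{\max}=n-1$, yielding $\Omega(n^{3/2})$. Your accounting of the two cases $x_{r,c}=1$ versus $x_{r,c}=0$ is in fact stated a bit more carefully than in the paper, which tacitly restricts to positions where $x$ and $y$ disagree.
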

\begin{proof}
To apply Theorem \ref{amba}, we use a single index to address an entity instead of two indices. Assume $f(x_1, x_2, . . ., x_{n^2})$ is a function which takes a permutation matrix as input and outputs a value in $\{-1, 1\}$ as the sign of the matrix. Define $X$ as the set of permutation matrices with sign $-1$, $Y$ as the set of permutation matrices with sign $1$ and $R\subset X\times Y$ such that $(x,y)\in R$ iff their corresponding matrices can be transformed to the other with a swap of just two rows. Therefore, we have $m=m'=\binom n 2$. For an $i$ we have $l_{x,i}=n-1$ and $l_{y,i}=1$ if $x_i=1$ and $l_{x,i}=1$ and $l_{y,i}=n-1$ if $x_i=0$, thus $l_{max} = n-1$. Therefore by Theorem \ref{amba}, every quantum algorithm to solve this problem has a query complexity of at least $\Omega\Big(\sqrt{\frac{{\binom n 2}^2}{n-1}}\Big) = \Omega(n^{3/2}).$
\end{proof}

The problem of determining the sign of an $n\times n$ permutation matrix can be easily reduced to our problem, by constructing a bipartite graph with parts $X$ and $Y$, $n$ vertices in each part and $n$ edges that form a complete matching between $X$ and $Y$. Every matching has a corresponding permutations and vice versa. Therefore, we have the following theorem.

\begin{theorem}
\label{thm:lowerquery}
Any quantum algorithm which estimates distances of a metric space of $n$ points with a constant approximation factor has a query complexity of at least $\Omega(n^{3/2})$.
\end{theorem}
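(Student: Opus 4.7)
The plan is to carry out the reduction sketched right before the statement: given an $n\times n$ permutation matrix $P$ encoding a permutation $\pi$ (as in Lemma~\ref{perm}), I will build a metric space on $2n$ points whose distance oracle can be simulated with $O(1)$ queries to $P$, and then show that any constant-factor estimation of that metric reveals $\pi$ (and hence its sign). An $o(n^{3/2})$-query algorithm for metric estimation would therefore yield an $o(n^{3/2})$-query quantum algorithm for the permutation-sign problem, contradicting Lemma~\ref{perm}.

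For the construction, fix the constant approximation factor $\alpha$ claimed by the hypothetical estimator and choose an integer constant $C>\alpha$ (for concreteness, $C=\lceil\alpha\rceil+1$). Take $\metric=X\cup Y$ with $X=\{x_1,\dots,x_n\}$ and $Y=\{y_1,\dots,y_n\}$, set $d(u,u)=0$, $d(x_i,y_{\pi(i)})=1$ for every $i$, and $d(u,v)=C$ on every other unordered pair of distinct points. The crucial observation is that because $\pi$ is a \emph{bijection}, the distance multiset of any triple of distinct points is either $(C,C,C)$ or $(1,C,C)$: the triple $(1,1,C)$ cannot arise since no point has two matched partners. Both surviving patterns satisfy the triangle inequality for any $C\geq 1$, so $\metricspace$ is a valid metric space with integer distances in $[1,C]$.

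Now suppose some algorithm returns an $\alpha$-approximation $d^\ast$, meaning $d(u,v)\leq d^\ast(u,v)\leq \alpha\,d(u,v)$. Then $d^\ast(x_i,y_j)\in[1,\alpha]$ when $\pi(i)=j$ and $d^\ast(x_i,y_j)\in[C,\alpha C]$ otherwise, and these two intervals are disjoint by the choice of $C$. The rule ``declare $\pi(i)=j$ iff $d^\ast(x_i,y_j)\leq\alpha$'' therefore recovers $\pi$ exactly, after which its sign is computed classically in $O(n)$ time by cycle decomposition. Each oracle call to $d$ is simulable with at most one query to $P$: cross-side queries $d(x_i,y_j)$ reduce to reading $P_{i,j}$ to output $1$ or $C$, while same-side queries $d(x_i,x_j)$ and $d(y_i,y_j)$ need no query at all. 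Plugging $N=2n$ into a hypothetical $o(N^{3/2})$ metric-estimation algorithm gives $o(n^{3/2})$ queries to $P$, contradicting Lemma~\ref{perm}. The main (and essentially only) technical point is the triangle-inequality verification, which hinges on the bijection property of $\pi$ to rule out the bad triple $(1,1,C)$; once $C$ is chosen to exceed $\alpha$, the remainder of the argument is mechanical.
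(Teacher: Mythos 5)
Your proposal is correct and follows essentially the same route as the paper: reduce permutation-sign (Lemma~\ref{perm}) to metric estimation by placing row-points and column-points, setting matched cross-pair distances to a small value and all other pairs to a large value, and observing that one oracle call to the matrix simulates one distance query. The only substantive difference is your choice of the large distance: you use a constant $C>\alpha$ tied to the approximation factor, whereas the paper uses $n^2$; both are valid (the bijection argument rules out the bad triple $(1,1,\cdot)$ in either case), and your choice has the minor advantage of showing the lower bound already holds for metrics of constant aspect ratio. You also spell out the triangle-inequality check and the recovery of $\pi$ from the estimates, which the paper leaves implicit.
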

\begin{proof}
We simply reduce the problem of determining the sign of a permutation matrix to this problem. Assume $n$ is an even number. For an instance of a $n/2\times n/2$ permutation matrix $\mathcal{A}$, we construct a metric space $\metric$ with $n$ points, $r_i$ for row $i$ and $c_j$ for column $j$ of the matrix. Make the distance between $r_i$ and $c_j$ equal to $1$ where $\mathcal{A}_{i,j}=1$ and a distance of $n^2$ otherwise. The distances meet the necessary conditions. Notice that we do not construct the distances, we construct an oracle which invokes the oracle of $\mathcal{A}$ at most one time. Using Lemma \ref{perm}, the query complexity is at least $\Omega((n/2)^{3/2})=\Omega(n^{3/2})$.
\end{proof}

\section{Edit Distance}\label{editdistance}
In this section, we use the results of Section \ref{metric} to design a quantum approximation algorithm for the edit distance problem. Our algorithm has an approximation factor of $7+\epsilon$ for an arbitrarily small number $\epsilon>0$ and time complexity $\tildorder(n^{2-1/7}\poly(1/\epsilon))$. The outline of the algorithm is presented in Section \ref{section:ourResults}. Here we provide detailed proofs of the lemmas and theorems that are previously used for edit distance.

\begin{lemma}\label{dp}
Given a matrix of edit distances between the substrings corresponding to every pair of windows of $W_1$ and $W_2$, one can compute the optimal window-compatible transformation of $s_1$ into $s_2$ in time $O(n+|W_1||W_2|)$.
\end{lemma}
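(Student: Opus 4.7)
The plan is to set up a two-dimensional dynamic program indexed by pairs of window indices. First, I would decouple the cost of any window-compatible transformation into an additive formula depending only on the matched pairs of windows. Fix witness sequences $S=\langle w_{i_1},\ldots,w_{i_k}\rangle$ in $W_1$ and $S'=\langle w'_{j_1},\ldots,w'_{j_k}\rangle$ in $W_2$. The definition forces every character of $s_1$ lying outside $\bigcup_t w_{i_t}$ to be deleted (otherwise it would be substituted into an old character of $s_2$ that either sits outside all chosen target windows or sits inside some $w'_{j_t}$ without originating in $w_{i_t}$, both forbidden) and, symmetrically, every character of $s_2$ outside $\bigcup_t w'_{j_t}$ to be inserted. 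Inside a matched pair $(w_{i_t}, w'_{j_t})$, the induced sub-script turns $w_{i_t}$ into $w'_{j_t}$ and so contributes at least $\edit(w_{i_t}, w'_{j_t})$, and this lower bound is attained by running an optimal alignment of the two windows independently. Since each window has length $l$, the minimum cost over window-compatible transformations witnessed by $(S,S')$ is
$$\sum_{t=1}^{k}\edit(w_{i_t},w'_{j_t}) + (|s_1|-kl) + (|s_2|-kl) = \sum_{t=1}^{k}\edit(w_{i_t},w'_{j_t}) + n - 2kl.$$

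Next, I would formulate the DP. Let $\gamma^{*}=\lceil l/g\rceil$ be the smallest integer for which $W_1[i]$ and $W_1[i+\gamma^{*}]$ are non-overlapping; this is a fixed constant of the construction. Define $f(i,j)$ as the minimum of $\sum_t \edit(w_{i_t},w'_{j_t})-2kl$ over all pairs of non-overlapping index sequences with all matched $W_1$-indices at most $i$ and all matched $W_2$-indices at most $j$, with $f(i,j)=0$ whenever $i\leq 0$ or $j\leq 0$ (corresponding to $k=0$). The recurrence
$$f(i,j)=\min\bigl\{\,f(i-1,j),\ f(i,j-1),\ f(i-\gamma^{*},j-\gamma^{*})+\edit(W_1[i],W_2[j])-2l\,\bigr\}$$
captures the three exhaustive cases ``do not use $W_1[i]$ as the last matched window in $W_1$'', ``do not use $W_2[j]$ as the last matched window in $W_2$'', and ``match $(W_1[i],W_2[j])$ as the last pair''; the shift by $\gamma^{*}$ in the third option enforces non-overlap in both coordinates simultaneously. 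The final cost is $n+f(|W_1|,|W_2|)$, and a standard backtracking pass combined with the individual optimal alignments inside each matched pair yields the explicit transformation.

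For the running time, the table has $|W_1|\cdot|W_2|$ entries, each filled in $O(1)$ time via a constant number of look-ups into the supplied edit-distance matrix. Reading the input strings and emitting the insertions and deletions outside the matched windows takes an additional $O(n)$, for the claimed $O(n+|W_1|\cdot|W_2|)$ total. The main obstacle I anticipate is justifying the cost decomposition in the first step: I need to confirm that no optimal window-compatible script can ``share'' an operation across window boundaries, so that the cost within each matched pair really is an independent edit-distance instance. This is intuitive from the two clauses of the window-compatibility definition, but it must be written out carefully to rule out unusual scripts that, for instance, carry a character across the boundary between a chosen window and an uncovered gap.
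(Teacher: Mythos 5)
Your proof is correct and takes essentially the same approach as the paper: a two-dimensional dynamic program indexed by window pairs, with the identical three-way recurrence (skip the last $W_1$ window, skip the last $W_2$ window, or match the pair with a shift of $\lceil l/g\rceil$ in both coordinates). The only cosmetic difference is bookkeeping — the paper's table $c_{i,j}$ tracks the running cost of transforming the prefix of $s_1$ covered by $w_1,\ldots,w_i$ into the prefix of $s_2$ covered by $w'_1,\ldots,w'_j$ (adding $g$ per skipped window), whereas you track the ``savings'' $\sum_t \edit(w_{i_t},w'_{j_t})-2kl$ and add $n$ at the end; the explicit cost-decomposition you give up front is the content of the paper's per-case justification of its recurrence, and the boundary concern you flag is resolved exactly as you suspect, by clauses (i) and (ii) of the window-compatibility definition together with the non-overlap of the witness windows.
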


\begin{proof}
We take a dynamic programming approach to find the optimal window-compatible transformation of the two strings. Suppose $W_1$ has $k$ and $W_2$ has $k'$ windows. We recall that $W_1 = \langle w_1, w_2, . . ., w_k\rangle$ and $W_2=\langle w'_1, w'_2, . . ., w'_{k'}\rangle$ are collections of windows for $s_1$ and $s_2$, respectively, with window size $l$ and gap size $g$. We note that every window $w$ corresponds to a subinterval of $[1,n]$, thus we can use a linear time sorting algorithm, such as bucket-sort, and sort all the windows in time $O(n)$. Therefore, we can assume that the windows in $W_1$ and $W_2$ are sorted according to their right side.

Now for every $1\leq i\leq k$ and $1\leq j\leq k'$ we define $c_{i,j}$ to denote the optimal window-compatible transformation of the first $l+(i-1)\cdot g$ characters of $s_1$ into the first $l+(j-1)\cdot g$ characters of $s_2$. These suffixes of $s_1$ and $s_2$ correspond to $\langle w_1, w_2, . . ., w_i\rangle$ and $\langle w'_1, w'_2, . . ., w'_j\rangle$, respectively. For the sake of simplicity, we define $c_{i,0}=l+(i-1)\cdot g$, which is the cost of only deleting, and $c_{0,j}=l+(j-1)\cdot g$, which is the cost of only inserting. For every $1\leq i\leq k$ and $1\leq j\leq k'$ the following recursive formula holds:

\begin{align}
	c_{i,j} = \min \Big \{ &c_{i-1,j}+g, \enspace c_{i,j-1}+g, 
	 c_{i-\lceil l/g \rceil, j-\lceil l/g \rceil}+d(w_i, w'_j) \Big \}\enspace.
\label{eq:windowdynamic}
\end{align}

To compute $c_{i,j}$ we have three possibilities in an optimal window-compatible transformation. In particular, either $w_i$ is matched with $w'_j$, or at least one of $w_i$ and $w'_j$ is unmatched. If we match $w_i$ to $w'_j$, then there is a cost of $d(w_i, w'_j)$ for transforming $w_i$ to $w'_j$. Also, the other windows that overlap with $w_i$ or $w'_j$ cannot be used. Consequently, the problem reduces to finding window-compatible transformations for the first $l+(i-1)\cdot g-l$ characters of $s_1$ and the first $l+(j-1)\cdot g-l$ characters of $s_2$ with respect to $\langle w_1, w_2, . . ., w_{i-\lceil l/g \rceil}\rangle$ and $\langle w'_1, w'_2, . . ., w'_{j-\lceil l/g \rceil}\rangle$. This subproblem is captured by $c_{i-\lceil l/g \rceil, j-\lceil l/g \rceil}$. For the case that $w_i$ is unmatched, we need $g$ operations to remove every character in range $l+(i-2)\cdot g+1,\ldots,l+(i-1)\cdot g$ from $s_1$. This is because no other window can cover these characters. For the remaining characters the problem reduces to finding window-compatible transformations for the first $l+(i-2)\cdot g$ characters of $s_1$ and the first $l+(j-1)\cdot g$ characters of $s_2$ with respect to $\langle w_1, w_2, . . ., w_{i-1}\rangle$ and $\langle w'_1, w'_2, . . ., w'_j\rangle$, which is captured by $c_{i-1, j}$, thus $c_{i,j}=c_{i-1,j}+g$. Likewise, we can formulate the case that $w'_j$ is unmatched by $c_{i,j}=c_{i,j-1}+g$.

Note that $c_{k,k'}$ is equivalent to an optimal window-compatible transformation from $s_1$ to $s_2$. By iterating through $i$ from $1$ to $k$ and $j$ from $1$ to $k'$ one can simply calculate $c_{i,j}$ in time $O(1)$ according to \eqref{eq:windowdynamic}. Therefore $c_{k,k'}$ can be calculated in time $O(kk')$, and the proof is complete.
\end{proof}

\begin{corollary}
Given an $\alpha$-approximation matrix of edit distances between the substrings corresponding to every pair of windows of $W_1$ and $W_2$, one can compute an $\alpha$-approximation of the optimal window-compatible transformation of $s_1$ into $s_2$ in time $O(n+|W_1||W_2|)$.
\end{corollary}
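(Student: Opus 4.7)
The plan is to reuse the dynamic program from Lemma~\ref{dp} verbatim, substituting the entries of the given $\alpha$-approximation matrix $\tilde{d}(\cdot,\cdot)$ for the true edit distances in the recurrence~\eqref{eq:windowdynamic}. The running time argument is then identical: after the $O(n)$ sort of the windows, each of the $O(|W_1||W_2|)$ table entries is computed in $O(1)$ time, yielding total time $O(n+|W_1||W_2|)$. So the only thing that needs justification is the approximation guarantee.

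The key observation is that the cost of any window-compatible transformation $T$ decomposes additively as $\mathrm{cost}(T) = G(T) + D(T)$, where $G(T)$ is the total gap cost (multiples of $g$ accumulated for unmatched windows in the first two branches of the recurrence) and $D(T) = \sum_{(w_i,w'_j)\text{ matched}} d(w_i,w'_j)$ is the sum of true edit distances over matched window pairs. When we run the same DP with $\tilde{d}$ in place of $d$, the cost it assigns to $T$ becomes $\widetilde{\mathrm{cost}}(T) = G(T) + \tilde{D}(T)$ with $D(T) \le \tilde{D}(T) \le \alpha D(T)$. Hence
\[
\mathrm{cost}(T) \;\le\; \widetilde{\mathrm{cost}}(T) \;\le\; G(T) + \alpha\, D(T) \;\le\; \alpha\cdot\mathrm{cost}(T),
\]
where the last inequality uses $\alpha\ge 1$.

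Let $T^*$ be an optimal window-compatible transformation with respect to the true distances and let $\tilde{T}^*$ be the transformation reconstructed from the modified DP (which is optimal with respect to $\tilde{d}$ by Lemma~\ref{dp}). Then
\[
\mathrm{cost}(\tilde{T}^*) \;\le\; \widetilde{\mathrm{cost}}(\tilde{T}^*) \;\le\; \widetilde{\mathrm{cost}}(T^*) \;\le\; \alpha\cdot \mathrm{cost}(T^*),
\]
which is exactly the desired $\alpha$-approximation.

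The only subtle point is that the gap terms $g$ in the recurrence remain exact and are not inflated by the factor $\alpha$, which might at first look like it spoils the clean multiplicative bound. This is the step I expect to be easiest to misread, but it poses no real obstacle: since $\alpha\ge 1$, the bound $G(T)+\alpha D(T) \le \alpha(G(T)+D(T))$ absorbs the gap contribution for free, so no additional structural argument is needed and the corollary follows.
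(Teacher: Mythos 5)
Your argument is correct and is exactly the reasoning the paper leaves implicit: the corollary appears with no proof, treated as an immediate consequence of Lemma~\ref{dp}, and the additive decomposition of cost into gap terms and matched-window distances together with the sandwich $d \le \tilde d \le \alpha d$ is the natural way to make that deduction precise. Your handling of the subtle point (that the gap terms $g$ are exact but absorbed harmlessly since $\alpha \ge 1$) is also the right observation and closes the only place one might worry about a gap.
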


\begin{lemma}\label{compatibility}
Given that $\edit(s_1,s_2) \leq \delta n$, there exists a window-compatible transformation of $s_1$ into $s_2$ with respect to $W_1$ and $W_2$ that has at most $(3\delta +  1/\gamma)n + 2l$ operations.
\end{lemma}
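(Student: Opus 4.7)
The plan is to exhibit a specific window-compatible transformation derived from an optimal one and bound its cost. Let $T^*$ be an optimal transformation of $s_1$ into $s_2$ with at most $\delta n$ operations; view $T^*$ as a monotone path in the edit-distance grid from $(0,0)$ to $(|s_1|,|s_2|)$. This path induces a monotone partial matching $M^*=\{(i_k,j_k)\}$ between positions of $s_1$ and $s_2$, consisting of the matched and substituted character pairs.

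I would greedily build sequences $S\subseteq W_1$ and $S'\subseteq W_2$ by scanning $M^*$ left to right. Each chunk starts at the next unprocessed matched pair $(i_p,j_p)$ and is assigned windows $w_q\in W_1$ and $w'_q\in W_2$ containing $i_p$ and $j_p$ respectively, chosen as the earliest valid windows that are non-overlapping with $w_{q-1}$ and $w'_{q-1}$. The chunk is then extended to include every subsequent pair $(i_{p'},j_{p'})$ with $i_{p'}\in w_q$ and $j_{p'}\in w'_q$, and closed otherwise. Because window starts in both $W_1$ and $W_2$ are spaced by $g$, the snap from $i_p$ (resp. $j_p$) to the nearest valid window start leaves a gap of at most $g$ per chunk in each string. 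The induced transformation applies $T^*$ inside each $(w_q,w'_q)$ pair (treating characters matched across window boundaries as deletion-plus-insertion), deletes the characters of $s_1$ lying in no window of $S$, and inserts the characters of $s_2$ lying in no window of $S'$.

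The cost decomposes into three terms. First, operations of $T^*$ falling entirely inside some chunk contribute at most $\delta n$. Second, matched pairs $(i,j)\in M^*$ straddling a window boundary—i.e., one coordinate lies inside a window but the other in a gap or in a different chunk's window—each cost one extra insertion and one extra deletion; using monotonicity of the path together with the greedy closure rule, every such breakage can be charged to a nearby operation of $T^*$ with multiplicity at most two, yielding an additional $2\delta n$ of overhead and a total internal cost of $3\delta n$. Third, the gaps between consecutive windows have length at most $g$ and occur in at most $n/l$ chunks on each side, contributing $n/\gamma$, while the leading and trailing uncovered tails contribute at most $2l$. Summing these three pieces gives exactly the claimed bound $(3\delta+1/\gamma)n+2l$.

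The main obstacle is the second contribution: formalizing how each breakage of a matched pair at a window boundary can be charged, with multiplicity at most two, to operations of $T^*$ in its vicinity. Concretely, one needs a case analysis on why the greedy rule closes the current chunk—either $i_{p'}$ no longer fits in the maximal valid $w_q$, or $j_{p'}$ no longer fits in $w'_q$—and in each case to exhibit the operations of $T^*$ in the neighborhood of the closure that are forced by the resulting mismatch and that absorb the extra insertions and deletions. Making this charging rigorous, while ensuring simultaneously that the non-overlap constraint is respected in both $W_1$ and $W_2$ without producing extra chunks beyond $n/l$, is the delicate combinatorial core of the proof.
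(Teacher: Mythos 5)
Your high-level blueprint matches the paper's: derive a window-compatible transformation from an optimal one by greedily placing windows at the first uncovered matched pair, then split the cost into (i) the operations of $T^*$ landing inside the chosen windows, (ii) the overhead from matched pairs that straddle a window boundary, and (iii) the snap/shift cost plus the two uncovered tails. The paper does this too, with one cosmetic difference: it first builds free-floating windows $v_i,v'_i$ of length $l$ anchored exactly at the matched pairs, proves the $3\delta n+2l$ bound for those, and only afterward shifts each by at most $g/2$ to align with $W_1$ and $W_2$, paying $n/\gamma$; you instead snap directly into $W_1,W_2$ during construction. Either order is workable, though snapping ``to the earliest valid window'' gives a displacement of up to $g$ rather than $g/2$, so you should double-check the accounting for contribution (iii) to make sure you still land at $n/\gamma$ and not $2n/\gamma$.

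The real gap is exactly the one you flag: you assert that each straddling edge ``can be charged to a nearby operation of $T^*$ with multiplicity at most two'' but you never exhibit the charging, and you explicitly defer it as ``the delicate combinatorial core.'' That step is the heart of the lemma and cannot be deferred. Moreover, the charging the paper actually uses is different in kind from the local ``charge to a nearby forced operation'' scheme you sketch. Because $v_i$ and $v'_i$ both have length $l$ and both start at the same matched pair $(x_{\rho(i)},y_{\rho(i)})$, monotonicity forces all edges originating in $v_i$ whose partner escapes $v'_i$ to escape on the same side; letting $h(i)$ count them, $v'_i$ then contains at most $l-h(i)$ matched characters, hence at least $h(i)$ characters of $v'_i$ are insertions of $T^*$, i.e.\ $p(i)\ge h(i)$. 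Summing over $i$ and using that insertions in disjoint windows are disjoint operations gives $\sum h(i)\le\sum p(i)\le|T^*|\le\delta n$, and each straddling edge costs two extra operations, yielding the $2\delta n$ overhead \emph{globally}, not via a local multiplicity-two charge. This is a counting argument against the insertions inside the target window, not a case analysis on why a chunk closes. Without this (or an equivalent) argument your proof does not establish the $3\delta n$ internal bound, so as written the proposal is an outline rather than a proof.
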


\begin{proof}
Recall that $l$ is the length of the windows, and $\gamma$ is the number of layers. Let \opt\enspace be a minimum size transformation of $s_1$ into $s_2$. The overall idea of the proof is as follows. We first show that there exists a set of non-overlapping windows of length $l$, such that a window-compatible transformation with respect to them approximates \opt. Next, by shifting those windows and losing a small fraction on the approximation factor, we fit them to those in $W_1$ and $W_2$.

Consider a pair of characters $x\in s_1$ and $y\in s_2$, such that \opt\enspace transforms $x$ into $y$ either with no change or through a substitution. We call such a pair an edge. Note that there is no collision in the set of all edges in \opt\enspace (or generally in any transformation), i.e. for edges $(x_1, y_1)$ and $(x_2, y_2)$ if $x_1<x_2$ then $y_1<y_2$. Let $M=\langle (x_1,y_1),\ldots,(x_m,y_m)\rangle$ be the sequence of all edges in \opt\enspace in order from left to right.

Now we find the first set of windows as follows. Roughly speaking, we iterate through $M$ and at each step put as many edges as possible in a window of length $l$. In particular, let $\rho(i)$ be the smallest index in $M$ such that $x_{\rho(i)}$ and $y_{\rho(i)}$ are not covered by any window up to step $i$. We create window $v_i$ of length $l$ starting from $x_{\rho(i)}$ and window $v'_i$ of length $l$ starting from $y_{\rho(i)}$. We stop when any such window goes beyond the length of the strings.

In this way, there might be some edges that have one endpoint in $v_i$ and one endpoint beyond $v'_i$ or vice versa. Consider the case in which these edges have one endpoint in $v_i$. Let $h(i)$ be the number of such edges, and let $p(i)$ be the number of characters in $v'_i$ that \opt\enspace transforms them through insertion. We claim that $p(i)\geq h(i)$. This is because $v'_i$ has at most $l-h(i)$ edges in \opt. In comparison, a transformation with respect to $v_i$ and $v'_i$ can keep all the edges between $v_i$ and $v'_i$ and apply deletion and insertion for those edges that have one endpoint in $v_i$ and one endpoint out of $v'_i$. This costs at most $2h(i)$.

\begin{figure}[h!]
	\begin{center}
		\includegraphics[scale=0.7]{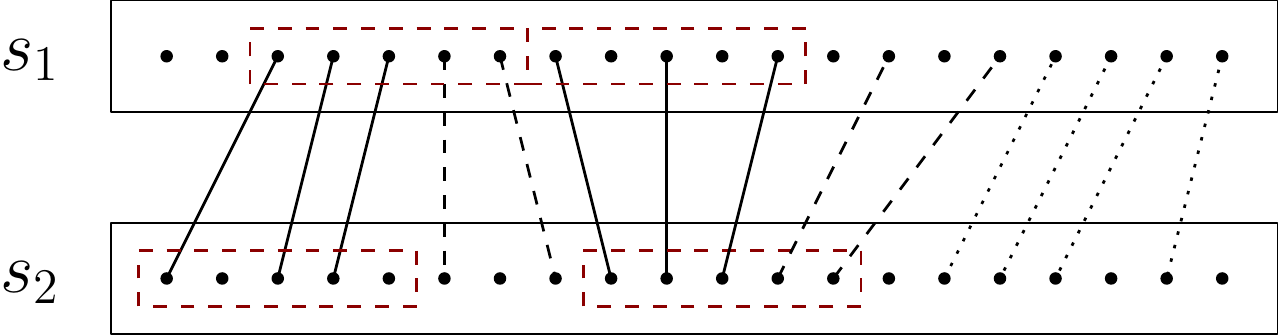}
	\end{center}
	\caption{\small{An edge from $x\in s_1$ to $y\in s_2$ shows that \opt \enspace transforms $x$ into $y$ with either no change or a substitution. Dashed edges represent those with one covered endpoint. Dotted edges represent those that are remained at the end of the iteration.}}
	\label{fig:wincomp1}
\end{figure}

Besides, the number of remaining edges at the end of the iteration on $M$ is at most $l$. Such edges can be transformed by at most $2l$ insertions and deletions. Hence, requiring the transformation to be with respect to all $v_i$'s and $v'_i$'s adds at most $2l+2\sum h(i)\leq 2l+2\sum p(i)\leq 2l+2|\opt|\leq 2l+2\delta n$ more operations to the optimum solution. Equivalently, the optimum transformation with respect to these windows has at most $3\delta n+2l$ operations.

\begin{figure}[h!]
	\begin{center}
		\includegraphics[scale=0.7]{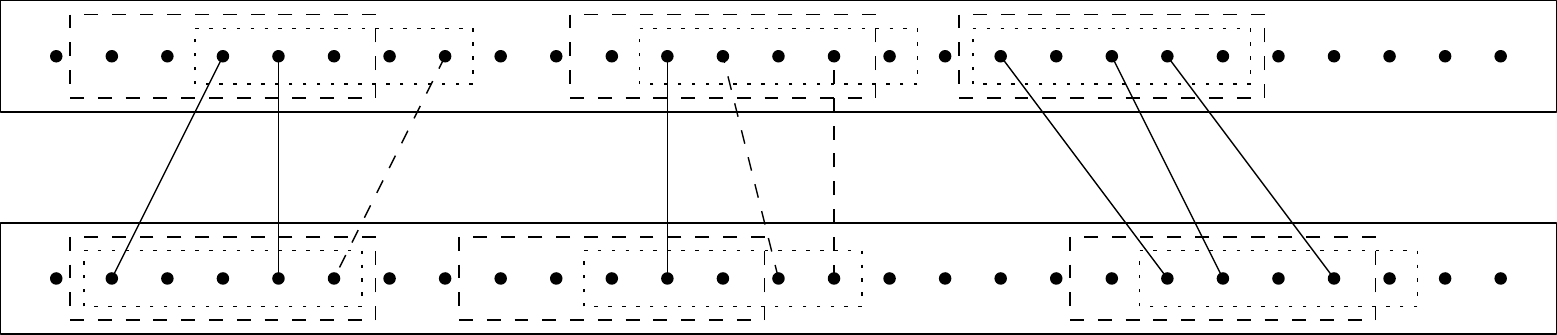}
	\end{center}
	\caption{\small{Dotted rectangles represent $v_i$'s and $v'_i$'s. Dashed rectangles represent shifted windows that are in $W_1$ or $W_2$. Dashed lines represent edges that are left outside of windows after shifting.}}
	\label{fig:wincomp1}
\end{figure}

Finally, we note that the gap size between the windows in $W_1$ is $g=l/\gamma$, therefore, one can shift $v_i$'s by at most $g/2$ to the right or left in order to map them to non-overlapping windows in $W_1$. Likewise, one can find non-overlapping windows for $v'_i$'s in $W_2$. Every shift of a window leaves at most $g/2$ of the edges outside, which costs an extra $g$ operations. Since there are at most $n/l$ windows, the overall cost of shifting the windows is $n/\gamma$. Therefore, there exists a subset of $W_1$ and $W_2$ such that the optimum window-compatible transformation of $s_1$ into $s_2$ with respect to them has at most $2l+3\delta n+n/\gamma=(3\delta+1/\gamma)n+2l$ operations.
\end{proof}

The next lemma proves the approximation factor and time complexity of our $7+\epsilon$ approximation algorithm for the $\delta$-bounded edit distance problem.

\begin{lemma}\label{mainbutnotmain}
There exists a quantum algorithm that solves the $\delta$-bounded edit distance problem within an approximation factor of $7+\epsilon$ in time $\tildorder(n^{2-1/7}\poly(1/\epsilon))$.
\end{lemma}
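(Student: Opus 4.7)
The plan is to instantiate the divide-and-conquer blueprint of Section \ref{ghabli} with $\beta = 6/7$ and $\gamma = 1/(\epsilon\delta)$, and to handle very small $\delta$ via a fallback to Landau et al.'s exact algorithm so that one time bound applies uniformly. Whenever $\delta \leq n^{-1/14}$, the $O(n+d^2)$ routine of \landauincremental{}~\cite{landau1998incremental} already solves the problem exactly in $O(n + \delta^2 n^2) = O(n^{2-1/7})$ time and with approximation factor $1$, so I would simply invoke it. Otherwise I set window length $l = \lfloor n^{1/7}\rfloor$ and gap $g = \lfloor l/\gamma\rfloor$, build $W_1, W_2$ with $N := |W_1|+|W_2| = O(n^{6/7}/(\epsilon\delta))$, run Algorithm \ref{alg:metric1} (the $3+\epsilon$ metric estimation of Theorem \ref{thm:metric1}) on the metric $\langle W_1 \cup W_2, \edit\rangle$ using the classical $O(l^2)$ edit-distance DP as oracle, feed the resulting estimated distance matrix into the DP of Lemma \ref{dp}, and return the window-compatible transformation it produces.

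For correctness in the non-trivial regime, I would decompose the cost of any window-compatible transformation with respect to non-overlapping windows $(S,S')$ into an interior part $K(S,S') = \sum_i \edit(w_i,w'_i)$ and a boundary part $B(S,S')$ counting the characters of $s_1$ and $s_2$ not covered by $S$ and $S'$. Because the DP of Lemma \ref{dp} tracks $B$ exactly and applies the metric estimate only to the $K$ summands, the transformation it returns has true cost at most $(3+\epsilon)K(S^*,S'^*) + B(S^*,S'^*)$ for any valid $(S^*,S'^*)$ --- in particular for the construction in the proof of Lemma \ref{compatibility}. A careful recount of that construction (windows anchored at uncovered OPT edges, then shifted to fit into $W_1,W_2$) yields \emph{separate} bounds $K(S^*,S'^*) \leq 2\,\edit(s_1,s_2) + O(n/\gamma)$ --- each OPT deletion, insertion, or broken preserved edge pays at most twice inside a window while preserved substitutions pay at most once, via the inequality $\sum_i h(i) \leq $ OPT insertions used in the proof of Lemma \ref{compatibility} --- and $B(S^*,S'^*) \leq 2l + O(n/\gamma)$. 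Plugging in $\gamma = 1/(\epsilon\delta)$, $\edit(s_1,s_2) \leq \delta n$, and the regime condition $\delta n \geq n^{1/7}$ bounds the output by $(1+2(3+\epsilon))\edit(s_1,s_2) + O(\epsilon\delta n) = (7 + O(\epsilon))\delta n$, i.e.\ a $(7+\epsilon)$-approximation after relabeling $\epsilon$.

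For the running time I would tally the three costs from Theorem \ref{thm:metric1} and Lemma \ref{dp}: $\tildorder(N^{5/3}\poly(1/\epsilon))$ oracle calls at $O(l^2)$ time each, $\tildorder(N^2\poly(1/\epsilon))$ internal overhead, and $O(n+N^2)$ for the DP. Substituting $N = O(n^{6/7}/(\epsilon\delta))$ and $l = n^{1/7}$ produces $\tildorder(n^{12/7}\cdot (1/\delta)^{O(1)}\poly(1/\epsilon))$, and in the non-trivial regime $\delta > n^{-1/14}$ the blowup $(1/\delta)^{O(1)}$ is at most $n^{1/7}$, giving the claimed $\tildorder(n^{2-1/7}\poly(1/\epsilon))$. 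The main obstacle I anticipate is the refined $1+2\alpha$ approximation accounting: reading Lemma \ref{compatibility} in the most direct way, one gets only $(3+\epsilon)\cdot(3\delta + 1/\gamma)n \approx 9\delta n$, and squeezing this down to $7\delta n$ requires the explicit $K$/$B$ separation above together with the observation that the metric-estimation blowup touches only $K$. Once this separation is in hand, the remaining pieces --- the Landau fallback, the choices of $\beta$ and $\gamma$, and the plug-in of Theorem \ref{thm:metric1} --- are routine.
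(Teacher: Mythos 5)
Your overall strategy is the same as the paper's: fall back to the $O(n+d^2)$ algorithm of Landau et al.\ when $\delta \leq n^{-1/14}$, otherwise set $\beta = 6/7$ and $\gamma = \Theta(1/(\epsilon\delta))$, run the $3+\epsilon$ metric estimation (Theorem~\ref{thm:metric1}) with the classical $O(l^2)$ DP as oracle, and feed the estimated distance matrix into the DP of Lemma~\ref{dp}. Your running-time tally is correct, and you correctly identify the key observation that the metric-estimation blowup must be charged only to the interior cost $K$ (the paper phrases this as ``at most $2\delta n$ of these operations are inside the windows'').

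However, your approximation accounting is internally inconsistent, and the stated bound $B(S^*,S'^*) \leq 2l + O(n/\gamma)$ is false in general. If both $K \leq 2\,\edit + O(n/\gamma)$ and $B \leq 2l + O(n/\gamma)$ held, then $(3+\epsilon)K + B \leq (6+O(\epsilon))\,\edit + O(\epsilon\delta n) + 2l$, which would be a $(6+\epsilon)$-approximation --- you then write the final answer as $(1+2(3+\epsilon))\,\edit + O(\epsilon\delta n)$, which is $(7+2\epsilon)\,\edit$ and does not follow from your two displayed bounds. The missing term is a $\Theta(\delta n)$ contribution to $B$: the characters of $s_1$ and $s_2$ that fall in the gaps \emph{between} consecutive chosen windows of $S^*,S'^*$ (roughly, the OPT insertions/deletions that land between anchors, and one endpoint of each broken edge). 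For instance, if $s_1 = a^n$ and $s_2$ is obtained by deleting every other character, the anchored windows of Lemma~\ref{compatibility} leave gaps of total length $\Theta(n) = \Theta(\delta n)$ on the $s_1$ side, so $B$ is not $O(n/\gamma) + O(l)$. The paper avoids claiming a small $B$ altogether: it only uses $K \leq 2\delta n$ together with the total bound $K + B \leq (3\delta + 1/\gamma)n + 2l$ from Lemma~\ref{compatibility}, and writes $\alpha K + B = (\alpha-1)K + (K+B) \leq (2+\epsilon')\cdot 2\delta n + (3\delta + 1/\gamma)n + 2l$, which gives exactly $(7+\epsilon)\delta n$ after the parameter choices. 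You should replace your $B$ bound with the pair $(K \leq 2\delta n,\; K+B \leq (3\delta + 1/\gamma)n + 2l)$ and redo the last line accordingly; the rest of your argument (fallback, parameters, metric-estimation plug-in, and running-time arithmetic) goes through unchanged.
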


\begin{proof}
First, without loss of generality we can assume that $\delta > n^{-1/14}$, because otherwise one can use the $O(n+d^2)$ algorithm of \landauincremental~\cite{landau1998incremental} for strings of distance at most $d$ and find the exact edit distance in time $O(n+\delta^2 n^2)=O(n^{2-1/7})$.

We prove that the algorithm discussed in Section \ref{section:ourResults} leads to an approximation factor of $7+\epsilon$ in quantum running time $\tildorder(n^{2-1/7}\poly(1/\epsilon))$. To this end, let us go through the algorithm step by step.

Note that the total number of windows is equal to $O(n/g)$ where $g$ is the gap size. Therefore Step (i) of the algorithm takes time $O(n/g)$. In Step (ii), we use the $3+\epsilon$ approximation algorithm for metric estimation to approximate the distances between the windows. The running time of each oracle invocation is $O(l^2)$ since the length of windows are $l$. Also, there are at most $O(n/g)$ points in this metric estimation instance, therefore due to Theorem \ref{thm:metric1} the total running time of Step (ii) is: $$\tildorder((l^2(n/g)^{5/3}+(n/g)^2)\poly(1/\epsilon))\enspace .$$
Note that $g=l/\gamma$. By assigning $l=n^{1-\beta}$ the overall running time of Step (ii) is:
$$\tildorder((n^{2-\beta/3}\gamma^{5/3} + n^{2\beta}\gamma^2)\poly(1/\epsilon))\enspace .$$

Step (iii) takes time $O(n+|W_1||W_2|)$ due to Lemma \ref{dp}, and thus the running time of this step is $O(n+\gamma^2n^{2\beta})$. Thus, the overall running time of the algorithm up to Step (iii) is $$\tildorder((n^{2-\beta/3}\gamma^{5/3})\poly(1/\epsilon)+n+\gamma^2 n^{2\beta})\enspace .$$
By assigning $\beta = 6/7$, the running time of the algorithm becomes $$\tildorder(n^{2-2/7}(\gamma^{5/3}\poly(1/\epsilon)+\gamma^2))\enspace .$$

Finally, by choosing $\epsilon'=\epsilon/4$ and $\gamma=(\epsilon'\delta)^{-1}$ the overall running time of the algorithm becomes $O(n^{2-1/7}\poly(1/\epsilon))$. What remains is to show that assigning such values for $l$ and $\gamma$ gives the $7+\epsilon$ approximation factor. Due to Lemma \ref{compatibility}, there exists a window-compatible transformation of $s_1$ into $s_2$ with respect to $W_1$ and $W_2$ that has at most $(3\delta+1/\gamma)n+2l$ operations. According to the proof, at most $2\delta n$ of these operations are inside the windows. Therefore, a $(3+\epsilon')$-approximation of the distances between the windows in Step (iii) of the algorithm gives us a transformation with at most $(2\delta n)(3+\epsilon')+(\delta+1/\gamma)n+2l$ operations. This can be simplified as follows:
\begin{align*}
	(2\delta n)(3+\epsilon')+(\delta+1/\gamma)n+2l
	&\leq 6\delta n + 2 \epsilon'\delta n + \delta n + \frac{n}{\gamma}+2l &\\
	&\leq (7\delta +2\epsilon'\delta+ \frac{1}{\gamma}+2n^{-6/7})n &\\
	&\leq (7 +2\epsilon'+ \frac{1}{\delta\gamma}+\frac{2n^{-6/7}}{\delta})\delta n &\\
	&\leq (7 +2\epsilon'+ \epsilon'+2 n^{-11/14})\delta n & \delta > n^{-1/14}\\
	&\leq (7+4\epsilon')\delta n & \text{for every $n > (\frac{2}{\epsilon'})^{14/11}$} \\
	&\leq (7+\epsilon)\delta n \enspace .
\end{align*}

Therefore, the algorithm finds a window-compatible transformation of $s_1$ into $s_2$ with respect to $W_1$ and $W_2$ that is $(7+\epsilon)$-approximation and runs in quantum time $O(n^{2-1/7}\poly(1/\epsilon))$.
\end{proof}

\begin{theorem}\label{main}
There exists a quantum algorithm that solves edit distance within an approximation factor of $7+\epsilon$ in time $\tildorder(n^{2-1/7}\poly(1/\epsilon))$.
\end{theorem}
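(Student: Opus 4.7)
The plan is to reduce the edit distance problem to a sequence of calls to the $\delta$-bounded edit distance solver from Lemma~\ref{mainbutnotmain}, using the ``guess and multiply'' scheme already sketched in Section~\ref{ghabli}. First I would dispense with the trivial case: check in linear time whether $s_1 = s_2$, in which case the distance is $0$. Otherwise $\edit(s_1,s_2) \geq 1$, so the true distance as a fraction of $n$ lies in the range $[1/n, 1]$.

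Next, I would enumerate geometrically increasing guesses $\rho_k = (1+\epsilon')^k / n$ for $k = 0, 1, 2, \ldots$ up to $k = O(\log_{1+\epsilon'} n) = \tildorder(1/\epsilon')$, where $\epsilon' = \Theta(\epsilon)$ is chosen below. For each guess $\rho_k$, invoke the algorithm of Lemma~\ref{mainbutnotmain} with parameter $\delta = \rho_k$ and approximation parameter $\epsilon'$. If the invocation returns a valid transformation of cost at most $(7+\epsilon')\rho_k n$, record it; otherwise the guarantee of Lemma~\ref{mainbutnotmain} must have been violated, so we conclude $\edit(s_1,s_2) > \rho_k n$ and move on. Output the transformation found for the smallest successful $\rho_k$.

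For correctness, let $k^\star$ be the smallest index with $\rho_{k^\star} n \geq \edit(s_1,s_2)$, so that $\rho_{k^\star} n \leq (1+\epsilon')\edit(s_1,s_2)$. The guarantee of Lemma~\ref{mainbutnotmain} ensures that the invocation at $\rho_{k^\star}$ succeeds and returns a transformation of cost at most
\[
(7+\epsilon')\rho_{k^\star} n \;\leq\; (7+\epsilon')(1+\epsilon')\edit(s_1,s_2) \;\leq\; (7+\epsilon)\edit(s_1,s_2),
\]
provided $\epsilon'$ is a sufficiently small constant multiple of $\epsilon$. Any transformation actually returned by the algorithm is a valid witness, so its cost is always an upper bound on $\edit(s_1,s_2)$; thus the earliest successful guess yields at most this approximation factor.

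For the running time, each of the $\tildorder(1/\epsilon')$ invocations runs in time $\tildorder(n^{2-1/7}\poly(1/\epsilon'))$ by Lemma~\ref{mainbutnotmain}, so the total cost is still $\tildorder(n^{2-1/7}\poly(1/\epsilon))$ after folding the $\polylog$ and $\poly(1/\epsilon)$ factors into the $\tildorder$ notation. The main (and only) subtlety will be bookkeeping the $\epsilon'$ versus $\epsilon$ distinction and confirming that the $\delta$-bounded algorithm's guarantee (which only promises correctness when the premise $\edit(s_1,s_2) \leq \delta n$ holds) is compatible with our ``try and check the output'' strategy — but since an invalid or oversized transformation is trivially detectable (we simply measure its length), there is no obstacle here.
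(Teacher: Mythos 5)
Your proposal is correct and follows essentially the same ``guess and multiply'' scheme as the paper's own proof: iterate over $O(\log_{1+\epsilon'} n)$ geometric guesses, invoke Lemma~\ref{mainbutnotmain} for each, verify the returned witness, and stop at the first success. The only cosmetic difference is that the paper passes $(1+\epsilon')\delta$ (the top of the current range) as the parameter to the $\delta$-bounded solver while you pass $\rho_k$ directly and locate the smallest $\rho_{k^\star}$ with $\rho_{k^\star} n \geq \edit(s_1,s_2)$; both bookkeeping choices yield the same $(7+\epsilon')(1+\epsilon') \leq 7+\epsilon$ bound, and the running-time analysis is identical.
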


\begin{proof}
Let \opt\enspace be the edit distance between the two strings. We can check if $\opt=0$ in time $O(n)$. Assume that $\opt\geq 1$. We guess a value $\rho$ for  \opt\enspace by iterating through different multiplicative ranges from $1$ to $n$. Let $\epsilon'=\epsilon/9$. In particular, in every step $i\geq 0$ we guess a range $[\delta n,(1+\epsilon')\delta n)$ for \opt, where $\delta=(1+\epsilon')^i/n$, and run the algorithm of Lemma \ref{mainbutnotmain} with parameters $\epsilon'$ and $(1+\epsilon')\delta$. Note that at each step we can verify whether the output of the algorithm is a valid transformation or not. We get the first valid transformation as soon as \opt\enspace lies within the range of our guess. This valid transformation is of size at most $(7+\epsilon')(1+\epsilon')\delta$ which is no more than $(7+\epsilon)\delta n$. Also, there are at most $\log_{1+\epsilon'}(n)\in \tildorder(1/\epsilon)$ ranges for which we run the algorithm of Lemma \ref{mainbutnotmain}. Hence, the overall time for the search is $\tildorder(n^{2-1/7}\poly(1/\epsilon))$.
\end{proof}

\section{Bootstrapping}\label{bootstraping}
Recall that in Section \ref{bbc}, we described our bootstrap algorithm that uses itself as the oracle of the metric distance algorithm. Here, we compute the time complexity and the approximation factor of the algorithm.

\begin{theorem}\label{thm:bootstrap}
There exists an $\tildorder(n^{2-(5-\sqrt{17})/4+\epsilon}\poly(1/\epsilon))$ time quantum algorithm that approximates edit distance within a factor $\erroredit(\epsilon) = \errorvalue$.
\end{theorem}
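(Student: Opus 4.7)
The plan is to prove Theorem \ref{thm:bootstrap} by strong induction on $n$, establishing simultaneously $\timeedit(\epsilon) = \tildorder(n^{2-c+\epsilon}\poly(1/\epsilon))$ with $c := (5-\sqrt{17})/4$ and $\erroredit(\epsilon) = \errorvalue$. The base case covers both $\epsilon \geq c$ and $n$ below a constant: in either situation $\alg(\epsilon)$ simply invokes the classical $O(n^2)$ dynamic program, giving the desired bound with $\erroredit(\epsilon) = 1$. In analogy with the proof of Theorem \ref{main}, a guess-and-multiply wrapper reduces edit distance to $\delta$-bounded edit distance, so it suffices to exhibit, for each $\delta$, a $\delta$-bounded algorithm of running time $\tildorder((1/\delta)^2\, n^{2-2c+2\epsilon}\poly(1/\epsilon))$. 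When $\delta \geq n^{-c/2+\epsilon/2}$ the $1/\delta^2$ factor contributes at most $n^{c-\epsilon}$, producing the target $n^{2-c+\epsilon}$; for smaller $\delta$, \landauincremental's exact $O(n+\delta^2 n^2)$ algorithm already fits.

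For the $\delta$-bounded subroutine I would follow the four-step scheme of Section \ref{bbc}: (i) build window collections $W_1, W_2$ with window size $l = \lfloor n^{1-\beta}\rfloor$ and gap $g = l/\gamma$, where $\gamma = \Theta(1/(\epsilon\delta))$; (ii) run the $\errormetric(\epsilon)$-approximation metric estimator of Theorem \ref{thm:metric151} on the induced window metric, using $\alg(2\epsilon)$ as the distance oracle, so each oracle call costs at most $\tildorder(l^{2-c+2\epsilon}\poly(1/\epsilon))$ by the inductive hypothesis applied to the strictly smaller length $l$; (iii) use Lemma \ref{dp} to extract an $\errormetric(\epsilon)\erroredit(2\epsilon)$-approximate window-compatible transformation; (iv) return it. Lemma \ref{compatibility} guarantees the existence of a window-compatible transformation of size $(3\delta+1/\gamma)n+2l = O(\delta n)$, so the overall approximation satisfies $\erroredit(\epsilon) = O(\errormetric(\epsilon)\erroredit(2\epsilon)) = O((1/\epsilon)\erroredit(2\epsilon))$.

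The crux of the argument is tuning $\beta$ so that both contributions to the running time fit inside the $n^{2-2c+2\epsilon}/\delta^2$ budget. With $m = O(\gamma n^\beta)$ windows, Theorem \ref{thm:metric151} yields a total cost of $\tildorder(m^{3/2+\epsilon}\cdot\timeedit(2\epsilon) + m^2\poly(1/\epsilon))$; the $\gamma$-factors absorb into $\poly(1/(\epsilon\delta)) \leq (1/\delta^2)\poly(1/\epsilon)$, and the $n$-exponents of the two terms are
\begin{align*}
E_1(\beta) &= \beta(3/2+\epsilon) + (1-\beta)(2-c+2\epsilon),\\
E_2(\beta) &= 2\beta.
\end{align*}
Imposing $E_1(\beta), E_2(\beta) \leq 2-2c+2\epsilon$ and equating the two tight constraints to minimise $c$ yields $\beta = 1-c$ together with $c = (1-c)(1/2-c)$, i.e.\ $2c^2 - 5c + 1 = 0$, whose smaller root is precisely $c = (5-\sqrt{17})/4$. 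This simultaneously pins down the right exponent and the right value of $\beta$.

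The main obstacle, typical of bootstrapped recurrences, is controlling the compound approximation factor across all recursion depths. Unfolding $\erroredit(\epsilon) = O(1/\epsilon)\cdot\erroredit(2\epsilon)$ for the $k = O(\log(1/\epsilon))$ levels required until $2^k\epsilon \geq c$ (where the base case takes over) gives
$$\erroredit(\epsilon) \leq \prod_{i=0}^{k-1} O\!\left(\tfrac{1}{2^i\epsilon}\right) = (1/\epsilon)^{O(\log(1/\epsilon))} = \errorvalue,$$
as claimed. A parallel check confirms that the $\poly(1/\epsilon)$ time prefactor does not explode across the $O(\log(1/\epsilon))$ recursive levels: each level multiplies by at most $\poly(1/\epsilon)$, so for constant $\epsilon$ the cumulative prefactor remains constant in $n$ and is absorbed into the $\tildorder(\cdot)$ notation. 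Together these bounds close the induction and establish the theorem.
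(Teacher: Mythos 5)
Your proposal is correct and follows essentially the same route as the paper: the same induction with base case $\epsilon \geq (5-\sqrt{17})/4$, the same guess-and-multiply reduction to $\delta$-bounded edit distance with the threshold $\delta \geq n^{-c/2+\epsilon/2}$, the same four-step window/metric-estimation recursion using $\alg(2\epsilon)$ as the oracle with $\errormetric(\epsilon) = O(1/\epsilon)$, and the same parameter balancing $E_1(\beta) = E_2(\beta) = 2-2c+2\epsilon$ that yields $\beta = 1-c+\epsilon$ and the quadratic $2c^2-5c+1=0$. Your treatment of the compounded approximation factor $\erroredit(\epsilon) = O\bigl(\errormetric(\epsilon)\erroredit(2\epsilon)\bigr)$ unrolled over $O(\log(1/\epsilon))$ levels also matches the paper's derivation.
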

\begin{proof}
The algorithm is presented in Section \ref{bbc}. Here we prove the claimed time complexity and approximation factor. Suppose the time complexity of our algorithm for the edit distance problem is $\timeedit(\epsilon)=\tildorder(n^{2-\phi_\epsilon}\poly(1/\epsilon))$ and the time complexity of our algorithm for the bounded edit distance problem is $\tildorder((1/\delta)^2 n^{2-2\phi_\epsilon}\poly(1/\epsilon))$. Notice that the total number of windows is equal to $O(n/g)$ and thus Step (i) of the algorithm can be easily done in time $O(n/g)$.
In Step (ii), we use the $O(1/\epsilon)$ approximation algorithm of metric estimation to approximate the distances of the windows. Moreover, we use our algorithm of edit distance $\alg(2\epsilon)$ recursively for the oracle of metric estimation. Notice that, the length of every window is $l$. Furthermore, the number of points in the metric is equal to the number of windows, namely $O(n/g)$. Note that the running time and query complexity of the $O(1/\epsilon)$ algorithm of metric estimation are $\tildorder(n^2\poly(1/\epsilon))$ and $\tildorder(n^{1.5+\epsilon}\poly(1/\epsilon))$, respectively. Therefore, the total running time of this step is
$$\tildorder((\timeedit(2\epsilon)(l)\cdot (n/g)^{1.5+\epsilon}+(n/g)^2)\poly(1/\epsilon)) = \tildorder((n^{(1-\beta_\epsilon)(2-\phi_{2\epsilon})+1.5\beta_\epsilon+\epsilon \beta_\epsilon}\gamma^{1.5+\epsilon} + n^{2\beta_\epsilon}\gamma^2)\poly(1/\epsilon))$$
since $l = O(n^{1-\beta_\epsilon})$ and $n/g = O(\gamma n /l) = O(\gamma n^{\beta_\epsilon})$.

Step (iii) takes time $O(n+|W_1||W_2|)$ due to Lemma \ref{dp} and thus the running time of this step is $O(\gamma^2n^{2\beta})$. Thus, the overall running time of the algorithm is $\tildorder((n^{(1-\beta)(2-\phi)+1.5\beta+\epsilon \beta}\gamma^{1.5+\epsilon} + n^{2\beta}\gamma^2)\poly(1/\epsilon))$. If we set $\beta_\epsilon=(\sqrt{17}-1)/4+\epsilon$ and $\phi_\epsilon = 1-\beta_\epsilon$, the running time of the algorithm for $\delta$-bounded edit distance would be $\tildorder((1/\delta)^2 n^{2-(5-\sqrt{17})/2+2\epsilon}\poly(1/\epsilon))$. Hence, the running time of the algorithm for the edit distance problem is $\tildorder(n^{2-(5-\sqrt{17})/4+\epsilon}\poly(1/\epsilon)) = O(n^{1.781})$.

To compute the approximation ratio, we should first compute the number of nested levels we use the algorithm in itself. In the $i$th recursion, we use $\alg(2^{i-1}\epsilon)$, while it works better than an $O(n^2)$ algorithm. Thus we have
$$2-(5-\sqrt{17})/4+2^{i-1}\epsilon < 2 \implies 2^{i-1}\epsilon < (5-\sqrt{17})/4 \implies 2^i < c/\epsilon \implies i<\log_2(1/\epsilon)+o(1)$$
Hence, we have at most $\log(1/\epsilon)+o(1)$ levels of recursion. We know that $\erroredit(\epsilon) = 2\errormetric(\epsilon)\erroredit(2\epsilon)+1$, because if we get an $\alpha$ approximate of the optimal window-compatible transformation, it is a $2\alpha +1$ approximate of the optimal solution as discussed in Lemma \ref{mainbutnotmain}. We also know that $\errormetric(\epsilon) = O(1/\epsilon)$. Hence, we can compute the edit distance as follows.
\begin{equation*}
\begin{split}
\erroredit(\epsilon) &= 2\errormetric(\epsilon)\erroredit(2\epsilon)+1
 \leq (c/\epsilon) \erroredit(2\epsilon) = \frac{c^i}{1\cdot 2\cdot . . . \cdot 2^{i-1}\cdot \epsilon^i}\\
 &=\frac{(1/\epsilon)^{c'}}{(1/\epsilon)^{(i-1)/2} \cdot \epsilon^i} = O(1/\epsilon)^{O(\log 1/\epsilon)}\\
\end{split}
\end{equation*}
This completes the proof.
\end{proof}

At last, we discuss why the exponent of our algorithm converges to $2-(5-\sqrt{17})/4$. Recall the recursive formula of for computing the running time of the algorithm:
$$\tildorder(n^{2-\phi_\epsilon}\poly(1/\epsilon)) = \tildorder((n^{(1-\beta_\epsilon)(2-\phi_{2\epsilon})+1.5\beta_\epsilon+\epsilon \beta_\epsilon}\gamma^{1.5+\epsilon} + n^{2\beta_\epsilon}\gamma^2)\poly(1/\epsilon))\enspace.$$
Intuitively, the running is the maximum of two terms, and the best is when these terms are equal. Thus when $\epsilon\to 0$ we can roughly tell:
$2-2\phi_0 = 2\beta_0 = (1-\beta_0)(2-\phi_0)+1.5\beta_0$. This equation has only one positive answer which is $\beta_0 = (\sqrt{17}-1)/4$ and $\phi_0 = 1-\beta_0$; therefore, the exponent is equal to $2-\phi_0 =2-(5-\sqrt{17})/4$.

\section{Approximating Edit Distance in MapReduce}\label{mapreduce}
Edit distance has been studied in parallel and distributed models since the 90s. However, the sequential nature of the dynamic programming solution makes it difficult to parallelize; therefore most of these solutions are slow or require lots of memory/communication. Using our framework, we give a somewhat balanced parallel algorithm for the edit distance problem in MapReduce model. More precisely, we give a ($3+\epsilon$)-approximation algorithm which uses $O(n^{8/9})$ machines, each with a memory of size $O(n^{8/9})$. Moreover, our algorithm runs in a logarithmic number of rounds and has time complexity $O(n^{1.704})$ on one machine which is truly subquadratic.
The overall communication and total memory of our algorithm are also truly subquadratic, due to the sublinearity of the number of machines and the memory of each machine.

Our algorithm is significantly more efficient than previous PRAM algorithms, for instance \cite{apostolico1990efficient} in terms of the number of machines, the overall memory, and the overall communication.
In addition, this is the first result of its kind for edit distance in MapReduce model.
Although this subject has been studied before, previous studies targeted a different aspect of the problem, such as giving a heuristic algorithm, an algorithm for inputs from a particular distribution model, or an algorithm for edit distance between all pairs of several strings \cite{editmapreduce2}.

We begin by stating some of the MapReduce notions and definitions in Section \ref{basics} and next explain our algorithm is Section \ref{mapreducealgorithm}.

\subsection{MapReduce Basics}\label{basics}

In this section, we give a brief overview of the MapReduce setting and later show how our framework can be used to design a MapReduce algorithm for edit distance.

In the MapReduce model, 
an algorithm consists of several rounds. Each round has a mapping phase and a reducing phase. Every unit of information is represented in the form of a $\langle key; value\rangle$ pair in which both key and value are strings. The input, therefore, is a sequence of $\langle key; value\rangle$ pairs specifying the input data and their corresponding positions. For instance, in the case of edit distance, we assume the input pairs are either in the form of $\langle (s_1,i); s_1[i]\rangle$ or $\langle (s_2,i); s_2[i]\rangle$ where the value represents a character, and the key shows the position of this character in either $s_1$ or $s_2$.

Each round of a MapReduce algorithm is performed as follows: every single input pair is given to a mapper separately and depending on the mapping algorithm, a sequence of $\langle key; value\rangle$'s is generated with respect to the input key. Note that the mappers have to be \textit{stateless} in the sense that the output of every mapper is only dependent on the single $\langle key; value\rangle$ pair given to it. Since the mappers are stateless, parallelism in the mapping phase is straightforward; all the inputs are evenly distributed between the machines. Moreover, there is no limit on the types of the $\langle key; value\rangle$ outputs that the mappers generate. Once \textit{all} the mapper jobs are finished, the reducers start to run. Let $\mathcal{K}$ be the set of all keys generated by the mappers in the mapping stage. In the reducing stage, every $key \in \mathcal{K}$ along with all its associated values is given to a single machine. Note that there is no limit on the number of keys generated in the mapping phase as long as all the outputs together fit in the total memory of all machines. However, the values associated with every key should fit in the memory of a single machine since all such values are processed at once by a single reducer. Every reducer, upon receiving a key and a sequence of values associated to it $\langle key; v_1, v_2, v_3, \ldots, v_l\rangle$ runs a reducer-specific algorithm and generates a sequence of output pairs. Unlike the mapping phase, the output keys of a reducer should be identical to the input key given to them. Moreover, the reducers are not stateless since they have access to all values of a key at once, but they can only access their given key and the values associated with it and should be regardless of the other $\langle key; value\rangle$ pairs generated in the mapping phase. Similar to the mapping phase, the total size of the outputs generated by all reducers should no exceed the total memory of all machines together. In addition to this, the total outputs of a reducer should not be more that its memory. Once \textit{all} reducers finished their jobs, the outputs are fed to the mappers for the next round of the algorithm.

For a problem with input length $n$, the goal is to design a MapRuduce algorithm running on $N_p$ machines each having a memory of $N_m$. $N_p$ and $N_m$ have to be sublinear in $n$ since the input is assumed to be huge in this setting. Moreover, since the overhead of a MapReduce round is time-consuming, the number of MapReduce rounds of the algorithms should be small (either constant or polylogarithmic). Many classic computational problems have been studied in the MapReduce setting. For instance, \karloffmodel~\cite{karloff2010model} provide a MapReduce algorithm to compute an MST of a graph with a sublinear number of machines and a sublinear memory for every machine. \lattanzifiltering~\cite{lattanzi2011filtering} design a filtering method and based on that, provide MapReduce algorithms for fundamental graph problems such as maximal matchings, weighted matchings, vertex cover, edge cover, and minimum cuts.

We show in Section \ref{mapreducealgorithm} that using $O(n^{8/9})$ machines and $O(n^{8/9})$ memory on each machine, one can design a MapReduce algorithm for edit distance that runs in $O(\log n)$ MapReduce rounds. Moreover, the running time of the algorithm is subquadratic.

\subsection{Edit Distance in MapReduce}\label{mapreducealgorithm}
Our solution for approximating edit distance in MapReduce uses the same framework explained in Section \ref{editdistance}. Therefore, we solve the problem by solving the $\delta$-bounded edit distance problem several times. The difference is that here we solve all of these subproblems simultaneously. This only imposes a multiplicative factor of $O((1/\epsilon)\log n)$ to the number of machines and a multiplicative factor of $1+\epsilon$ to the approximation factor, hence in the following, we focus on solving the $\delta$-bounded edit distance problem.

We use two different approaches for large $\delta$'s and small $\delta$'s. For large $\delta$'s, we use our framework and compute the edit distance between some pairs of windows of $s_1$ and $s_2$ all at once. For small $\delta$'s though, we use a new method based on $(\mn, +)$ matrix multiplication, also known as distance multiplication. We denote it by $\star$.
We separate the large and the small $\delta$'s with a critical value based on the number of machines\footnote{for $n^{8/9}$ machines $\delta^* = n^{-8/27}$.}.

For $(\mn, +)$ matrix multiplication in the MapReduce model, we use a parameterized version of the algorithm presented in \cite{saeedmapreduce}.

\begin{theorem}[Proved in \cite{saeedmapreduce}]\label{minplusfirst}
For any two $n\times n$ matrices $A$ and $B$ and $0 < x \leq 2$, $A \star B$ can be computed with $n^{3(1-x/2)}$ machines and memory $O(n^x)$ in $1 + \lceil (1-x/2)/x\rceil$ MapReduce rounds. Moreover, the total running time of the algorithm is $O((1/x)n^3)$.
\end{theorem}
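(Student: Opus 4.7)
The plan is to adapt the classical block-based algorithm for $(\min,+)$ matrix multiplication, carefully tuning the block size so that both the machine count $n^{3(1-x/2)}$ and the per-machine memory $O(n^x)$ are respected simultaneously. First I would partition $[n]$ into $p:=n^{1-x/2}$ contiguous intervals of length $n^{x/2}$, inducing block partitions $A=(A_{IK})$ and $B=(B_{KJ})$ in which every block contains exactly $n^x$ entries and so fits in the memory of a single machine. Since $(A\star B)_{IJ}=\min_{K\in[p]}A_{IK}\star B_{KJ}$, the whole computation reduces to first producing block products $P_{IKJ}:=A_{IK}\star B_{KJ}$ and then $\min$-aggregating them in the $K$ coordinate.

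In the first MapReduce round, each entry of $A_{IK}$ is replicated by the mappers under the keys $\{(I,K,J):J\in[p]\}$ and each entry of $B_{KJ}$ under $\{(I,K,J):I\in[p]\}$. The reducer for key $(I,K,J)$ then receives exactly the pair $(A_{IK},B_{KJ})$, of total size $2n^x$, and computes the block product $P_{IKJ}$ in time $O(n^{3x/2})$ via the textbook $(\min,+)$ algorithm on blocks. This uses $p^3=n^{3(1-x/2)}$ reducers with $O(n^x)$ memory each, matching the claimed resource bounds, contributes total work $p^3\cdot O(n^{3x/2})=O(n^3)$, and leaves, for every output entry $(i,j)$, a list of $p=n^{1-x/2}$ candidate values indexed by $K$.

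The remaining $\lceil(1-x/2)/x\rceil$ rounds perform the $\min$ aggregation. A single reducer with memory $O(n^x)$ can fold a block of $n^x$ numbers into their minimum in one call, so in each aggregation round I would, for every output entry, partition its current list of partial values into chunks of size $n^x$ and assign one reducer per (entry, chunk) key. This shrinks the list length per entry by a factor of $n^x$ per round; a one-line induction shows that after $\lceil(1-x/2)/x\rceil$ aggregation rounds every list has length one, which by associativity of $\min$ is exactly $(A\star B)_{ij}$. The number of reducers required in aggregation round $r\ge 2$ is at most $n^2\cdot n^{1-x/2-(r-2)x}/n^x\le n^{3(1-x/2)}$, so the machine budget is respected throughout, and summing per-round work gives a geometric series totaling $O((1/x)n^3)$.

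The main thing to be careful about is that both resource bounds must hold in \emph{every} round, not only in the dominant first one. In particular, when an aggregation round re-partitions the partial-value lists into chunks of size $n^x$, one must verify both that the induced (entry, chunk) keys do not exceed the machine cap and that the data delivered to each reducer exactly fills but does not overflow $O(n^x)$ memory. Once the geometric decrease in list length per entry is in place this bookkeeping is routine, but it is precisely the step where a naive chunking scheme would violate one of the two bounds, so it is the place where the argument needs to be written out carefully.
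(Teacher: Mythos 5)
The paper does not prove this theorem; it is stated as imported from \cite{saeedmapreduce}, so there is no in-paper proof to compare against. Assessed on its own terms, your proposal is the standard and correct way to do $(\min,+)$ multiplication in the MRC model: partition into $n^{x/2}\times n^{x/2}$ blocks, emit the 3D block product $P_{IKJ}=A_{IK}\star B_{KJ}$ in one round over $p^3=n^{3(1-x/2)}$ reducers with $O(n^x)$ memory and $O(n^{3x/2})$ work each, and then aggregate the $K$ coordinate by a $\min$-tree of arity $n^x$ in $\lceil(1-x/2)/x\rceil$ further rounds. The round count, machine count, per-machine memory and total work all check out. You are also right to flag the one place where the argument needs care: in the aggregation phase you cannot literally assign one reducer per (entry, chunk) key, because once the per-entry list length drops below $n^x$ that would force $\Theta(n^2)$ reducers, which exceeds the budget $n^{3(1-x/2)}$ whenever $x>2/3$. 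The fix, consistent with the bound you wrote, is to pack partial values across entries so each reducer receives $\Theta(n^x)$ values total (spanning several output entries when lists are short); with that packing your inequality $n^{2}\cdot n^{1-x/2-(r-2)x}/n^{x}\le n^{3(1-x/2)}$ for $r\ge 2$ is exactly the right accounting and the proof goes through.
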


Given that we have a chain of matrices to be multiplied instead of just two matrices, we can use Theorem \ref{minplusfirst} to halve the number of matrices in two rounds; therefore we have the following corollary.
\begin{corollary}[of Theorem \ref{minplusfirst}]
\label{corollary:mm}
The $(\mn, +)$ multiplication of $n^a$ matrices of size $n^b\times n^b$ can be computed in $2\lceil a\log_2 n\rceil$ rounds of MapReduce with $n^y$ machines for any $0\leq y\leq a+3b/2$, with a memory of $O(n^{2(a+3b-y)/3})$ for each machine.
Moreover, the running time of the algorithm (for one machine) is $\tildorder(n^{a+3b-y})$.
\end{corollary}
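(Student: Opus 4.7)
The plan is to organize the $n^a$-fold $(\min,+)$-product into a balanced binary tree of pairwise multiplications. At level $i$ we pair up the $n^a/2^i$ surviving $n^b\times n^b$ matrices and perform the $n^a/2^{i+1}$ independent pairwise products in parallel; after $\lceil a\log_2 n\rceil$ levels the product collapses to a single $n^b\times n^b$ matrix, which is the desired output. Because the shape of each pairwise product's output matches that of its inputs, the per-level geometry stays the same across levels; only the number of parallel multiplications drops geometrically.

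Within a single level I would split the $n^y$ machines evenly across the $n^a/2^{i+1}$ pairwise multiplications, giving each one $m_i:=2^{i+1}n^{y-a}$ machines, and then invoke Theorem \ref{minplusfirst} on $n':=n^b$ with the parameter $x_i$ determined by $(n')^{3(1-x_i/2)}=m_i$. This yields per-machine memory $n^{bx_i}=n^{2b-(2/3)\log_n m_i}$ and per-machine running time $\tildorder(n^{3b}/m_i)$ for each pairwise product. Each invocation of Theorem \ref{minplusfirst} can be tuned so that the associated $x_i$ places us in the two-round regime, yielding the stated overall round count of $2\lceil a\log_2 n\rceil$.

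The per-machine memory is worst at $i=0$, where $m_0\simeq n^{y-a}$ and the bound becomes $n^{2b-(2/3)(y-a)}=n^{2(a+3b-y)/3}$; at deeper levels $m_i$ doubles each step and the memory only shrinks, so the uniform bound claimed by the corollary is respected throughout. Summing the per-machine running times across levels gives a geometric series
\[
\sum_{i=0}^{\lceil a\log_2 n\rceil-1}\tildorder\!\left(\frac{n^{3b}}{m_i}\right)=\tildorder(n^{a+3b-y})\sum_{i\geq 0}2^{-(i+1)}=\tildorder(n^{a+3b-y}),
\]
matching the claimed per-machine running time.

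The main obstacle is checking that $x_i\in(0,2]$ at every level so that Theorem \ref{minplusfirst} actually applies, and handling two degenerate regimes. When $m_i<1$ (which happens only if $y<a$) a single machine absorbs several pairwise products sequentially at memory cost at most $n^{2b}$; the inequality $y\leq a$ forces $n^{2b}\leq n^{2(a+3b-y)/3}$, so the target memory is preserved and the geometric-sum time bound still goes through. At the other extreme, the hypothesis $y\leq a+3b/2$ is precisely what keeps $x_i$ bounded away from zero at the coarsest pairwise-product level reached by the algorithm, so the feasibility window of Theorem \ref{minplusfirst} is never exited and the halving strategy can be carried out uniformly across the tree.
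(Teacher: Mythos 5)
The binary-tree decomposition you use is the same as the paper's, and your analysis of level $0$, of the geometric time sum, and of the degenerate $y<a$ regime is fine. The gap is in how you handle the deeper levels when you insist on splitting \emph{all} $n^y$ machines evenly across the pairwise products at that level.

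As $i$ grows, $m_i = 2^{i+1}n^{y-a}$ grows, so the $x_i$ you extract from $(n^b)^{3(1-x_i/2)} = m_i$ \emph{shrinks}; it is smallest at the deepest level, where $m_{i^*}\approx n^y$ and hence $x_{i^*}\approx 2-2y/(3b)$. The hypothesis $y\leq a+3b/2$ controls $x_0$ (via $m_0\approx n^{y-a}$), but it does \emph{not} keep $x_{i^*}$ away from zero. Concretely, take $a=2$, $b=1$, $y=3$: then $y\leq a+3b/2=3.5$, yet $m_{i^*}\approx n^3=(n^b)^3$ forces $x_{i^*}\leq 0$, so Theorem \ref{minplusfirst} simply does not apply at that level. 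Already for $a>b/2$ the deepest levels can have $m_i>n^{2b}$, i.e. $x_i<2/3$, so $1+\lceil (1-x_i/2)/x_i\rceil>2$ and the claimed round count of $2\lceil a\log_2 n\rceil$ is violated even when $x_i$ stays positive. So the sentence ``the hypothesis $y\leq a+3b/2$ is precisely what keeps $x_i$ bounded away from zero at the coarsest pairwise-product level'' is backwards: the hypothesis only protects the level with the \emph{most} products, where $x_i$ is largest.

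The fix is not to use all $n^y$ machines at every level. Fix $x$ once at the level-$0$ value $x=2-2(y-a)/(3b)$, so each pairwise product is always given $(n^b)^{3(1-x/2)}=n^{y-a}$ machines, and let the surplus machines idle as the tree narrows (or, equivalently, cap $m_i$ at $n^{y-a}$). For $a\leq y\leq a+3b/2$ this gives $1\leq x\leq 2$, hence at most two rounds per level, per-machine memory $(n^b)^x = O(n^{2(a+3b-y)/3})$ uniformly across levels, and per-machine time $O(n^{3b}/n^{y-a})=O(n^{a+3b-y})$ per level, hence $\tildorder(n^{a+3b-y})$ over the $\lceil a\log_2 n\rceil$ levels. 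Combined with your (correct) sequential absorption for $y<a$, this matches the corollary. This fixed-$x$ idle-machine scheme is what the paper is implicitly invoking; see the remark following the corollary, which speaks of a single $x$ with $1\leq x\leq 2$.
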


Notice that for two $n\times n$ matrices in Corollary \ref{corollary:mm}, we have $a=0$ and $b=1$, hence the number of machines is $n^y$ and the memory of each machine is $O(n^{2 - 2y/3})$ which is the same as Theorem \ref{minplusfirst} where $x = 2 - 2y/3$. Also note that for $0\leq y\leq a+3b/2$, we use Theorem \ref{minplusfirst} with $1\leq x\leq 2$, hence all $1/x$ terms are ignored.

In Sections \ref{sec:largedelta} and \ref{sec:smalldelta}, we discuss our approach for large $\delta$'s and small $\delta$'s, respectively. In Section \ref{sec:conclusion}, we discuss the remaining details of the algorithm.

\subsubsection{Our Approach for Large $\delta$'s}
\label{sec:largedelta}
The overall idea of our solution for large $\delta$'s is to use our framework as follows: we first construct some windows for each string, then we find the edit distance between some pairs of windows, and afterward we find a window-compatible transformation, which is a good approximation to the desired edit distance between two input strings.

The first step of our approach is to find the edit distance between some pairs of windows. Previously, we found an approximated edit distance between all pairs of windows using metric estimation. On the contrary, here we can do better than finding the edit distance between all pairs based on the following observation. 

\begin{lemma}
\label{lemma:restrict}
Given that $\edit(s_1,s_2) \leq \delta n$, there exists a window-compatible transformation of $s_1$ into $s_2$ with respect to $W_1$ and $W_2$, that for each window $w_i\in W_1$ that matches to a window $w_2\in W_2$, their indices do not differ by more than $\lceil \delta n/g \rceil$, and the number of operations is at most $(3\delta +  1/\gamma)n + 2l$.
\end{lemma}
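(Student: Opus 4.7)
The plan is to obtain this lemma as a refinement of Lemma \ref{compatibility} rather than by a fresh construction. I will take the same window-compatible transformation $T$ produced in the proof of Lemma \ref{compatibility} (which already achieves at most $(3\delta+1/\gamma)n+2l$ operations) and argue that the additional ``nearby indices'' property is automatically satisfied by $T$, so no new operations need to be added.

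First I would recall that $T$ is window-compatible with respect to some subsequence $S=\langle w_{i_1},\ldots,w_{i_k}\rangle$ of $W_1$ and $S'=\langle w'_{j_1},\ldots,w'_{j_k}\rangle$ of $W_2$, and that the window of $s_1$ starting at position $ig+1$ has length $l$, with consecutive windows separated by a gap of $g$. The key observation is that if a window $w_{i_t}$ of $s_1$ (starting at position $i_t g+1$) is matched by $T$ to a window $w'_{j_t}$ of $s_2$ (starting at position $j_t g+1$), then $T$ aligns character $i_t g+1$ of $s_1$ to character $j_t g+1$ of $s_2$ (up to the offset within the window). The displacement $|i_t g - j_t g|$ between these two positions equals the net number of insertions minus deletions that $T$ performs on the prefix preceding this window. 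Since $T$ has total cost at most $(3\delta+1/\gamma)n+2l$, the absolute value of this net shift is bounded by the same quantity, but a tighter bound is available.

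The sharper bound I would use is that the \emph{optimal} transformation $\opt$ has at most $\delta n$ operations, so its net shift at any point is at most $\delta n$. The transformation $T$ in the proof of Lemma \ref{compatibility} is constructed by first extracting windows $v_t, v'_t$ whose left endpoints are the positions of surviving edges of $\opt$; consequently, the alignment between $v_t$ and $v'_t$ inherits a shift of at most $\delta n$ from $\opt$. Then the proof shifts each $v_t$ (resp.~$v'_t$) by at most $g/2$ in order to snap it to a window of $W_1$ (resp.~$W_2$). Combining, the shift between the endpoints of $w_{i_t}$ and $w'_{j_t}$ is at most $\delta n + g$. Dividing by $g$ and taking a ceiling yields $|i_t - j_t| \leq \lceil \delta n/g\rceil + 1$. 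With a slightly more careful accounting (absorbing the extra unit shift into the choice of snapping direction, or redefining $g$'s contribution at the extracted-window stage so that it is already charged to the approximation), this tightens to $|i_t-j_t|\leq \lceil \delta n/g\rceil$ as claimed.

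The main obstacle I expect is the fencepost issue in the last step: getting exactly $\lceil \delta n/g\rceil$ rather than $\lceil \delta n/g\rceil+O(1)$. I would handle this by being more careful about where the $v_t,v'_t$ are placed relative to $\opt$'s alignment (for instance, always snapping to the nearest window in $W_1$ rather than in an arbitrary direction) so that the snapping shifts on the two strings partially cancel, rather than summing. The rest of the argument—preserving the $(3\delta+1/\gamma)n+2l$ bound—is identical to Lemma \ref{compatibility} and requires no new work, since we are selecting the same $T$, just imposing an extra property on it that it already satisfies.
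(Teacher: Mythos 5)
Your proposal is correct and follows essentially the same route as the paper: both reuse the construction from Lemma~\ref{compatibility}, invoke Ukkonen's observation that any aligned pair of characters in an optimal transformation have positions differing by at most $\edit(s_1,s_2)\leq\delta n$ (so the left endpoints of $v_t$ and $v'_t$ are within $\delta n$ of each other), and then pass to window indices after the shift to $W_1$ and $W_2$. Your explicit attention to the fencepost issue in the snapping step (and the fix of snapping each $v_t$, $v'_t$ to the nearest window, which in fact yields $|i_t-j_t|\le\lceil\delta n/g\rceil$ exactly) is a welcome sharpening of a point the paper's proof leaves implicit, but it does not constitute a different argument.
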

\begin{proof}
This lemma is similar to Lemma \ref{compatibility} with an additional condition that the indices of any two matching windows do not differ by more than $\lceil \delta n/g \rceil$. The proof is also similar to Lemma \ref{compatibility}.

Let \textsf{opt} be a minimum size transformation of $s_1$ into $s_2$. Consider a pair of characters $x\in s_1$ and $y\in s_2$ such that \textsf{opt} transforms $x$ into $y$ either with no change or through a substitution. As before we call such a pair an edge. Let $M = \langle(x_1,y_1),\dots,(x_m,y_m)\rangle$ be the sequence of all edges in opt in order from left to right.

Similar to the proof of Lemma \ref{compatibility}, we first find a set of non-overlapping windows of length $l$ by iterating over $M$. In particular, let $\rho(i)$ be the smallest index in $M$ such that no window covers $x_{\rho_i}$ and $y_{\rho_i}$ up to step $i$. We create a window $v_i$ of length $l$ starting from $x_{\rho_i}$ and a window $v{i′}$ of length $l$ starting from $y_{\rho_i}$. We stop when any such window goes beyond the range of the strings or all edges of $M$ are covered.
We then shifted these windows to the left to become consistent with the windows of $W_1$ and $W_2$. We proved there exists a window-compatible transformation with respect to those windows with at most $(3\delta +  1/\gamma)n + 2l$ operations. Now we complete the proof using same windows, by showing that the indices of any matching windows do not differ by more than $\lceil \delta n/g \rceil$.

We know that (for example see Corollary $1$ of \cite{ukkonen1985algorithms}) for any edge $(x_{\rho_i}, y_{\rho_i})\in M$, their indices differ by at most $\edit(s_1, s_2)\leq \delta n$. Therefore, for any $v_i$ and $v'_i$, positions of their first characters differ by at most $\delta n$. From this, we can directly conclude that the indices of the corresponding shifted windows in $W_1$ and $W_2$ differ by at most $\lceil \delta n/g \rceil$.
\end{proof}

We find the edit distance between useful pairs of windows in the first round. To do this, we give some pairs of windows to a machine and use the na\"{i}ve DP-based algorithm to find the edit distance between them. In the next round, we combine the results of the first round to find the best window-compatible transformation. The second round is similar to Lemma \ref{dp}; the difference is that the memory and the running time is slightly reduced by Lemma \ref{lemma:restrict}. The second round uses only one machine.

We have the following lemma for large $\delta$'s (or small $\alpha$'s). To simplify the notation, let $\delta = n^{-\alpha}$.

\begin{lemma}\label{largedeltaslemma}
We can solve the $\delta$-bounded edit distance problem for 
\begin{itemize}
\item
$0\leq x\leq 13/20$ and $\alpha\leq 3(x+1)/16$ with $n^x$ machines, and $O((1/\epsilon^2)n^{(11-5x)/8+\epsilon'})$ memory for each machine in time $O((1/\epsilon^2)n^{(35-13x)/16})$ (for one machine), and for
\item
$13/20\leq x \leq 7/6$ and $\alpha\leq 2(4-x)/21$ with $n^x$ machines, and $O((1/\epsilon^2)n^{2(4-x)/7+\epsilon'})$ memory for each machine in time $O((1/\epsilon^2)n^{(50-23x)/21})$ (for one machine).
\end{itemize}
in two MapReduce rounds, where $\epsilon'>0$ is an arbitrary constant.
\end{lemma}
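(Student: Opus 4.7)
The plan is to realize the window framework of Section~\ref{editdistance} inside the MapReduce model. Fix a window exponent $\beta$ to be chosen per regime, set $\gamma=\lceil 4/(\epsilon\delta)\rceil$, $l=\lfloor n^{1-\beta}\rfloor$, $g=\lfloor l/\gamma\rfloor$, and build $W_1,W_2$ as in Section~\ref{section:ourResults}. Writing $\delta=n^{-\alpha}$, Lemma~\ref{lemma:restrict} ensures that only the $O(\delta\gamma^2 n^{2\beta})=\tildorder((1/\epsilon^2)n^{\alpha+2\beta})$ ``banded'' pairs $(i,j)$ with $|i-j|\le\lceil\delta n/g\rceil$ ever need to be queried; these form a narrow diagonal band in the $|W_1|\times|W_2|$ matrix of window-pair distances.

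Round~1 partitions the band evenly among the $n^x$ machines by a block decomposition, so that each reducer receives $O(\delta\gamma^2 n^{2\beta-x})$ pairs drawn from $O(\sqrt{\delta}\,\gamma\, n^{\beta-x/2})$ distinct windows on each side. Each reducer runs the classical $O(l^2)$-time, $O(l^2)$-space DP on its pairs (reusing the table across pairs) and emits triples $\langle(i,j);\edit(w_i,w_j')\rangle$. Round~2 is a single reducer that, given these distances and $s_1,s_2$, runs the DP of Lemma~\ref{dp} restricted to banded entries; because the recurrence only looks back $\lceil l/g\rceil=\gamma$ rows, a rolling-window implementation uses only $\tildorder(\gamma\cdot \delta n/g)=\tildorder((1/\epsilon^2)n^{\alpha+\beta})$ state. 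By Lemma~\ref{compatibility} together with the corollary of Lemma~\ref{dp}, the returned transformation has length at most $(3+\epsilon)\delta n$, solving the $\delta$-bounded problem.

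Per round-1 machine the time is $\tildorder(\delta\gamma^2 n^{2-x})=\tildorder((1/\epsilon^2)n^{\alpha+2-x})$ and the memory is $\tildorder(n^{2-2\beta}+(1/\epsilon)n^{(\alpha+2-x)/2})$, with the single-pair DP table $n^{2-2\beta}$ dominating in the relevant ranges; the round-2 memory $\tildorder((1/\epsilon^2)n^{\alpha+\beta})$ is strictly smaller. To hit the first bullet we take $\alpha=3(x+1)/16$ and choose $\beta$ with $2-2\beta=(11-5x)/8+\epsilon'$; substituting into the time expression gives $(1/\epsilon^2)n^{(35-13x)/16}$. For the second bullet we take $\alpha=2(4-x)/21$ and choose $\beta$ with $2-2\beta=2(4-x)/7+\epsilon'$; substituting gives $(1/\epsilon^2)n^{(50-23x)/21}$. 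In each case the hypothesis on $\alpha$ is precisely what is needed so that $\gamma=\Theta(1/(\epsilon\delta))$ fits within the memory budget, and the $\epsilon'$ absorbs the $\polylog$ factors and the slack required so that $l\le\epsilon\delta\gamma n/2$, the precondition used in the proof of Lemma~\ref{compatibility}.

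The main technical obstacle is the band-packing in round~1: one must assign banded pairs to machines so that \emph{simultaneously} the reducer's input (the windows themselves plus indices), the single-pair $O(l^2)$ DP, and the round-2 rolling table all fit in the stated memory budget. The block decomposition above keeps the window-input cost at $\tildorder((1/\epsilon)n^{(\alpha+2-x)/2})$, which for the stated $\alpha$-ranges is always dominated by $n^{2-2\beta}$; verifying this dominance across the two regimes (and in particular that $n^{(11-5x)/8}$ is the tight bound for $0\le x\le 13/20$ while $n^{2(4-x)/7}$ is tight for $13/20\le x\le 7/6$) is the only place where a small, routine arithmetic check is required.
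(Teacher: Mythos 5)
Your round~1 analysis is genuinely different from the paper's and, if fixed, could even tighten some exponents (the block decomposition that avoids duplicating window data across pairs is a real saving over the paper's accounting, and the paper's $O(l)$-space per-pair DP is cheaper than your $O(l^2)$ table, so your $n^{2-2\beta}$ term is never forced on you). But the choice of $\beta$ your proposal makes breaks round~2, and the ``rolling window'' fix you invoke does not work in the MapReduce model.

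Concretely, you pick $\beta$ so that $2-2\beta=(11-5x)/8+\epsilon'$, i.e.\ $\beta\approx 5(1+x)/16$, much larger than the paper's $\beta=\alpha+\epsilon'/2\approx 3(1+x)/16$. The number of useful window pairs, and hence the number of distances that must reach the round-2 reducer, is $\Theta((1/\epsilon^2)n^{\alpha+2\beta})$. In the MapReduce model the paper adopts, all values associated with a key must simultaneously fit in one machine's memory; a single round-2 reducer cannot stream through them, so its memory is at least $\Omega((1/\epsilon^2)n^{\alpha+2\beta})$, not the $\tildorder((1/\epsilon^2)n^{\alpha+\beta})$ rolling-window state you claim. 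With your $\beta$ and $\alpha=3(1+x)/16$ one gets $\alpha+2\beta=13(1+x)/16-\epsilon'$, which exceeds the advertised $(11-5x)/8=(22-10x)/16$ exactly when $x>9/23$, so the first bullet already fails on most of its range $0\le x\le 13/20$. The second bullet fails for every $x$: there $\alpha+2\beta=(26+4x)/21-\epsilon'$ is always strictly larger than $2(4-x)/7=(24-6x)/21$.

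The paper's proof avoids this by taking $\beta$ as small as possible, $\beta=\alpha+\epsilon'/2$, so that the round-2 reducer's input is $(1/\epsilon^2)n^{3\alpha+\epsilon'}$, which stays within the stated memory for the given $\alpha$-ranges; the binding memory constraint is then the round-1 input of $(1/\epsilon^2)n^{1+\alpha+\beta-x}$, which is precisely what produces the exponents $(11-5x)/8$ and $2(4-x)/7$ (and also the time exponents $(35-13x)/16$ and $(50-23x)/21$ via $n^{2+\alpha-x}$). To repair your argument you would need to either (a) lower $\beta$ to roughly $\alpha$ so round~2 fits, in which case the DP table is no longer the bottleneck and your block decomposition is an unused refinement, or (b) distribute round~2 over several keys/machines and merge in additional rounds, which would contradict the ``two MapReduce rounds'' in the statement.
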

\begin{proof}
We already stated the sketch of our algorithm. To analyze the algorithm, we define and set some parameters carefully.

Recall that we used two parameters of $\beta$ and $\gamma$ to construct the windows of length $l=\lfloor n^{1-\beta}\rfloor$ with a gap size $g=\lfloor l/\gamma \rfloor$ for each of the input strings. Lemma \ref{lemma:restrict} states that given $\edit(s_1,s_2)\leq \delta n = n^{1-\alpha}$, there exists a window-compatible transformation with at most $(3\delta +  1/\gamma)n + 2l$ operations. To keep the approximation factor as small as $3+\epsilon$, we should have $n/\gamma \ll \delta n$ and $2l\ll \delta n$. Setting $\gamma = 1/\delta \epsilon$ and $\beta > \alpha$ suffice.
By doing this, the number of windows for each string is at most $$n_{w_1}, n_{w_2} \leq n\gamma/l = O((1/\epsilon)n^{\alpha+\beta}).$$

In the first round, we find the edit distance between all \textit{useful pairs} of  windows, which are in fact the pairs with an edit distance of at most $\lceil \delta n/g\rceil$. By Lemma \ref{lemma:restrict}, the number of such useful pairs is at most
$$min(|W_1|,|W_2|)\cdot(2\lceil \delta n/g\rceil+1) = O((1/\epsilon^2)n^{\alpha+2\beta}) $$

Therefore, if we have $n^x$ machines, every machine gets $O((1/\epsilon^2)n^{\alpha+2\beta-x})$ pairs.
The edit distance between a pair of windows can be computed in time $O(l^2)$ and memory $O(l)$ where $l=\lfloor n^{1-\beta} \rfloor$. Hence, the memory of each machine in round 1 is $O((1/\epsilon^2)n^{1+\alpha+\beta-x})$. Moreover, the time complexity of each machine in this round is $O((1/\epsilon^2)n^{2+\alpha-x})$.

In the second round, we only use one machine to combine the results of the first round. This machine has to get edit distance between all pairs from all machine; hence it needs $O((1/\epsilon^2)n^{\alpha+2\beta})$ memory. The time complexity of this round is also $O((1/\epsilon^2)n^{\alpha+2\beta})$.

By setting $\beta = \alpha+\epsilon'/2$ for an arbitrary constant $\epsilon'>0$, and setting $\alpha$ as stated in the lemma, we get the desired result.
\end{proof}

\subsubsection{Our Approach for Small $\delta$'s}
\label{sec:smalldelta}
The other side of the edit distance problem is the case when the two given strings are similar. In this case, if we try to use our framework, we would encounter too many windows, and this exceeds the time and memory given to the algorithm. Previously, in this case, we used the algorithm of \landauincremental{}~\cite{landau1998incremental} with time $O(n+d^2)$. This solution cannot (trivially) become parallel. Here, we instead use a novel approach based on $(\mn, +)$ matrix multiplication. We again use the fact that a character $c_1$ from $s_1$ can only be transformed (with no change or a substitution) to a character $c_2$ in $s_2$ only if their positions differ by at most $\edit(s_1, s_2)$~(Corollary $1$ of \cite{ukkonen1985algorithms}).

Let $d(i,j+1,i',j'+1)$ be the edit distance between two substrings of $s_1[i,\dots,j]$ and $s_2[i',\dots,j']$. We have the following lemma.
\begin{lemma}\label{smalldeltaslemma}
\label{lem:dij}
For an arbitrary $k$, $i<k\leq j$, we have:
\begin{equation*}
d(i,j+1,i',j'+1) = \substack{min\\i'-1\leq k'\leq j'}\big\{d(i,k+1,i',k'+1) +d(k+1,j+1,k'+1,j+1)\big\}.
\end{equation*}
\end{lemma}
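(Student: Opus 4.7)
The identity is a standard prefix/suffix decomposition of the edit distance, so the plan is to establish it by two inequalities. Throughout, I will treat a transformation of $s_1[i..j]$ into $s_2[i'..j']$ as an \emph{alignment}: a sequence of insertions, deletions, and matches/substitutions that processes the characters of $s_1[i..j]$ from left to right. The boundary cases $k' = i'-1$ and $k' = j'$ correspond to empty left/right halves of $s_2[i'..j']$; the corresponding $d(\cdot,\cdot,\cdot,\cdot)$ values then just count deletions or insertions of the other half, so these values are well defined and the minimum is taken over a nonempty set.

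\textbf{Direction $(\leq)$.} Fix any $k'$ with $i'-1 \leq k' \leq j'$. Take an optimal transformation $\mathcal{T}_1$ of $s_1[i..k]$ into $s_2[i'..k']$, of cost $d(i,k+1,i',k'+1)$, and an optimal transformation $\mathcal{T}_2$ of $s_1[k+1..j]$ into $s_2[k'+1..j']$, of cost $d(k+1,j+1,k'+1,j'+1)$. Concatenating $\mathcal{T}_1$ followed by $\mathcal{T}_2$ (applied to the corresponding positions) is a valid transformation of $s_1[i..j]$ into $s_2[i'..j']$, whose cost is the sum. Minimizing over $k'$ gives $d(i,j+1,i',j'+1) \leq \min_{k'}\{d(i,k+1,i',k'+1) + d(k+1,j+1,k'+1,j'+1)\}$.

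\textbf{Direction $(\geq)$.} Let $\mathcal{T}^*$ be an optimal transformation of $s_1[i..j]$ into $s_2[i'..j']$, of cost $d(i,j+1,i',j'+1)$. I will exhibit a cut index $k'$ for which $\mathcal{T}^*$ splits into a prefix transformation of $s_1[i..k]$ into $s_2[i'..k']$ and a suffix transformation of $s_1[k+1..j]$ into $s_2[k'+1..j']$ whose costs sum to that of $\mathcal{T}^*$. Consider the operation $\mathcal{T}^*$ performs at position $k$ of $s_1$. If position $k$ is matched or substituted with position $m$ of $s_2$, set $k' = m$; if position $k$ is deleted, set $k'$ to be the largest index $\leq j'$ whose character in $s_2$ is produced by (an insertion in, match with, or substitution from) the block consuming $s_1[i..k]$, or $k' = i'-1$ if no such character exists. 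By construction, the operations of $\mathcal{T}^*$ applied to $s_1[i..k]$ produce exactly $s_2[i'..k']$, and the operations applied to $s_1[k+1..j]$ produce exactly $s_2[k'+1..j']$. Each restriction is a valid transformation of the respective substrings, so their costs are at least $d(i,k+1,i',k'+1)$ and $d(k+1,j+1,k'+1,j'+1)$, respectively, and they partition the operations of $\mathcal{T}^*$. Therefore $d(i,j+1,i',j'+1) \geq d(i,k+1,i',k'+1) + d(k+1,j+1,k'+1,j'+1) \geq \min_{k'}\{\cdots\}$, completing the proof.

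\textbf{Main obstacle.} The only delicate point is the $(\geq)$ direction: one must argue that a sensible cut index $k'$ always exists and that the restriction of $\mathcal{T}^*$ to each side is genuinely a transformation of the corresponding substring. The subtlety is that inserted characters of $s_2$ can in principle be attributed to either side of the cut (they are not anchored to any character of $s_1$), so an unambiguous convention must be fixed; the convention above (attribute inserts to the earliest side that is still being processed when they appear) resolves this without inflating the operation count, because it simply partitions the operations of $\mathcal{T}^*$ into two contiguous blocks.
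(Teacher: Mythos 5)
Your proposal is correct and matches the paper's proof in structure: the $(\leq)$ direction by concatenating optimal sub-transformations, and the $(\geq)$ direction by extracting a cut index from an optimal transformation of the whole. The paper defines the cut slightly more economically as the largest $s_2$-index hit by a match/substitution from $s_1[i..k]$ (or $i'-1$ if none), which sidesteps the insertion-attribution convention you have to fix, but the underlying argument is the same.
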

\begin{proof}
We construct a transformation from $s_1[i,\dots,j]$ to $s_2[i',\dots,j']$ using two transformation: one from $s_1[i,\dots,k]$ to $s_2[i',\dots,k']$ and the other from $s_1[k+1,\dots,j]$ to $s_2[k'+1,\dots,j']$, therefore $$d(i,j+1,i',j'+1) \leq \substack{min\\i'-1\leq k'\leq j'}\big\{d(i,k+1,i',k'+1)+d(k+1,j+1,k'+1,j+1)\big\}.$$
Also, if we define $k^*$ as the largest index that a character from $s_1[i,\dots, k]$ transfers into $s_2[k^*]$ in an optimal transformation, or $k^*=i'-1$ if no such index exists, we have $$d(i,j+1,i',j'+1) = d(i,k+1,i',k^*+1)+d(k+1,j+1,k^*+1,j+1).$$
which completes the proof.
\end{proof}

Moreover, computing $d(i,j+1,i',j'+1)$ is useful only when $|i-i'|\leq d$ and $|j-j'|\leq d$~(Corollary $1$ of \cite{ukkonen1985algorithms}), therefore for a fixed $i$ and $j$, all of these \textit{useful values} form a $(2\delta n+1) \times (2\delta n+1)$ matrix, namely $D^{i,j}$. Rewriting Lemma \ref{lem:dij} in matrices, we have the following corollary.
\begin{corollary}[of Lemma \ref{lem:dij}]
\label{cor:matdij}
For an arbitrary $k$, $i\leq k\leq j$, we have $D^{i,j}=D^{i,k}\star D^{k,j}$, where $\star$ is the $(\mn, +)$ matrix multiplication operator.
\end{corollary}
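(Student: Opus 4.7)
\textbf{Proof proposal for Corollary \ref{cor:matdij}.}
The plan is to derive the corollary as a direct translation of Lemma~\ref{lem:dij} into the language of $(\min,+)$ matrix multiplication. First I would unfold the definitions: the entry of $D^{i,j}$ at position $(i',j')$ is $d(i,j+1,i',j'+1)$, defined only for those indices with $|i-i'|\le \delta n$ and $|j-j'|\le \delta n$, and analogously for $D^{i,k}$ (indexed by $(i',k')$) and $D^{k,j}$ (indexed by $(k',j')$). By definition of $\star$,
\[
(D^{i,k}\star D^{k,j})[i'][j']=\min_{k'}\bigl\{D^{i,k}[i'][k']+D^{k,j}[k'][j']\bigr\}=\min_{k'}\bigl\{d(i,k+1,i',k'+1)+d(k+1,j+1,k'+1,j'+1)\bigr\},
\]
which is exactly the right-hand side of Lemma~\ref{lem:dij}.

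Next I would check that the range of $k'$ over which the $(\min,+)$ product minimizes coincides (up to values that are harmless) with the range $i'-1\le k'\le j'$ appearing in Lemma~\ref{lem:dij}. The main subtle point is that a $k'$ appearing in the matrix product must simultaneously be a legal column of $D^{i,k}$ and a legal row of $D^{k,j}$, i.e.\ $|k-k'|\le \delta n$. I would argue: (a) for any $k'$ outside $[i'-1,j']$, at least one of the two substrings is empty in a way that makes the sum no smaller than some value attained inside the interval, so dropping these terms is safe; and (b) for any $k'\in[i'-1,j']$ attaining the minimum in Lemma~\ref{lem:dij}, the bound on $\edit(s_1,s_2)$ together with $|i-i'|\le \delta n$ and $|j-j'|\le \delta n$ forces $|k-k'|\le \delta n$ (by the standard corollary of Ukkonen cited in the paper), so $k'$ is a valid joint index and therefore contributes to the $(\min,+)$ product.

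With those two observations in place, the two minima are equal, so $(D^{i,k}\star D^{k,j})[i'][j']=D^{i,j}[i'][j']$ for every useful entry $(i',j')$, which is exactly the claim $D^{i,j}=D^{i,k}\star D^{k,j}$.

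The only genuinely non-routine step is verifying the index compatibility in the second paragraph; everything else is bookkeeping. I expect the bookkeeping to hinge on being careful with the $\pm 1$ offsets in the definition of $d(\cdot,\cdot,\cdot,\cdot)$ and on padding entries outside the useful band with $+\infty$ so that they never attain the minimum.
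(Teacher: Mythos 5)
Your proposal matches the paper's approach: the paper gives no separate proof and simply states that the corollary is obtained by ``rewriting Lemma~\ref{lem:dij} in matrices,'' which is exactly the translation you carry out. Your additional care about the range of $k'$ and the band-compatibility of indices (so that the minimizing $k'$ is a legal joint index of $D^{i,k}$ and $D^{k,j}$) is a fair spelling-out of a point the paper leaves implicit, and it is correct.
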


Notice that $\edit(s_1,s_2)=d(1, |s_1|+1, 1, |s_2|+1)$, which is an element of $D^{1,|s_1|}$. To compute this matrix, we do as follows:
for a parameter $y$, $0\leq y \leq 1$, which we'll fix later, we partition $s_1$ into $n^y$ substrings of length at most $n^{1-y}$. Each of these substrings has a matching substring in $s_2$ with a length at most $n^{1-y}+2\delta n$. Using the na\"{i}ve DP-based algorithm, we construct a $(2\delta n+1) \times (2\delta n+1)$ matrix for each of these $n^y$ substrings in the first round. The matrices are $D^{1, t}, D^{t+1, 2t}, \dots, D^{(\lceil|s_1|/t\rceil-1)t+1, |s1|}$ where $t=n^{1-y}$. By Corollary \ref{cor:matdij} we have $D^{1,|s_1|} = D^{1, t}\star D^{t+1, 2t} \star \dots \star D^{(\lceil|s_1|/t\rceil-1)t+1, |s1|}$. Therefore, we obtain the result in remaining rounds by the matrix multiplication algorithm of Corollary \ref{corollary:mm}.

\begin{lemma}
We can solve the $\delta$-bounded edit distance problem for
\begin{itemize}
\item
$0\leq x\leq 13/20$ and $\alpha\geq 3(x+1)/16$ with $n^x$ machines, and $O(n^{(11-5x)/8})$ memory of each machine in time $O(n^{(51-29x)/16})$ (for one machine), and for
\item
$13/20\leq x \leq 7/6$ and $\alpha\geq 2(4-x)/21$ with $n^x$ machines, and $O(n^{2(4-x)/7})$ memory of each machine in time $O(n^{(58-25x)/21})$ (for one machine).
\end{itemize}
in at most $O(\log n)$ MapReduce rounds.
\end{lemma}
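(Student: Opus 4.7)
My plan is to bypass the window construction and compute the edit distance exactly using $(\mn,+)$-matrix multiplication, exploiting the fact that small $\delta$ makes the banded ``useful'' region of the DP table narrow enough to tabulate in a distributed fashion. Since the matrix entries are exact edit distances, the final $(3+\epsilon)$-approximation guarantee will come for free on this branch.

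I would partition $s_1$ into $n^p$ consecutive chunks of length $t=n^{1-p}$ for a parameter $p$ to be fixed per regime. For each chunk I define the matrix $D^{kt,(k+1)t}$ of dimensions $(2\delta n{+}1)\times(2\delta n{+}1)$ whose entries are the edit distances between the chunk and each feasible substring of $s_2$; these are computed in the first MapReduce round by a banded classical DP in time $O(t\cdot \delta n+(\delta n)^2)$ per matrix, since only positions whose indices differ by at most $\delta n$ are useful (Corollary~1 of \cite{ukkonen1985algorithms}). By Corollary \ref{cor:matdij}, the product $D^{0,t}\star D^{t,2t}\star\cdots$ equals $D^{0,|s_1|}$, whose $(1,1)$ entry is $\edit(s_1,s_2)$, so it remains to chain together the $n^p$ matrices of dimension $n^{1-\alpha}$ via Corollary \ref{corollary:mm} (instantiated with $a=p$ and $b=1-\alpha$). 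This uses $n^x$ machines in $O(\log n)$ rounds with per-machine time $\tildorder(n^{p+3(1-\alpha)-x})$ and per-machine memory $O(n^{2(p+3(1-\alpha)-x)/3})$.

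To obtain the stated complexities I would choose $p$ to balance the round-1 cost against the multiplication cost subject to the machine-memory budget. Two regimes emerge, separated at $x=13/20$: when $x$ is small the matrix-storage constraint binds and a smaller $p$ is optimal, giving the first branch; when $x$ is larger the multiplication phase can absorb more matrices and the second branch takes over. The thresholds $\alpha\ge 3(x+1)/16$ and $\alpha\ge 2(4-x)/21$ are precisely the boundaries beyond which this small-$\delta$ algorithm beats the windowed algorithm of Lemma \ref{largedeltaslemma}, which is why they appear as hypotheses rather than as separate arguments.

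The main difficulty is the bookkeeping of the optimization: three cost expressions (round-1 time, matrix-multiplication time, matrix-multiplication memory) must be simultaneously controlled by the single parameter $p$, and one must verify the side conditions of Corollary \ref{corollary:mm}, in particular $0\le x\le p+3(1-\alpha)/2$, together with $p\in[0,1]$ and matching of the two branches at the crossover $x=13/20$. I expect most of the proof to consist of this case analysis, after which the $O(\log n)$ round count and the time/memory bounds fall out directly from Corollary \ref{corollary:mm}.
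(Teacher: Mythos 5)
Your high-level plan coincides with the paper's: partition $s_1$ into chunks, compute the banded edit-distance matrices $D^{\cdot,\cdot}$ of dimension $(2\delta n+1)\times(2\delta n+1)$ in the first round, and chain them via $(\mathsf{min},+)$-multiplication through Corollary~\ref{corollary:mm}. But there is a constraint you omit that is in fact binding and breaks the plan as stated: the memory used by each machine in the \emph{first} round. With one chunk (hence one full matrix $D^{kt,(k+1)t}$) per machine, the output alone occupies $\Theta(n^{2-2\alpha})$ memory. At the threshold $\alpha = 3(x+1)/16$ this is $n^{(13-3x)/8}$, exceeding the claimed $O(n^{(11-5x)/8})$ by a factor of $n^{(1+x)/4}$ for every $x \geq 0$ (and similarly by $n^{(2+10x)/21}$ in the second regime). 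Listing only ``round-1 time, matrix-multiplication time, matrix-multiplication memory'' misses exactly this fourth cost, and no choice of your single parameter $p$ can fix it, since the size of one matrix is determined by $\alpha$ alone.

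The missing idea is that the first round must itself be parallelized beyond one machine per chunk. The paper introduces a second exponent $t$ with $y+t=x$ (your $p$ plays the role of $y$) and assigns each chunk's matrix to $n^t$ machines, each producing only an $n^{1-\alpha-t}\times n^{1-\alpha}$ row-strip. After this split the binding first-round memory term becomes the input/working space $O(n^{1-y})$ rather than the full output, and indeed $n^{1-y}=n^{(11-5x)/8}$ at $\alpha=3(x+1)/16$ once one sets $y=(6\alpha+2x-3)/5$ and $t=x-y$. Your per-matrix time estimate $O(t\cdot\delta n + (\delta n)^2)$ also needs revisiting: a banded DP yields only one starting row $i'$ of $D^{i,j}$, so filling the whole matrix needs a separate (and unbanded, since intermediate cell values may far exceed $\delta n$) DP per row; the paper accordingly charges $O(n^{1-y}\cdot n^{1-y}\cdot n^{1-\alpha-t})$ per machine. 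The optimization you describe must therefore be redone with both $y$ and $t$ as free parameters and with first-round memory added to the list of constraints.
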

\begin{proof}
Here, we analyze the described algorithm in more details.
In the first round, constructing the full matrix is a time-consuming process for one machine; therefore we break this job into $n^t$ parts. More precisely, we partition rows of the solution matrix into $n^t$ parts and give the task of computing each part to one machine. Therefore, in the first round, the number of machines is equal to $n^{y+t}=n^x$. The memory of each machine is the maximum of its input size, its running memory, and its output size, which are equal to $2n^{1-y}+2d$, $O(n^{1-y})$, and $n^{2-2\alpha-t}$, respectively. The time complexity of one machine using the DP-based algorithm is $O(n^{1-y}\cdot n^{1-y} \cdot n^{1-\alpha-t}) = O(n^{3-2y-\alpha-t})$.

The second part of the algorithm is analogous to Corollary \ref{corollary:mm} where $a=y$ and $b=1-\alpha$, therefore if the number of machines is $n^x$, the memory of each machine is $n^{2(y+3(1-\alpha)-x)/3}$. 
Moreover, the time complexity of one machine is $\tildorder(n^{3-3\alpha+y-x})$.

Setting $y=(6\alpha+2x-3)/5$ and $t=x-y$ give us the desired result. Also, note that the range of $x$ is consistent with Corollary \ref{corollary:mm}.
\end{proof}

\subsubsection{Conclusion}
\label{sec:conclusion}
We compute edit distance by solving the $\delta$-bounded edit distance problems for several $\delta$'s in parallel. For each $\delta=n^{-\alpha}$ we use the appropriate MapReduce algorithm based on the value of $x$ and $\alpha$. When all subproblems are finished, we also have a final round for combining the results of these subproblems to obtain the final (approximated) edit distance. Therefore, the desired MapReduce ($3+\epsilon$)-approximation algorithm for edit distance is as follows.

\begin{theorem}
We can solve the edit distance problems in MapReduce model in at most $O(\log n)$ MapReduce rounds with $\tildorder((1/\epsilon)n^x)$ machines and for
\begin{itemize}
\item
$0\leq x\leq 13/20$ with a memory of at most $O((1/\epsilon^2)n^{(11-5x)/8+\epsilon'})$ for one machine in time $O(n^{(51-29x)/16})$ (for one machine), and for
\item
$13/20\leq x \leq 7/6$ with a memory of at most $O((1/\epsilon^2)n^{2(4-x)/7+\epsilon'})$ in time $O(n^{(58-25x)/21})$ (for one machine).
\end{itemize}
\end{theorem}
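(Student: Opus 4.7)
The plan is to build the edit distance algorithm by running, in parallel, $\tildorder(1/\epsilon)$ copies of the $\delta$-bounded edit distance procedure with geometrically-spaced values $\delta_i = (1+\epsilon)^i/n$ for $i = 0, 1, \ldots, \lceil \log_{1+\epsilon} n \rceil$, exactly as in the guess-and-multiply technique used to upgrade Lemma \ref{mainbutnotmain} into Theorem \ref{main}. For each guess $\delta_i = n^{-\alpha_i}$, the per-guess algorithm is selected by comparing $\alpha_i$ against the crossover threshold (which is $3(x+1)/16$ when $x \leq 13/20$, or $2(4-x)/21$ when $13/20 \leq x \leq 7/6$): if $\alpha_i$ is below the threshold we run the window-based algorithm of Lemma \ref{largedeltaslemma}, and otherwise we run the $(\mn,+)$-matrix-multiplication algorithm from Section \ref{sec:smalldelta}. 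Since every $\alpha_i$ falls into exactly one of the two regimes, every guess is handled by exactly one procedure.

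Because the $\tildorder(1/\epsilon)$ guesses are independent, I would dispatch them as parallel MapReduce jobs on a shared cluster. This multiplies the number of machines by $\tildorder(1/\epsilon)$, producing the $\tildorder((1/\epsilon)n^x)$ total stated in the theorem, while leaving the per-machine memory and per-machine running time of each individual subproblem untouched. After all $\delta_i$-bounded calls have finished, I would add a single cleanup round whose sole reducer scans the outputs, discards any purported transformation whose length exceeds $(3+\epsilon)\delta_i n$ or which does not in fact transform $s_1$ into $s_2$, and emits the shortest surviving transformation. By the same multiplicative-guessing argument used for Theorem \ref{main}, the smallest guess $\delta_i$ with $\edit(s_1,s_2) \leq \delta_i n$ satisfies $\delta_i n \leq (1+\epsilon)\edit(s_1,s_2)$, so the returned transformation has length at most $(3+\epsilon)(1+\epsilon)\edit(s_1,s_2)$, and re-parameterising $\epsilon$ by a constant yields the desired $3+\epsilon$ approximation factor.

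It then remains to read off the per-machine memory and time bounds from the two lemmas. Both bounds in each lemma are monotone in $\alpha$ inside their respective $\alpha$-range, so the worst case over a regime is attained at the crossover $\alpha$, and only the worse of the two branches matters at fixed $x$. At the crossover, the large-$\delta$ memory bound $O((1/\epsilon^2)n^{(11-5x)/8+\epsilon'})$ dominates the small-$\delta$ memory bound $O(n^{(11-5x)/8})$ (the exponents agree, but the large-$\delta$ bound additionally carries the $1/\epsilon^2$ and $n^{\epsilon'}$ factors), while the small-$\delta$ time bound $O(n^{(51-29x)/16})$ dominates the large-$\delta$ time $O((1/\epsilon^2)n^{(35-13x)/16})$. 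This yields exactly the per-machine bounds stated in the theorem for $0\leq x \leq 13/20$, and an entirely analogous calculation handles $13/20 \leq x \leq 7/6$. The round complexity is dominated by the small-$\delta$ branch, whose use of Corollary \ref{corollary:mm} costs $O(\log n)$ rounds; the large-$\delta$ branch uses only two rounds and the cleanup round adds one more, so the total is $O(\log n)$.

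The main obstacle I anticipate is the bookkeeping at the boundary $x = 13/20$: I must confirm that the two piecewise expressions for both memory and time agree at that boundary (so both endpoints of the two $x$-ranges describe the same algorithm), and that the monotonicity in $\alpha$ of the per-lemma bounds points in the correct direction so that the crossover value of $\alpha$ really is the worst case. A secondary concern is verifying that the Karloff--Suri--Vassilvitskii model faithfully captures parallel execution of $\tildorder(1/\epsilon)$ independent MapReduce jobs, so that the $\tildorder(1/\epsilon)$ blow-up appears only in the machine count and not in the round count or per-machine memory.
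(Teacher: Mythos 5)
Your proposal is essentially the paper's own proof: both run the $\delta$-bounded subroutine in parallel over a geometric ladder of $O((1/\epsilon)\log n)$ guesses, dispatch each guess to $n^x$ machines and select between the window-based and $(\mn,+)$-multiplication lemmas according to whether $\alpha$ falls below or above the crossover, and then read the per-machine memory from the large-$\delta$ lemma and the per-machine time from the small-$\delta$ lemma. Your added discussion of the cleanup round, of which lemma's bound dominates in each coordinate, and of the $x=13/20$ boundary is a correct and slightly more careful rendering of what the paper states tersely, but it does not constitute a different route.
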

\begin{proof}
We solve the problem for $\delta=0$ and $\delta=(1+\epsilon/3)^k/n$ for $0\leq k\leq O((1/\epsilon)\log n)$ in parallel machines. For each subproblem, we use $\lceil n^x\rceil$ machines.

In the zero case, we only check whether $s_1=s_2$ or not. This can be done with at most $n^{1-x}$ memory and  $O(n^{1-x})$ for each machine. We handle other subproblems by Lemmas \ref{largedeltaslemma} or \ref{smalldeltaslemma}\footnote{in fact, for small $\delta$'s we can run our algorithm just once for the largest $\delta$}. Therefore, the memory of each machine is at most $O((1/\epsilon^2)n^{(11-5x)/8+\epsilon'})$ for $0\leq x\leq 13/20$ and $O((1/\epsilon^2)n^{2(4-x)/7+\epsilon'})$ for $13/20\leq x \leq 7/6$.

The time complexity of each machine is the maximum time of Lemmas \ref{largedeltaslemma} and \ref{smalldeltaslemma} which is at most $O(n^{(51-29x)/16})$ for $0\leq x\leq 13/20$ and $O(n^{(58-25x)/21})$ for $13/20\leq x \leq 7/6$. The number of rounds is also at most $O(\log n)$.
\end{proof}

By setting $x=8/9$, we minimize the maximum of the number of machines and the memory of each machine. This is shown in Figure \ref{mapreducetradeoff}.

\begin{figure*}
\begin{center}
\begin{tikzpicture}[
  declare function={
    func(\x)= (\x<=0.65) * (11-5*\x)/8   +
     (\x>0.65) * 2*(4-\x)/7;
     ftime(\x)= (\x<=0.65) * (51-29*\x)/16   +
     (\x>0.65) * (58-25*\x)/21;    
  }
]

\begin{axis}[domain=0:1.1667, ymin=0.2,ymax=1.4, title={The trade-off between the number of machine and memory of each machine},     xlabel = $x$ (exponent of the number of machines),
ylabel = {exponent of memory},
xtick={0,0.2,0.4,0.6,0.8,1.0,1.2},
ytick={0,0.2,0.4,0.6,0.8,1.0,1.2,1.4,1.6,1.8,2.0},
ymajorgrids=true, xmajorgrids=true,
grid style=dashed]
\addplot[color={rgb:green,1;black,3;white,2}]{func(x)};
\addplot[dashed, domain=0:1.1667]{x};
%\addplot[dashed, domain=0:1.1667]{ftime(x)};

\addplot[
color=blue,
mark=square*,
]
coordinates {
	(0.889,0.889)
};

\end{axis}
\end{tikzpicture} 
\end{center}
\caption{The trade-off between the number of machines and memory of each machine is shown. In $x = 8/9$ the maximum of the number of machines and the memory of each machine is minimized.}
\label{mapreducetradeoff}\end{figure*}
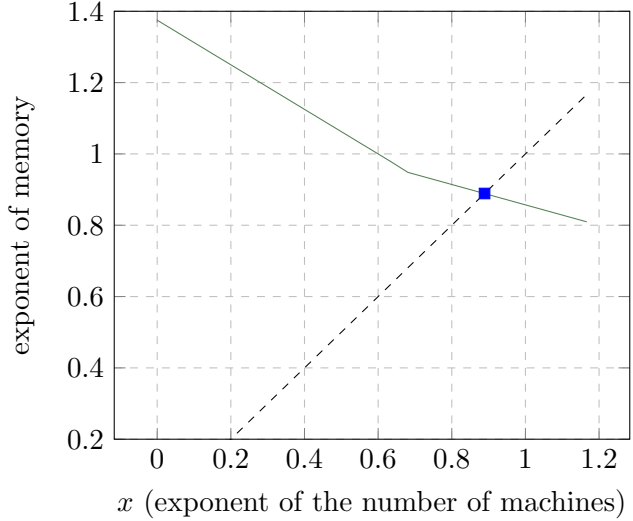

\begin{corollary}
We can solve the edit distance problems in MapReduce model with an approximation factor of $3+\epsilon$ in $O(\log n)$ rounds with $\tildorder((1/\epsilon)n^{8/9})$ machines, a memory of $O((1/\epsilon^2)n^{8/9+\epsilon'})$ for each machine, and in time $O(n^{2-8/27})$ (for one machine), where $\epsilon'>0$ is an arbitrary constant.
\end{corollary}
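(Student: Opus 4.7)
The plan is to invoke the preceding theorem at the specific value $x = 8/9$ and then simplify the exponents. First I would verify that $8/9$ lies inside the range $[13/20,\,7/6]$ of the second regime, since $13/20 = 0.65 < 8/9 \approx 0.889 < 7/6 \approx 1.167$. This confirms that the applicable resource bounds are $\tildorder((1/\epsilon)n^{8/9})$ machines, $O((1/\epsilon^2) n^{2(4-x)/7+\epsilon'})$ memory per machine, and $O(n^{(58-25x)/21})$ time per machine.

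Next I would evaluate the two exponents at $x=8/9$. For the memory, $2(4-8/9)/7 = 2\cdot(28/9)/7 = 56/63 = 8/9$, matching the claimed $O((1/\epsilon^2) n^{8/9+\epsilon'})$ bound. For the runtime, $(58 - 25\cdot 8/9)/21 = (58 - 200/9)/21 = (322/9)/21 = 322/189 = 46/27 = 2 - 8/27$, matching $O(n^{2-8/27})$. The $O(\log n)$ round count and the $3+\epsilon$ approximation factor are inherited directly from the preceding theorem, so nothing further is needed on those fronts.

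To justify why $x = 8/9$ is the natural choice, I would argue that it balances the number of machines against the memory per machine (both grow as powers of $n$ and one wants to minimize the larger of the two). In the first regime one would set $x = (11-5x)/8$, giving $x = 11/13 > 13/20$, so this balance falls outside that regime. In the second regime the balance equation $x = 2(4-x)/7$ yields $9x = 8$, hence $x = 8/9$, which lies in the regime's range and so is the genuine balancing point; at this $x$ the exponents of machines and of memory coincide at $8/9$, as depicted in the referenced tradeoff figure.

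There is no real obstacle here beyond arithmetic: the whole corollary is a one-line substitution into the previous theorem plus a sanity check that $8/9$ lies in the correct regime. The only point requiring any care is ensuring the choice $x = 8/9$ is both admissible (inside $[13/20,7/6]$) and optimal (minimizing the maximum of the machine count exponent and the memory exponent), after which the stated bounds on memory, runtime, rounds, and approximation factor follow immediately.
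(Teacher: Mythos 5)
Your proposal is correct and matches the paper's approach: both simply substitute $x = 8/9$ into the second regime of the preceding theorem and note that this value balances the exponents for the number of machines and the memory per machine (as the paper illustrates in its tradeoff figure). Your arithmetic checks out ($2(4-8/9)/7 = 8/9$ and $(58-200/9)/21 = 46/27 = 2-8/27$), and your verification that $11/13$ falls outside the first regime while $8/9$ falls inside the second is a nice explicit confirmation of why this is the correct regime to use.
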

\section{Other Similarity Measures} \label{section:otherMeasures}
Edit distance is one of many similarity measures for comparing two strings. Furthermore, it is one of many problems with a simple two-dimensional DP solution. Other measures and similar problems include longest common subsequence (\textsf{lcs}), Fr\'{e}chet distance (\textsf{fre}) and dynamic time warping (\textsf{dtw}). While the $O(n^2)$ solution for these problems are very analogous, unfortunately, our approach does not directly apply to them. In the following, we discuss some reasons behind this difficulty. The update rule of these measures are defined as follows:
\begin{align*}
\edit(i,j)=\,&min\big\{\edit(i-1,j)+1,\edit(i,j-1)+1,
 \edit(i-1,j-1)+(s_1[i]\neq s_2[j])\big\}\\
\lcs(i,j)=\,&max\big\{\lcs(i-1,j),\lcs(i,j-1), 
\lcs(i-1,j-1)+(s_1[i]\neq s_2[j])\big\}\\
\dtw(i,j)=\,&min\big\{\dtw(i-1,j), \dtw(i,j-1),
 \dtw(i-1,j-1)\big\}+dis(i,j)\\
\fre(i,j)=\,&max\big\{min\{\fre(i-1,j),\fre(i,j-1),
\fre(i-1,j-1)\}, dis(i,j)\big\}
\end{align*}

Our framework for approximating edit distance is based on two assumptions. First, the usability of Lemma \ref{compatibility}, which states that there is a window-compatible solution which is a good approximation to the optimal solution. Second, to use the metric estimation, the desired measure should be a distance function, namely a metric. 

Two similarity measures \textsf{dtw} and \textsf{lcs} are not metric, moreover they cannot be approximated by any metric. For example, for \textsf{dtw} consider $s_1=a^{2k+1}$, $s_2=a^kba^k$ and $s_3=ab^{2k-1}a$. We have $\dtw(s_1, s_2)=1$ and $\dtw(s_2, s_3)=0$, but $\dtw(s_1, s_3)=2k-1$. Therefore the triangle inequality does not hold here.

The similarity measure \textsf{lcs} is in fact, the opposite of a metric function, i.e., for two similar strings, their \textsf{lcs} is large, and for two different strings, their \textsf{lcs} is small. The first property of a distance function does not hold here, for a non-empty string s, $\lcs(s,s)\neq 0$.
The other part of our approach where \textsf{lcs} has a drawback is the Lemma \ref{compatibility}. For a window size $l$, one can consider $s_1 = (ab^{l-1}a^{l-1})^t$ and $s_2=(a^lc^{l-1})t$. We have $\lcs(s_1, s_2)=lt$, but $\lcs$ of a windows-compatible transformation is at most $t$.

Likewise, approximating $\lcs$ in classic computers is also harder that $\edit$. None of the results for approximating $\edit$ is shown for $\lcs$, unless when $lcs(s_1, s_2)=\Omega(n)$.
Another way around this is to approximate \textsf{co-lcs} instead of $\lcs$, where \textsf{co-lcs}$(s_1, s_2)=|s_1|+|s_2|-\lcs(s_1, s_2)$. This measure is very similar to edit distance but without the substitution operation. Using our framework, we can approximate \textsf{co-lcs} with the same approximation factor of $7+\epsilon$ in quantum computers and an approximation factor of $3+\epsilon$ in MapReduce.

Fr\'{e}chet distance is rather a similarity measure for curves instead of strings. For strings, the problem becomes trivial, i.e., zero for same strings and one for different strings. However, \textsf{fre} on curves has a similar dynamic programming solution to \textsf{edit}. This similarity in solution leads us to consider this problem, too.
If we study the problem regardless of its geometric properties, i.e. all distances are given as a matrix, we can prove that approximating \textsf{fre} is as hard as computing its exact value.

\begin{theorem}
\label{thm:fre}
If there exists a quantum (or MapReduce) approximation algorithm for Fr\'{e}chet distance with a constant approximation factor in time $O(n^{2-\epsilon})$, which takes distances as a matrix in the input, there also exists a quantum or MapReduce algorithm which computes the exact Fr\'{e}chet distance in time $O(n^{2-\epsilon})$.
\end{theorem}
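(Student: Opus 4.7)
The plan is to show that exact Fr\'{e}chet distance reduces to approximate Fr\'{e}chet distance via a thresholding trick plus binary search, exploiting the key structural fact that the value of the Fr\'{e}chet distance on an arbitrary distance matrix is always one of the matrix entries.

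First I would fix an instance of exact Fr\'{e}chet given as an $n \times n$ distance matrix $D$ and let $v_1 < v_2 < \cdots < v_N$ be the sorted distinct entries of $D$, where $N \leq n^2$. By the Fr\'{e}chet recursion in the excerpt, any monotone traversal's cost equals $\max dis(i,j)$ over visited cells, so $\fre(D) \in \{v_1,\dots,v_N\}$. Hence determining $\fre(D)$ reduces to answering $O(\log N) = O(\log n)$ threshold queries of the form: ``Is $\fre(D) < v_k$?''.

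Next I would implement each threshold query using the hypothesized $c$-approximation algorithm $\alg$ running in time $O(n^{2-\epsilon})$. Given threshold $v = v_k$, construct a modified matrix $D^{(v)}$ by
\[
D^{(v)}[i][j] = \begin{cases} \tfrac{1}{3c}, & \text{if } D[i][j] < v,\\ 1, & \text{if } D[i][j] \geq v.\end{cases}
\]
This takes $O(n^2)$ time, but can be folded into the oracle presented to $\alg$ so that no explicit quadratic preprocessing is needed (each queried entry is computed on the fly from $D$ in $O(1)$ time). Because the Fr\'{e}chet cost is the \textbf{max} over a chosen traversal, $\fre(D^{(v)})=\tfrac{1}{3c}$ precisely when every optimal (and in fact some) traversal of $D$ stays strictly below $v$, i.e.\ $\fre(D) < v$; otherwise every traversal hits at least one cell with $D[i][j]\geq v$ and so $\fre(D^{(v)})=1$. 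Running $\alg$ on $D^{(v)}$ returns a value in $[\tfrac{1}{3c}, \tfrac{1}{3}]$ in the first case and in $[1,c]$ in the second, and these intervals are disjoint, so a single call of $\alg$ resolves the threshold query.

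Finally I would binary search over $v_1,\dots,v_N$ with the above test, using $O(\log n)$ calls to $\alg$. The total running time is $O(n^{2-\epsilon} \log n) = O(n^{2-\epsilon'})$ for any $\epsilon' < \epsilon$, and the same construction works in both the quantum query model (each oracle query to $D^{(v)}$ is simulated by one query to $D$) and the MapReduce model (simulate one round of $\alg$ on $D^{(v)}$ by preprocessing each block of $D$ into its thresholded version during the map phase). The main obstacle is really just the bookkeeping: one has to verify that $\alg$ is allowed to receive an \emph{arbitrary} distance matrix (not one satisfying any geometric constraint), which is explicitly granted by the hypothesis ``takes distances as a matrix in the input,'' and that the $\log n$ overhead can be absorbed into the exponent, which is immediate since $\log n = n^{o(1)}$.
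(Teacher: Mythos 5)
Your proposal is correct and uses essentially the same approach as the paper: a gap-producing reduction in which each threshold query replaces entries below/above $t$ with two well-separated constants, followed by $O(\log n)$ rounds of binary search over the candidate threshold values, with a single call to the constant-factor approximation algorithm per query. The only cosmetic differences are the specific gap values chosen ($1/(3c)$ and $1$ versus the paper's $1$ and $n^2$) and your additional observations that $\fre(D)$ must be one of the matrix entries and that the thresholding can be folded into the oracle.
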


\begin{proof}
The idea is a gap producing reduction from the problem to itself. We can do a binary search on the actual value of Fr\'{e}chet distance, thus in each step for a threshold $t$, we want to know whether $\fre(a,b)\leq t$ or not. We define a new distance function as:
\begin{equation*}
dis'(a,b) =
\begin{cases}
1 & dis(a,b)\leq t, \\
n^2 & \text{otherwise.}
\end{cases}
\end{equation*}
If $\fre(a,b)\leq t$ then we have $\fre'(a,b)=1$ and $\fre'(a,b)=n^2$ otherwise. Therefore, if we solve the new instance with an approximation algorithm with a constant factor, we can decide whether $\fre'(a,b)=1$ or not. Thus we can decide whether $\fre(a,b)\leq t$ or not. Hence, we can find the exact value of Fr\'{e}chet distance by executing the approximation algorithm in $O(\log n)$ round. This argument works for both quantum algorithms and MapReduce algorithms.
\end{proof}

Theorem \ref{thm:fre} does not rule out the possibility of a subquadratic quantum algorithm or MapReduce algorithm for Fr\'{e}chet distance, but it states that relaxing the problem in this way does not make the problem easier.

\section{Conclusion and Open Problems}
In the quantum algorithm of Section \ref{editdistance}, we have a fixed length for windows, $l=n^{1/7}$. By allowing windows of different sizes, we can improve the approximation factor to $3+\epsilon$.
Rubinstein, Schramm, and Song independently improved this factor to $3+\epsilon$~\cite{lcdquantum}.
Moreover, by redefining the metric estimation problem in a way that we only output a diagonal band of the distance matrix, meaning the main diagonal and zero or more diagonals on either side, we can improve the running time of our main algorithm to $\tildorder(n^{38/21}\poly(1/\epsilon)) = \tildorder(n^{1.810})$.
The same technique improves the running time of our bootstrapping algorithm of Section \ref{bootstraping} to $O(n^{1.708})$.

Indeed the most important open problem concerning edit distance is whether a subquadratic time algorithm can approximate the edit distance of two strings within a constant factor?
In this regard, our paper proposes the following approach. Suppose we want to approximate the pairwise edit distance between $m$ given strings, each of size $n$, within a constant factor. We call this problem \textit{pairwise edit distance}. A na\"{i}ve solution for pairwise edit distance has running time $O(m^2n^2)$. Obviously, any subquadratic time algorithm for approximating edit distance within a constant factor improves upon this running time. 
In this paper, we show that an improvement in the running time of pairwise edit distance also leads to a subquadratic time algorithm for approximating edit distance within a constant factor.
We believe this actually opens a new direction for approximating edit distance for classic computers.

In addition to this, our work gives rise to a number of questions that we believe are important to study in future work.
\begin{itemize}
	\item \textit{How efficiently can we approximate metric estimation in classic computers with a subquadratic number of queries, when the distance function is edit distance?}
	\item \textit{Is there subquadratic quantum algorithms that approximate other similarity measures, \textsf{LCS} in particular?}	
	\item \textit{Can a quantum algorithm approximate the edit distance of two strings within a constant factor in near-linear time?}
	\item \textit{Is it possible to show a non-trivial lower bound on the quantum computational complexity of computing edit distance?}
\end{itemize}
\section{Acknowledgment}
We would like to thank Andrew Childs, Omid Etesami, Salman Beigi, and Mohammad Ali Abam for their comments on an earlier version of the paper.

\bibliographystyle{abbrv}
	
\bibliography{edit}

\begin{thebibliography}{10}

\bibitem{Ambainis:2000:QLB:335305.335394}
A.~Ambainis.
\newblock Quantum lower bounds by quantum arguments.
\newblock {\em Journal of Computer and System Sciences}, 64(4):750 -- 767,
  2002.

\bibitem{andoni2008smoothed}
A.~Andoni and R.~Krauthgamer.
\newblock The smoothed complexity of edit distance.
\newblock {\em ACM Transactions on Algorithms (TALG)}, 8(4):44:1--44:25, Oct.
  2012.

\bibitem{andoni2010polylogarithmic}
A.~Andoni, R.~Krauthgamer, and K.~Onak.
\newblock Polylogarithmic approximation for edit distance and the asymmetric
  query complexity.
\newblock In {\em FOCS}, pages 377--386. IEEE, 2010.

\bibitem{Andonimapreducestoc}
A.~Andoni, A.~Nikolov, K.~Onak, and G.~Yaroslavtsev.
\newblock Parallel algorithms for geometric graph problems.
\newblock In {\em STOC}, pages 574--583. ACM, 2014.

\bibitem{andoni2012approximating}
A.~Andoni and K.~Onak.
\newblock Approximating edit distance in near-linear time.
\newblock {\em SIAM Journal on Computing (SICOMP)}, 41(6):1635--1648, 2012.

\bibitem{apostolico1990efficient}
A.~Apostolico, M.~J. Atallah, L.~L. Larmore, and S.~McFaddin.
\newblock Efficient parallel algorithms for string editing and related
  problems.
\newblock {\em SIAM Journal on Computing (SICOMP)}, 19(5):968--988, 1990.

\bibitem{backurs2015edit}
A.~Backurs and P.~Indyk.
\newblock Edit distance cannot be computed in strongly subquadratic time
  (unless {SETH} is false).
\newblock In {\em STOC}, pages 51--58. ACM, 2015.

\bibitem{bar2004approximating}
Z.~Bar-Yossef, T.~Jayram, R.~Krauthgamer, and R.~Kumar.
\newblock Approximating edit distance efficiently.
\newblock In {\em FOCS}, pages 550--559. IEEE, 2004.

\bibitem{batu2006oblivious}
T.~Batu, F.~Ergun, and C.~Sahinalp.
\newblock Oblivious string embeddings and edit distance approximations.
\newblock In {\em SODA}, pages 792--801. SIAM, 2006.

\bibitem{beals1997quantum}
R.~Beals.
\newblock Quantum computation of fourier transforms over symmetric groups.
\newblock In {\em STOC}, pages 48--53. ACM, 1997.

\bibitem{belovs2012learning}
A.~Belovs.
\newblock Learning-graph-based quantum algorithm for k-distinctness.
\newblock In {\em FOCS}, pages 207--216. IEEE, 2012.

\bibitem{boyer1996tight}
M.~Boyer, G.~Brassard, P.~H\o{}yer, and A.~Tapp.
\newblock Tight bounds on quantum searching.
\newblock {\em Fortschritte der Physik}, 46(4-5):493--505, 1998.

\bibitem{brassard2002quantum}
G.~Brassard, P.~H\o{}yer, M.~Mosca, and A.~Tapp.
\newblock Quantum amplitude amplification and estimation.
\newblock {\em Contemporary Mathematics}, 305:53--74, 2002.

\bibitem{doi:10.1137/050644719}
C.~D\"{u}rr, M.~Heiligman, P.~H\o{}yer, and M.~Mhalla.
\newblock Quantum query complexity of some graph problems.
\newblock {\em SIAM Journal on Computing (SICOMP)}, 35(6):1310--1328, 2006.

\bibitem{Farhi:1998bz}
E.~Farhi, J.~Goldstone, S.~Gutmann, and M.~Sipser.
\newblock {A Limit on the speed of quantum computation in determining parity}.
\newblock {\em Physical Review Letters}, 81:5442--5444, 1998.

\bibitem{farhi1999invariant}
E.~Farhi, J.~Goldstone, S.~Gutmann, and M.~Sipser.
\newblock Invariant quantum algorithms for insertion into an ordered list.
\newblock {\em arXiv preprint quant-ph/9901059}, 1999.

\bibitem{grover1996fast}
L.~K. Grover.
\newblock A fast quantum mechanical algorithm for database search.
\newblock In {\em STOC}, pages 212--219. ACM, 1996.

\bibitem{saeedmapreduce}
M.~HajiAghayi, S.~Lattanzi, S.~Seddighin, C.~Stein, and S.~Vassilvitskii.
\newblock {MapReduce} meets fine-grained complexity: {MapReduce} algorithms for
  {APSP}, matrix multiplication, {3-SUM}, and beyond.
\newblock Manuscript submitted for publication.

\bibitem{mapreducedyn2017}
S.~Im, B.~Moseley, and X.~Sun.
\newblock Efficient massively parallel methods for dynamic programming.
\newblock In {\em STOC}, pages 798--811. ACM, 2017.

\bibitem{indyk2001algorithmic}
P.~Indyk.
\newblock Algorithmic applications of low-distortion geometric embeddings.
\newblock In {\em FOCS}, pages 10--33. IEEE, 2001.

\bibitem{jeffery2013nested}
S.~Jeffery, R.~Kothari, and F.~Magniez.
\newblock Nested quantum walks with quantum data structures.
\newblock In {\em SODA}, pages 1474--1485. SIAM, 2013.

\bibitem{editmapreduce2}
S.~Jhaver, L.~Khan, and B.~Thuraisingham.
\newblock Calculating edit distance for large sets of string pairs using
  {MapReduce}.
\newblock Paper presented at ASE International Conference on Big Data, Beijing,
  China, August 2014.

\bibitem{karloff2010model}
H.~Karloff, S.~Suri, and S.~Vassilvitskii.
\newblock A model of computation for {MapReduce}.
\newblock In {\em SODA}, pages 938--948. SIAM, 2010.

\bibitem{krovi2015quantum}
H.~Krovi and A.~Russell.
\newblock Quantum fourier transforms and the complexity of link invariants for
  quantum doubles of finite groups.
\newblock {\em Communications in Mathematical Physics}, 334(2):743--777, 2015.

\bibitem{landau1998incremental}
G.~M. Landau, E.~W. Myers, and J.~P. Schmidt.
\newblock Incremental string comparison.
\newblock {\em SIAM Journal on Computing (SICOMP)}, 27(2):557--582, 1998.

\bibitem{lattanzi2011filtering}
S.~Lattanzi, B.~Moseley, S.~Suri, and S.~Vassilvitskii.
\newblock Filtering: a method for solving graph problems in {MapReduce}.
\newblock In {\em SPAA}, pages 85--94. ACM, 2011.

\bibitem{le2014improved}
F.~Le~Gall.
\newblock Improved quantum algorithm for triangle finding via combinatorial
  arguments.
\newblock In {\em FOCS}, pages 216--225. IEEE, 2014.

\bibitem{masek1980faster}
W.~J. Masek and M.~S. Paterson.
\newblock A faster algorithm computing string edit distances.
\newblock {\em Journal of Computer and System Sciences}, 20(1):18--31, 1980.

\bibitem{Montanaro2015}
A.~Montanaro, R.~Jozsa, and G.~Mitchison.
\newblock On exact quantum query complexity.
\newblock {\em Algorithmica}, 71(4):775--796, 2015.

\bibitem{motwani1995randomized}
R.~Motwani and P.~Raghavan.
\newblock {\em Randomized algorithms}.
\newblock Cambridge University Press, 1995.

\bibitem{nayebi2014quantum}
A.~Nayebi and V.~V. Williams.
\newblock Quantum algorithms for shortest paths problems in structured
  instances.
\newblock {\em arXiv preprint arXiv:1410.6220}, 2014.

\bibitem{Ostrovsky:2005:LDE:1060590.1060623}
R.~Ostrovsky and Y.~Rabani.
\newblock Low distortion embeddings for edit distance.
\newblock {\em Journal of the ACM (JACM)}, 54(5), Oct. 2007.

\bibitem{ramesh2003string}
H.~Ramesh and V.~Vinay.
\newblock String matching in $\widetilde {O}(\sqrt {n}+\sqrt {m})$ quantum
  time.
\newblock {\em Journal of Discrete Algorithms}, 1(1):103--110, 2003.

\bibitem{lcdquantum}
A.~Rubinstein, T.~Schramm, and Z.~Song.
\newblock Beyond triangle-inequality: quantum subquadratic approximation
  algorithms for longest common subsequence.
\newblock Manuscript in preparation.

\bibitem{shor1994algorithms}
P.~W. Shor.
\newblock Algorithms for quantum computation: Discrete logarithms and
  factoring.
\newblock In {\em FOCS}, pages 124--134. IEEE, 1994.

\bibitem{ukkonen1985algorithms}
E.~Ukkonen.
\newblock Algorithms for approximate string matching.
\newblock {\em Information and Control}, 64(1-3):100--118, 1985.

\end{thebibliography}

\end{document}